\newtheorem{theorem}{Theorem}[section]
\newtheorem{lemma}[theorem]{Lemma}
\newtheorem{corollary}[theorem]{Corollary}
\newtheorem{claim}[theorem]{Claim}
\newtheorem{observation}[theorem]{Observation}
\theoremstyle{definition}
\newtheorem{definition}[theorem]{Definition}
\newenvironment{claimproof}[1]{\par\noindent\emph{Proof:}\space#1}{{\leavevmode\unskip\penalty9999 \hbox{}\nobreak\hfill\quad\hbox{$\diamondsuit$}}}
\newcolumntype{M}[1]{>{\centering\arraybackslash}m{#1}}
\newcommand{\ignore}[1]{}
\newcommand{\pbDef}[3]{
	\noindent
	\vspace*{0.01cm}
	\begin{center}
		\begin{boxedminipage}{0.98 \columnwidth}
			\textsc{#1}\\[1pt]
			\textbf{Input:}  #2\\[0pt]
			\textbf{Task:}  #3
		\end{boxedminipage}
	\end{center}
	\vspace*{0.15cm}	}
\newcommand{\HCD}{\textsc{Hedge Cluster Deletion}\xspace}
\newcommand{\VC}{\textsc{Vertex Cover}\xspace}
\newcommand{\CD}{\textsc{Cluster Deletion}\xspace}
\newcommand{\MVC}{\textsc{Multi-Vertex Cover}\xspace}
\newcommand{\MHD}{\textsc{Min Horn Deletion}\xspace}
\newcommand{\MinO}{\textsc{MinOnes}}
\newcommand{\PropSat}[1]{\textsc{Propagational}-#1 \textsc{Satisfiability}\xspace}
\title{Algorithms and Complexity of Hedge Cluster Deletion Problems}
\author{
Athanasios L. Konstantinidis\thanks{Department of Mathematics, University of Ioannina, Greece. \texttt{a.konstantinidis@uoi.gr}.
}
\and
Charis Papadopoulos\thanks{Department of Mathematics, University of Ioannina, Greece. \texttt{charis@uoi.gr}}
\and
Georgios Velissaris\thanks{Department of Mathematics, University of Ioannina, Greece. \texttt{g.velissaris@uoi.gr}}
}
\date{}
\begin{document}

\maketitle

\begin{abstract}
A hedge graph is a graph whose edge set has been partitioned into groups called hedges. Here we consider a generalization of the well-known \textsc{Cluster Deletion} problem, named \textsc{Hedge Cluster Deletion}. The task is to compute the minimum number of hedges of a hedge graph so that their removal results in a graph that is isomorphic to a disjoint union of cliques. 
From the complexity point of view, unlike the classical \textsc{Cluster Deletion} problem, we prove that \HCD is NP-complete on non-trivial graphs which contain all graphs having a large size of vertex-disjoint 3-vertex-paths as subgraphs. To complement this, we show that for graphs that contain bounded size of vertex-disjoint 3-vertex-paths as subgraphs, \HCD can be solved in polynomial time. 

Regarding its approximability, we prove that the problem is tightly connected to the related complexity of the \textsc{Min Horn Deletion} problem, a well-known boolean CSP problem. Our connection shows that it is NP-hard to approximate \HCD within factor $2^{O(\log^{1-\epsilon} r)}$ for any $\epsilon >0$, where $r$ is the number of hedges in a given hedge graph. 
We also address the parameterized complexity of the problem. 
While \HCD is fixed-parameter tractable with respect to the solution size (i.e., the number of removal hedges), 
we prove that it does not admit a polynomial kernel, unless 
NP $\subseteq$ coNP/poly. 

Based on its classified (in)approximability and the difficulty imposed by the structure of almost all non-trivial graphs, we consider the hedge underlying structure. We give a polynomial-time algorithm with constant approximation ratio for \HCD whenever each triangle of the input graph is covered by at most two hedges. On the way to this result, an interesting ingredient that we solved efficiently is a variant of the \textsc{Vertex Cover} problem in which apart from the desired vertex set that covers the edge set, a given set of vertex-constraints should also be included in the solution. Moreover, as a possible workaround for the existence of efficient exact algorithms, we propose the hedge intersection graph which is the intersection graph spanned by the hedges. Towards this direction, we give a polynomial-time algorithm for \HCD whenever the hedge intersection graph is acyclic. 
\end{abstract}

\clearpage

\section{Introduction}
Edge deletion problems are motivated by applications in denoising data derived from imprecise experimental measurements and, therefore, have been extensively studied from classical, approximation, and parameterized complexity. 
The objective is usually concerned with the property of being $H$-free graph for some fixed graph $H$ and almost a complete classification with respect to $H$ is known \cite{KratschW13,CaiC15,BelovaB22,BliznetsCKP18,AravindSS17}. 
However, little is known when generalizing the concept from edge deletion problems to set deletion problems. 
Considering groups of edges as potential sets that can be removed is captured by the notion of hedge graphs.   
%

Hedge graphs represent situations where a set of edges might fail together due to some underlying dependency. In particular, a hedge graph is a graph whose edges are partitioned into groups called \emph{hedges}. 
Motivated by the fact that edge failures might appear simultaneously, hedge graphs appear naturally in a network design problem, known as \textsc{Hedge Cut}. In this problem, the task is to remove a minimum number of hedges that disconnects a given hedge graph. 
Indeed, in a series of recent papers there has been an increasing interest in identifying whether \textsc{Hedge Cut} is polynomial-time solvable or quasipolynomial-time solvable or NP-complete \cite{ChandrasekaranX18,sodaFominGKL025,sodaFominGKLS23,sodaGhaffariKP17,JaffkeLMPS23}.   

Edge failures that occur simultaneously do not affect only the graph connectivity. It appears to be of interest even when similarities between objects need to be identified \cite{HJ97,Hartigan}. 
Here we combine the effect of grouping edges with the task of partitioning the vertices into clusters in such a way that there should be many grouped edges within each cluster and relatively few grouped edges between the clusters. 
In particular, we consider the \HCD problem: given a hedge graph, the goal is to remove a minimum number of hedges that result in a disjoint union of cliques. 

We mention few representative domains in which the problem is well-suited. Under the term cluster graph, which refers to a disjoint union of cliques, one may find a variety of applications that have been extensively studied \cite{BBC04,CGW03,Schaeffer07}. 
In social networks it is natural to detect communities with strong relationships \cite{FORTUNATO201075}. 
These relationships can be of various types, for example, colleagues, neighbors, schoolmates, football-mates, etc., that may be represented by a different hedge. 
Another area of interest lies within protein interaction networks, in which the edges represent interactions occurring when two or more proteins bind together to carry out their biological function \cite{Ben-DorSY99}. Such interactions can be of different types, for example, physical association or direct interaction, and clusters should mainly contain edges with the same label as colocalization, that have the ability to represent a group of proteins that interact with each other at the same time and place \cite{ch16}. Similar concepts that capture the idea that edge failures occur in cascades can be applied whenever the characterization of each edge corresponds to network frequencies.

Here we concentrate on the complexity of the \HCD problem, in terms of polynomial-time algorithms, NP-completeness or approximation guarantees. Let us first discuss some observations. We denote by $\ell$ the number of hedges. It is clear that $\ell$ is at most the number of edges $m$ of the considered graph. In fact, if $\ell = m$, that is, all hedges contain a single edge, then \HCD is equivalent to the classical formulation known as \CD. 
On the other hand, if $\ell$ is bounded then the problem can be solved in polynomial time by enumerating all possible subsets of the hedges. Even further, it is not difficult to see that the problem is fixed
parameter tractable (FPT) parameterized by the number of removal hedges, i.e., by the solution size. 
This follows by using the main idea given in \cite{Cai96}: whenever we find a $P_3$, we branch at most on two hedges involved within the $P_3$. The branching will generate at most two such instances and the depth of the search tree is bounded by the solution size. The main difference with the algorithm given in \cite{Cai96} is that a single hedge removal is performed by multiple edge removals.

\subsection{Related work}
As already mentioned, hedge graphs were inspired by the \textsc{Hedge Cut} and closely related problems \cite{ChandrasekaranX18,sodaFominGKL025,sodaFominGKLS23,sodaGhaffariKP17,JaffkeLMPS23}. 
Interestingly, several hedge variants of the problem were discussed. Their main focus is on connectivity problems in order to understand the classical complexity of \textsc{Hedge Cut}. 
A quasipolynomial-time algorithm was proposed for detecting a cycle of minimum submodular cost on multigraphs \cite{sodaGhaffariKP17,FominGKLS24}. As the partitioning function into hedges is submodular (i.e., the number of hedges covering an edge subset), the same algorithm can be applied for the \textsc{Hedge Minimum Cycle} problem in which the task is to detect a cycle that contains a minimum number of hedges \cite{sodaGhaffariKP17,FominGKLS24}. However, the hedge cut function is not necessarily submodular \cite{sodaGhaffariKP17}. A recent approach that overcomes such difficulty related to hedge cut problems uses alternative measures of connectivity and proposes a polymatroid perspective of hedge graphs \cite{abs-2510-25043}.

In the \textsc{Cluster Deletion} problem we seek to delete a minimum number of edges of a given graph such that the resulting graph is a vertex-disjoint union of cliques (cluster graph).
It is known that the problem is NP-hard on general graphs \cite{CluNP04}. Settling its complexity status has attracted several researchers. If the input graph is restricted on well-known graph classes the problem remains NP-hard \cite{BDM15,CD-cographs,GolovachHKLP20, CD-lowdegree, KonstantinidisP21}, 
although there are graph classes where the problem is polynomial solvable \cite{BonomoDNV15,BDM15, CD-cographs,GruttemeierK20, CD-lowdegree, KonstantinidisP21}. In \Cref{tab:SumComplexity}, 
we summarize the complexity of the problem on the corresponding graph classes. 
\CD has been studied extensively from the perspective of approximation algorithms. A greedy approach of selecting a maximum clique provides a $2$-approximation algorithm, though not necessarily in polynomial time \cite{DessmarkJLLP07}.
Other constant factor approximation algorithms for \textsc{Cluster Deletion} were based on rounding a linear programming (LP) relaxation \cite{CGW03,PuleoM15, Veldt22a, VeldtGW18}, with the current-best approximation factor of 2 achieved in \cite{VeldtGW18}. 

As \CD is an \textsc{$H$-free Edge Deletion} problem with $H=P_3$, we also mention complexity results regarding hardness of approximation for almost every fixed graph $H$ \cite{BliznetsCKP18}. In particular, if $H$ is a 3-connected graph with at least two non-edges, or  a cycle on
at least four vertices or a path on at least five vertices, then 
\textsc{$H$-free Edge Deletion} does not admit poly(OPT)-approximation unless P = NP \cite{BliznetsCKP18}. 
Regarding its parameterized variation, several results are known when the parameter is mainly the solution size $k$. 
The classical result of Cai \cite{Cai96} implies that \textsc{$H$-free Edge Deletion} can be solved in time $c^k \cdot n^{O(1)}$, where $c$ is a constant that depends only on the size of $H$. 
Moreover, the running time is essential optimal: 
Aravind, Sandeep, and Sivadasan \cite{AravindSS17} showed that whenever $H$ contains at least two edges, there is  no algorithm with running time $2^{o(k)} \cdot n^{O(1)}$ unless the exponential time hypothesis (ETH) fails. 
For the kernelization complexity of \textsc{$H$-free Edge Deletion}, Cai and Cai \cite{CaiC15} showed that polynomial kernels do not exist (under NP $\nsubseteq$ coNP/poly) whenever $H$ is 3-edge-connected and has at least two non-edges. 
Still, there are polynomial kernels whenever $H$ is a small path \cite{CaoC12,GuillemotHPP13}. 


\begin{table}[t!]
\begin{center}
\begin{tabular}{M{0.12\textwidth}m{0.19\textwidth}m{0.22\textwidth}M{0.36\textwidth}}
\toprule
 & Restriction (graph classes) & {\raggedright {\sc Cluster} \\ {\sc Deletion}} & {\HCD} \\\midrule
\multirow{11}{\hsize}{\centering NP-hard \\ or \\ poly-time} & Bipartite & Poly-time (trivial) & \\
& Interval & Poly-time  \cite{KonstantinidisP21} &  {\centering {NP-hard}} \\
& Cograph & Poly-time~\cite{CD-cographs} & {(\Cref{theo:largesubgraphs})}\\
& Split & Poly-time~\cite{BDM15} &  \\
\cmidrule{2-4}
& $P_5$-free & NP-hard \cite{BDM15,KonstantinidisP21} & \\
& Planar & NP-hard \cite{GolovachHKLP20}  & NP-hard \\
& $\Delta \geq 4$  & NP-hard~\cite{CD-lowdegree} &  \\
\cmidrule{2-4}
& Special bipartite$^{1}$& Poly-time (trivial) & {\centering Poly-time}\\
& Special split$^{2}$& Poly-time~\cite{BDM15} & {\centering (\Cref{theo:dichotomy})} \\\cmidrule{2-4}
& {\raggedright  Acyclic hedge \\  intersection} & Poly-time (trivial) & {\centering Poly-time  \\(\Cref{theo:acyclic})}\\
\cmidrule{1-4}
\multirow{3}{0.14\textwidth}{Approxim. guarantees} &
No restriction
    & {\raggedright Constant \\ factor \cite{VeldtGW18}} & {{\MHD-complete} (\Cref{theo:mddcompleteness})} \\\cmidrule{2-4}
 &
Bi-hedge graphs
    & \multicolumn{1}{l}{\quad \quad \, \,--} &
    {\centering Constant factor \\ (\Cref{theo:approxHCD})}\\
\cmidrule{1-4}
{Parameter. complexity}
&
Parameterized by the solution size & {\raggedright Polynomial \\ kernel \cite{CaoC12,GuillemotHPP13}} & {{No polynomial kernel} (\Cref{theo:nokernel})}
\\ 
\bottomrule
\end{tabular}
%
%
\end{center}
\caption{Overview of our results on \HCD related to the complexity of {\sc Cluster Deletion} and restricted on particular graph classes. \Cref{lem:paths} (which is also used within \Cref{theo:largesubgraphs}) shows that \HCD is NP-complete on bipartite graphs, interval graphs and cographs. 
For the class of split graphs, \Cref{theo:largesubgraphs} shows the NP-completeness on $K_n - e$ graphs. 
$^{1}$Bipartite graphs with bounded size in one partition: such graphs contain a bounded number of vertex-disjoint $P_3$'s. 
$^{2}$Split graphs with bounded clique number: every $P_3$ or $K_3$ contains a vertex from the clique, implying that any collection of vertex-disjoint three-vertex connected subgraphs has bounded size. 
Moreover, if \HCD coincides with \CD then every edge belongs to a distinct hedge and the hedge intersection graph is exactly the line graph of the original graph. 
Regarding the parameterized complexity, it is known that no subexponential-time algorithm exists for \CD unless ETH fails \cite{CD-lowdegree} and such a refutation transfers for \HCD, as well. The non-existence of a polynomial kernel is claimed under the assumption NP $\nsubseteq$ coNP/poly. 
}
\label{tab:SumComplexity}
\end{table}


Moreover, clustering problems have been studied on edge-colored graphs.
In \cite{BonchiGGTU2015} the \textsc{Chromatic Correlation Clustering} problem was introduced. The input of the problem is an edge-labeled graph and the objective is to derive a partition of the vertices such that the relations among vertices in the same cluster are as much homogeneous as possible in terms of the labels \cite{Anava2015, BonchiGGTU2015, Klodt21}.  
Another related problem is the \textsc{Max $k$-Colored Clustering} problem, where given an edge-colored graph, the aim is to find a clustering of the vertices maximizing the number of matched edges, i.e., edges having the same color \cite{AlhamdanK19, ANGEL2016, KellerhalsKKN23}. In our setting we do not require to have small number of hedges that appear in the same cluster.

\subsection{Overview of our results and techniques}
Here we provide a high-level description of our results. \Cref{tab:SumComplexity} summarizes our main results. 
For necessary terminology and formal definitions concerning specially notions and terms used on hedge graphs, we refer to \Cref{sec:prelim}. However, in the following overview it is enough to recall that the edges of a hedge graph are partitioned into hedges and the underlying graph corresponds to the simple graph obtained from a hedge graph by ignoring the hedges. 

\paragraph{Hardness results} As \HCD is a generalization of \CD, it is natural to identify classes of graphs with the same or different complexity behavior. We begin by considering the underlying graph structure of a hedge graph. Clearly, if \CD is NP-complete on a certain graph class, so is \HCD on the same class. 
Our results imply that in a wide range of non-trivial graph classes \HCD is NP-complete, whereas \CD admits polynomial-time solution, highlighting a complexity different behavior (\Cref{theo:largesubgraphs}). 
More precisely, we show that if there is a subgraph that contains a large enough number of vertex-disjoint $P_3$'s then \HCD is NP-complete. Such graphs contain long paths or trees of constant height or almost cliques (graphs obtained from a clique by removing a single edge). 
Thus, in almost every interesting graph class, \HCD remains NP-complete. 

\paragraph{Tractable results} Moreover, we reveal a connection between the subgraph of disjoint $P_3$ characterization and the complexity of the problem. 
In particular, we show that if the largest subgraph of vertex-disjoint $P_3$'s is bounded, then \HCD is polynomial-time solvable (\Cref{theo:dichotomy}). For doing so, we prove that the graph obtained after removing the particular subgraph contains only an induced matching and isolated vertices. Then, by enumerating all subsolutions on the bounded-size subgraph, we are able to bound all cluster subgraphs. 
As a representative family of graphs that are applicable for the polynomial-time algorithm, we only mention complete bipartite graphs with bounded one of the two sizes, which admit trivial solution for \CD. 

\paragraph{(In)approximability} We next consider the approximation complexity of the problem. Our initial goal is to classify the problem as APX-complete or poly-APX-complete. The existence of a poly(OPT)-approximation algorithm means that there is a polynomial-time algorithm that returns a solution of cost bounded by some polynomial function of the optimum. Even though we are not able to fully classify the approximation complexity, we show that \HCD is tightly connected to a class of equivalent poly-APX problems that are still unclassified with respect to the approximation complexity. From the given class of problems there is a representative problem related to CSP, known as \MHD: given is a boolean formula $\phi$ in CNF that contains only unary clauses and clauses with three literals out of which exactly one is negative, the task is to compute the minimum number of \emph{ones} (i.e., variables assigned to \texttt{true}) such that $\phi$ is satisfied. In their pioneered work of approximability of CSPs, Khanna, Sudan, Trevisan, and Williamson proved that 
a \MHD-complete problem is poly-APX-hard
if and only if 
each \MHD-complete problem is poly-APX-hard \cite{KhannaSTW01}. 
However, they also showed that it is NP-hard to approximate a \MHD-complete problem within factor $2^{O(\log^{1-\epsilon} n)}$ for any $\epsilon >0$, where $n$ is the number of variables in a given CNF formula. 

Here we prove that \HCD is \MHD-complete under $A$-reductions, i.e., corresponding reductions required for the \MHD-completeness. Intuitively, $A$-reductions preserve approximability problems up to a constant factor (or higher). 
For the exact meaning of related notions, we refer to \Cref{sec:inapprox}. 
We point out that showing \MHD-completeness, neither rules out the possibility of the existence of poly(OPT)-approximation nor shows that such an approximation exists. 
Despite this fact, few problems are classified as \MHD-complete. 
Interestingly, the edge-deletion problem \textsc{$(K_d-e)$-Edge Deletion} was shown to be \MHD-complete in which the goal is to delete as few edges as possible from a given graph so that the resulting graph does not contain an induced subgraph isomorphic to $K_d$ without an edge, for any $d \geq 5$ \cite{BliznetsCKP18}. Subsequently, Belova and Bliznets \cite{BelovaB22} showed that the editing variation of the latter problem still remains \MHD-complete.  

To show \MHD-completeness for the \HCD problem, 
our initial \MHD-complete problem is the \MinO$(\mathcal{F})$: given a set of boolean variables and a family of characteristic constraints $\mathcal{F}$ over the variables, find a variable assignment that satisfy all of the constraints while minimizing the number of variables set to one (i.e., variables assigned to \texttt{true}). We refer to \emph{characteristic constraints}, to mean that every constraint $f$ is constructed from $\mathcal{F}$ by applying certain rules on some tuple of the variables. 
For a suitable choice of $\mathcal{F}$ so that certain properties are satisfied, it is known that \MinO$(\mathcal{F})$ is \MHD-complete \cite{KhannaSTW01}. 

We show that there is an $A$-reduction from the \MinO$(\mathcal{F'})$ to the \HCD (\Cref{lemma:monestohcd}) and there is an $A$-reduction from the \HCD to the \MinO$(\mathcal{F''})$ (\Cref{lemma:hcdtomones}). 
The key aspect of the reductions is a careful choice for the families of constraints $\mathcal{F'}$ and $\mathcal{F''}$ that encapsulate the description of \HCD, but at the same time both $\mathcal{F'}$ and $\mathcal{F''}$ fulfill the necessary hardness conditions of the \MinO \xspace problem.   
Then, by the results of \cite{KhannaSTW01} we conclude that it is NP-hard to approximate \HCD within the claimed factor (\Cref{cor:hardfactor}). Moreover, the proved completeness (\Cref{theo:mddcompleteness}) implies that \HCD problem can be poly(OPT)-approximated if and only if every \MHD-complete problem can be poly(OPT)-approximated (\Cref{cor:hardapprox}). 

\paragraph{Incompressibility} Interesting connections arise with respect to parameterized complexity. 
Recall that \HCD is easily shown to be FPT parameterized by the solution size with running time $2^{k} n^{O(1)}$. 
In fact, the running time is essentially the best one can hope for because \CD has no subexponential-time algorithm \cite{CD-lowdegree}. 
However, polynomial kernels for \CD do exist \cite{CaoC12,GuillemotHPP13}. 
Therefore, it is natural to ask whether \HCD admits a polynomial kernel. 
We answer negative to this question under reasonable hierarchy assumptions, which highlights another complexity difference between the two problems.  

More general, there seems to be a connection between the existence of polynomial kernel and poly(OPT)-approximation for a problem, because usually one can find poly(OPT)-approximation having polynomial kernel, and vice versa. Although this fact is not formal and there are problems that act differently \cite{GiannopoulouLSS16}, it is quite possible that such connection exists for edge deletion problems: up to now, the existence of poly(OPT)-approximation algorithms are similar to results for kernelization complexity \cite{BliznetsCKP18,BelovaB22,CaiC15}. 
We establish the incompressibility of \HCD by ppt-reduction from an incompressible constraint satisfiability problem (\Cref{theo:nokernel}). 
This specific problem describes propagational behaviors on Boolean variables and clauses after certain events occur, which encompasses the situation in hedge graphs after each hedge removal. 
We do so, by representing Boolean formulas by three-vertex components that ensure the consistency of $P_3$-freeness in a similar way to the \MinO$(\mathcal{F})$ problem. 
Interestingly, we also show a connection between the \MinO$(\mathcal{F})$ problem and the incompressible constraint satisfiability problem via the \HCD problem (\Cref{theo:PROPisMHD}). 
%
%
%
%
%

\paragraph{Constant-factor approximation algorithm} Based on the classified (in)approximability and the hardness of \HCD imposed by the structure of the underlying graph, we consider structural insights for establishing approximation guarantees or polynomial-time algorithms. 
In order to achieve a constant
factor approximation, we restrict ourselves to hedge graphs in which no triangle is covered by more than two hedges. By our previous results (\Cref{theo:largesubgraphs}), we note that the problem still remains NP-complete even in this special class of hedge graphs. However, we now exploit an interesting property related to \HCD. Whenever a hedge is removed, we know how to compute the set of hedges that span new $P_3$ because of the hedge removal (\Cref{lem:polyR}). For a hedge $x$, we denote by $R(x)$ all hedges that inherit such property. We show that not only we can compute $R(x)$ in polynomial time, but there is a structural property behind these sets: (P1) for any $y \in R(x)$, we have $R(y) \subseteq R(x)$ (\Cref{lem:domproperties}). 

The previous property allows us to consider an auxiliary graph with its vertex set representing the hedges of the original hedge graph. Two vertices (that correspond to two hedges) are adjacent in the auxiliary graph if there is a $P_3$ spanned by edges of the two hedges. Each vertex $x$ also carries the set $R(x)$. Now our task is to find a vertex cover of the auxiliary graph that also takes into account the set $R(x)$ for each vertex $x$ belonging to the cover (\Cref{lem:HCDtoMVC}). Though this problem sounds like a natural variant of the \VC problem, we are not aware if this variant of \VC can be approximated within a constant factor. However, we take advantage that in our problem, the considered sets $R(x)$ satisfy property (P1). 

Our next step is to show that the problem on the auxiliary graph is equivalent to a purely \VC problem having no constraints on the vertices. If between an edge $\{x,y\}$ of the auxiliary graph the sets $R(x)$ and $R(y)$ are disjoint, then we can perform edge additions that preserve the constraint that $R(x)$ or $R(y)$ is included in the solution. Fortunately, we show that this is indeed the case: consider a hedge $z$ such that $z \in R(x) \cap R(y)$. Since the edge $\{x,y\}$ must be covered by $x$ or $y$, no matter which hedge is chosen in the solution, $z$ belongs to the solution as well (\Cref{obs:capL}). Having this into account, we include all common hedges in a solution and then perform the following edge additions between the endpoints $x$ and $y$. We add all necessary edges so that the graph induced by $R(x) \cup R(y)$ (containing $x$ and $y$) is complete bipartite. Then we are in position to claim that in the resulting graph our goal is to compute a \emph{minimal} \VC (\Cref{lem:MVCtoVC}). As \VC is known to admit an $\alpha$-approximation algorithm with $\alpha \leq 2$ \cite{Karakostas09,vazirani2001}, we conclude with a polynomial-time $2$-approximation algorithm for \HCD (\Cref{theo:approxHCD}).


\paragraph{Hedge intersection graph} In order to understand further implications that might lead to the tractability of \HCD,  
we consider the natural underlying structure formed by the hedges. Our goal is to exploit properties of the hedge intersection graph that are sufficient to deduce a polynomial-time algorithm for \HCD. Here we take a first step towards this direction by developing a polynomial-time algorithm for \HCD whenever the hedge intersection graph is acyclic. 

Before reaching the details of our algorithm, let us mention some non-trivial classes of hedge graphs with unbounded number of hedges that admit acyclic hedge intersection graph. As a first example, consider any hedge graph with the edges in each biconnected component belonging to at most two hedges and all other edges belong to a distinct hedge. For dense graphs, consider a (subgraph of a) clique which is formed by a matching of unbounded size where each edge of the matching belongs to a distinct hedge and all other edges of the clique belong to the same hedge. Thus, there are some interesting classes of hedge graphs that admit an acyclic hedge intersection graph.  


The hedge intersection graph $\mathcal{F}$ of a hedge graph is defined as follows: its vertex set is the set of hedges and two hedges in $\mathcal{F}$ are adjacent if the corresponding hedges contain a common vertex in the original hedge graph. We note that since we consider $\mathcal{F}$ to be acyclic, at most two hedges cover any triangle of the hedge graph. We also note that several properties explained earlier hold in this situation as well. In particular, the sets $R(x)$ can be computed efficiently and property (P1) is fulfilled. However, here we have to be careful because between hedges in $\mathcal{F}$ there are edges caused by triangles of the hedge graph. 

It is not difficult to see that between two hedges in $\mathcal{F}$ there are two types of edges: (i) edges formed by hedges that cover only triangles or (ii) edges formed by hedges that cover only $P_3$. Of course, there are edges that belong to both types. In fact, for the latter type of edges we can deduce which of the two endpoints (hedge of the original hedge graph) should belong to any solution (\Cref{lem:F1graph}). This is based on the fact that in any triangle one of the two hedges dominates the other, meaning that the removal of one of the two hedges always cause a $P_3$. 

By removing such vertices from $\mathcal{F}$, the edges are partitioned into the two types. We now partition the vertices of $\mathcal{F}$ into the connected components spanned only by edges of type (i). Between such components we show that there is at most one edge of type (ii), because $\mathcal{F}$ is acyclic (\Cref{lem:onlyoneedge}). We then construct a graph, that we denote by $F_2$, in which we can compute a solution in polynomial time: for each type (i) edge $\{x,y\}$ of $\mathcal{F}$, we add all necessary edges in $F_2$ to make $R(x) \cup R(y)$ complete bipartite. Then we show that it is enough to compute a minimum vertex cover of $F_2$ (\Cref{lem:F2vertexcover}). In order to achieve this in polynomial time, we deduce that the constructed graph $F_2$ is bipartite (\Cref{lem:F2bipartite}). Based on the fact that a minimum vertex cover can be computed in polynomial time in bipartite graphs \cite{matching80,flow13}, we thus conclude the claimed polynomial-time algorithm (\Cref{theo:acyclic}). 


\section{Preliminaries}\label{sec:prelim}
All graphs considered here are simple and undirected. 
%
We use standard graph-theoretic terminology and refer to the textbook \cite{bookDiestel}. 

\paragraph{Basic definitions}
A graph is denoted by $G=(V,E)$ with vertex set $V$ and edge set $E$. We use the convention that $n=\lvert V \rvert$ and $m=\lvert E \rvert$.
The {\it neighborhood} of a vertex~$v$ of $G$ is $N(v)=\{x \mid vx \in E\}$ and the {\it closed neighborhood} of $v$ is $N[v] = N(v) \cup \{v\}$.
For $S \subseteq V$, $N(S)=\bigcup_{v \in S} N(v) \setminus S$ and $N[S] = N(S) \cup S$. 
A graph~$G'$ is a {\it subgraph} of $G$ if $V(G')\subseteq V(G)$ and $E(G')\subseteq E(G)$. 
For $X\subseteq V(G)$, the subgraph of $G$ {\it induced} by $X$, $G[X]$, has vertex set~$X$, and for each vertex pair~$u, v$ from $X$, $uv$ is an edge of $G[X]$ if and only if $u\not= v$ and $uv$ is an edge of $G$.
For $R\subseteq E(G)$, $G\setminus R$ denotes the graph~$(V(G), E(G)\setminus R)$, that is a subgraph of $G$ and for $S \subseteq V(G)$, $G - S$ denotes the graph~$G[V(G)-S]$, that is an induced subgraph of $G$.

For two disjoint sets of vertices $A$ and $B$, we write $E(A,B)$ to denote all edges that have one endpoint in $A$ and the other endpoint in $B$: \mbox{$E(A,B)=\{\{a,b\}: a\in A \text{ and } b\in B\}$}. A \emph{matching} in $G$ is a set of edges having no common endpoint.

A {\it clique} of $G$ is a set of pairwise adjacent vertices of $G$, and a {\it maximal clique} of $G$ is a clique of $G$ that is not properly contained in any clique of $G$.
An {\it independent set} of $G$ is a set of pairwise non-adjacent vertices of $G$.
The complete graph on $n$ vertices is denoted by $K_n$ and the chordless path on three vertices is denoted by $P_3$. If the vertices $a,b,c$ induce a $P_3$ with $\{a,c\} \notin E(G)$, we write $P_3=(a,b,c)$. 
Since we mainly deal with $P_3$ and $K_3$, we also write $K_3=(a,b,c)$ to denote that the vertices $a,b,c$ induce a $K_3$.

A graph is {\it connected} if there is a path between any pair of vertices and a {\it connected component} of $G$ is a maximal connected subgraph of $G$.
A {\it cluster graph} is a graph in which every connected component is a clique. A {\it maximal cluster subgraph} of $G$ is a maximal spanning subgraph of $G$ that is isomorphic to a cluster graph.  In the \CD problem we seek to delete
a minimum number of edges of a given graph to result in a cluster graph. 


A set of vertices $S$ is called \emph{vertex cover} of $G$ if every edge of $G$ has at least one endpoint in $S$. The \VC problem seeks for the smallest vertex cover set. 

\paragraph{Approximation algorithms} 
Let $f$ be a fixed nondecreasing function on positive integers. 
An $f(\textrm{OPT})$-approximation algorithm for a minimization problem $\Pi$ is an algorithm that finds a solution of size at most $f(\textrm{OPT})\cdot \textrm{OPT}$, where $\textrm{OPT}$ is the size of an optimal solution for a given instance of $\Pi$. We say that $\Pi$ \emph{admits $f(\textrm{OPT})$-approximation} if there exists an $f(\textrm{OPT})$-approximation algorithm for $\Pi$ that runs in polynomial time. 

\paragraph{Kernelization lower bounds} 
Here we include definitions from the parameterized complexity theory and kernelization; for further details we refer to \cite{CyganFKLMPPS15,DowneyF13}.
A problem with input size $n$ and parameter $k$ is \emph{fixed parameter tractable} (FPT), if it can be solved in time $f(k) \cdot n^{O(1)}$ for some computable function $f$.
Respectively, the complexity class FPT is composed by all fixed parameter tractable problems. 
A kernelization for a parameterized decision problem $Q$ is a
polynomial-time preprocessing algorithm that reduces any parameterized instance $(x, k)$ into an instance $(x',k')$
whose size is bounded by a function $g(k)$ and which has the same
answer for $Q$. 
The function $g$ defines the size of the kernel and the \emph{kernel has polynomial size} if $g$ is polynomial on $k$. 

A \emph{polynomial compression} of a parameterized problem $P$ into a (nonparameterized) problem $Q$ is a polynomial algorithm that takes as an input an instance $(x,k)$ of $P$ and returns an instance $x'$ of $Q$ such that (i) $(x, k)$ is a yes-instance of $P$ if and only if $x'$ is a yes-instance of $Q$ and (ii) the size of $x'$ is bounded by $p(k)$ for a polynomial $p$.
Clearly, the existence of a polynomial kernel implies that the problem admit a polynomial compression but not the other way around. 

It is well-known  that every decidable parameterized problem is FPT if and only if it admits a kernel, but it is unlikely that every problem in FPT has a polynomial kernel or polynomial compression.  
In particular, the standard \emph{composition} and \emph{cross-composition} techniques~\cite{BodlaenderDFH09,BodlaenderJK14}
allow to show that certain problems have no polynomial compressions, unless  NP $\subseteq$ coNP/poly. We will apply the following reduction, known as \emph{polynomial parameter transformation} (\emph{ppt-reduction} in
short), which plays a central role in demonstrating incompressibility (and, hence, no polynomial kernel) \cite{BodlaenderJK14}. 
A \emph{ppt-reduction} from a parameterized problem $P$ to another
parameterized problem $P'$ is an algorithm that, for input $(I, k) \in P$, takes time polynomial in $|I| + k$ and outputs an instance $(I', k') \in P'$, such that 
\begin{itemize}
\item[(i)] $(I, k)$ is a yes-instance of $P$ if and only if $(I', k')$ is a yes-instance of $P'$,
\item[(ii)] parameter $k'$ is bounded by a polynomial of $k$.
\end{itemize}
It is known that if there is a ppt-reduction from a parameterized problem $P$ to
another parameterized problem $P'$, then $P'$ admits no polynomial compression (hence
no polynomial kernel) whenever $P$ admits no polynomial compression \cite{BodlaenderJK14}.

\paragraph{Hedge graphs}
A hedge graph ${H}=(V,\mathcal{E})$ consists of a set of vertices $V$ and a set of hedges $\mathcal{E}=\{E_1, \ldots, E_{\ell}\}$. Each hedge $E_i$ of $\mathcal{E}$ is a collection of subsets of size exactly two of $V$ that are called the \emph{components} of $E_i$ (in view of graph terminology, each component corresponds to an edge). We require the components of any two hedges to be distinct (in view of graph terminology, the edges of a simple graph are partitioned into hedges). Notice that a vertex may appear in several components of the same hedge, as well as in components of different hedges. Translating the components as edges of a graph $H$, we reach the equivalent definition of a hedge graph: the edges $E(H)$ have been partitioned into hedges $E_1, \ldots, E_{\ell}$\footnote{The typical definition of hedge graphs (see e.g., \cite{sodaFominGKL025,sodaFominGKLS23,sodaGhaffariKP17}) does not require any component of a hedge to have size exactly two and two hedges may share the same component, implying parallel edges in the underlying graph. However, in our problem, components of size different than two are irrelevant and furthermore we ignore multiple edges. Nevertheless, the equivalent definition through the edge partitioning of a simple graph still applies for all considered problems.}. 
The \emph{underlying graph} of a hedge graph ${H}$ is the simple graph obtained from $H$ by simply ignoring the hedges and its edge set is exactly the union of the edges from all hedges. 

We consider a hedge graph $H$ as a simple graph (which corresponds to the underlying graph of ${H}$) in which the edges of $E(H)$ have been partitioned into $\ell$ hedges $E_1, \ldots, E_{\ell}$, denoted by $H=(V,\mathcal{E})$, where $\mathcal{E}(H)=\{E_1, \ldots, E_{\ell}\}$. Thus, each edge of $H$ belongs to exactly one hedge $E_i$. The removal of a hedge $E_i$ results in a hedge graph that we denote by $H\setminus E_i$ which is obtained from the graph $H$ by removing all edges of $E_i$ and consists of the hedge set $(E_1, \ldots, E_{\ell}) \setminus E_i$. 
%

\begin{definition}[hedge-subgraph]\label{def:hedgesubgraph}
\normalfont
Given a hedge graph $H=(V,\mathcal{E})$, a \emph{hedge-subgraph} $H'$ of $H$ consists of the same vertex set, i.e., $V(H')=V(H)$, and contains some of its hedges, i.e., $\mathcal{E}(H') \subseteq \mathcal{E}(H)$. 
Equivalently, for a subset of hedges $U \subseteq \{E_1, \ldots, E_{\ell}\}$ the hedge graph $H \setminus U$ denotes the hedge-subgraph of $H$ having hedge set $(E_1, \ldots, E_{\ell}) \setminus U$ with its underlying graph~$(V(H), E(H)\setminus E(U))$ where $E(U)$ corresponds to the set of edges of $H$ that belong to the hedges of $U$. The size $|U|$ of $U$ is the total number of hedges in $U$. 
\end{definition}

\begin{definition}[cluster hedge-subgraph]\label{def:clusterhedgesubgraph}
\normalfont
A {\it cluster hedge-subgraph} $H'$ of $H$ is a hedge-subgraph of $H$ with its underlying graph being isomorphic to a cluster graph.     
\end{definition}

We next formalize \HCD, as a decision problem. 
\vspace*{-0.01in}
\pbDef{\HCD}
	{A hedge graph $H=(V,\mathcal{E})$ and a non-negative integer $k$.}
	{Decide whether there is a subset of hedges $U \subseteq \mathcal{E}(H)$ such that $H \setminus U$ is a cluster hedge-subgraph of $H$ and $\lvert U \rvert \leq k$.}

In the optimization setting, the task of \HCD is to turn the
given hedge graph $H$ into a cluster hedge graph by deleting the minimum number of hedges. Let us discuss the complexity of \HCD with respect to the number of hedges $\ell$. It is not difficult to see that recognizing whether a given graph is a cluster graph can be done in linear time.
If $\ell = |E(H)|$, that is, all hedges of $H$ contain a single edge of $H$, then \HCD is equivalent to \CD. 
On the other hand, one can construct all hedge-subgraphs of $H$ in $2^{\ell} n^{O(1)}$ time, which means that if the number of hedges is bounded then \HCD is solved in polynomial time. In what follows, we assume that the number of hedges is not bounded.  

A natural difference that appears in the hedge variation of the problem, comes from the fact that maximal cluster subgraphs are enough to consider for the \CD problem. However, in \HCD one has to explore all cluster hedge-subgraphs rather than the maximal ones. Given a subgraph $H'$ with edge set $E(H') \subseteq E(H)$, we can identify if $H'$ is a valid solution for \HCD, i.e., whether $H'$ is a cluster hedge-subgraph of $H$. In fact, the following algorithm implies that if the number of cluster subgraphs of $H$ is polynomial then \HCD can be solved in polynomial time: for each cluster subgraph of the underlying graph, we realize if it is indeed a cluster hedge-subgraph of the given hedge graph and keep the valid solution with a maximum number of hedges. 

\begin{observation}\label{obs:hedgesubgraphs}
Given a subgraph $H'$ of a hedge graph $H$ with $E(H') \subseteq E(H)$, there is an \mbox{$O(\ell m+n)$-time} algorithm that decides whether $H'$ is a cluster hedge-subgraph of $H$, where $\ell$ and $m$ are the number of hedges and edges, respectively, of $H$. 
\end{observation}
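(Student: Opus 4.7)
The plan is to verify two independent conditions: first, that $H'$ actually qualifies as a hedge-subgraph of $H$ in the sense of Definition~\ref{def:hedgesubgraph} (namely, for every hedge $E_i \in \mathcal{E}(H)$, either $E_i \subseteq E(H')$ or $E_i \cap E(H') = \emptyset$); and second, that the underlying graph of $H'$ is a cluster graph. Both conditions are necessary and sufficient, so $H'$ is accepted if and only if both checks pass.

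For the hedge-consistency check, I would first preprocess $E(H')$ into a constant-time membership oracle by scanning the edges of $H$ once and marking those that also appear in $H'$; this costs $O(m)$ time. Then I would iterate through each hedge $E_i \in \mathcal{E}(H)$ and, examining its edges, test whether their membership in $E(H')$ is uniform. If some hedge has both an edge present in $E(H')$ and an edge absent from $E(H')$, the algorithm rejects. Bounding the per-hedge work by $O(m)$ gives a total of $O(\ell m)$ for this phase.

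For the cluster-graph test, I would run a standard BFS/DFS on the underlying graph of $H'$ to compute its connected components in $O(n + |E(H')|) = O(n + m)$ time, and then verify that each component $C$ is a clique by comparing the number of edges inside it against $\binom{|C|}{2}$. Equivalently, one may test $P_3$-freeness by checking that $N[u] = N[v]$ for every edge $uv$ of $H'$ within the same component. Either variant runs in $O(n + m)$ time.

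Combining the two phases yields the claimed $O(\ell m + n)$ running time. I do not anticipate a genuine obstacle here: the statement is a routine algorithmic observation, and the only mild subtlety is to notice that the ``hedge-subgraph'' requirement is non-trivial even when $H'$ induces a cluster graph, since $H'$ might contain a proper subset of some hedge's edges and therefore fail to be a valid hedge-subgraph of $H$.
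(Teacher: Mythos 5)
Your proposal is correct and matches the paper's own proof: both check that the underlying graph of $H'$ is a cluster graph in $O(m+n)$ time and then scan each hedge for the "non-valid" condition $E_i \cap E(H') \neq \emptyset$ and $E_i \cap (E(H)\setminus E(H')) \neq \emptyset$, giving the $O(\ell m + n)$ bound. Your added detail about the membership oracle and the explicit per-phase accounting is fine but does not change the argument.
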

\begin{proof}
We begin by checking if $H'$ is a cluster graph in $O(m+n)$ time. Then we search for any non-valid hedge $E_i \in \mathcal{E}(H)$ with the following properties: $E_i \cap E(H') \neq \emptyset$ and $E_i \cap (E(H)\setminus E(H')) \neq \emptyset$. If such a hedge exists, then $H'$ is not a hedge subgraph of $H$, since there is an edge in $E(H')$ that does not belong to any hedge that is fully contained in $H'$. Thus, if all hedges of $H$ are valid in $H'$, then $H'$ is indeed a hedge-subgraph of $H$, which means that $H'$ is a cluster hedge-subgraph of $H$. 
\end{proof}

Since cluster graphs are exactly the graphs that do not contain any $P_3$, we distinguish such paths with respect to the hedges of $H$. We say that a hedge $E_i$ spans an \emph{internal $P_3$} if there is a $P_3$ in $H$ having both of its edges in $E_i$. Otherwise, we refer to the induced $P_3$ of $H$ as a simple $P_3$, which means that two hedges are involved in the $P_3$ of $H$. By definition, we have the following.   

\begin{observation}\label{obs:internalP3}
Let $U$ be any solution for \HCD and let $E_i$ be a hedge of $H$ that spans an internal $P_3$. Then, $E_i \subseteq U$. 
\end{observation}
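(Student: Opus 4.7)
The plan is to establish the claim by a direct contrapositive (or contradiction) argument using only the definitions of hedge-subgraph removal and of an internal $P_3$, so the proof will be short.

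First I would fix the data: let $U \subseteq \mathcal{E}(H)$ be any solution for \HCD on $H$, so that $H\setminus U$ is a cluster hedge-subgraph of $H$, and let $E_i \in \mathcal{E}(H)$ span an internal $P_3$. By the definition of an internal $P_3$, there exist three vertices $a,b,c \in V(H)$ forming an induced $P_3=(a,b,c)$ in the underlying graph of $H$ such that both edges $\{a,b\}$ and $\{b,c\}$ belong to $E_i$.

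Suppose, towards a contradiction, that $E_i \notin U$. By \Cref{def:hedgesubgraph}, the underlying graph of $H\setminus U$ is $(V(H), E(H)\setminus E(U))$, where $E(U)$ is the union of the edges in the hedges of $U$. Since the edges of $H$ are partitioned by the hedges and $E_i \notin U$, no edge of $E_i$ belongs to $E(U)$; in particular $\{a,b\}$ and $\{b,c\}$ both survive in $H\setminus U$. Hence the three vertices $a,b,c$ still induce a $P_3$ in the underlying graph of $H\setminus U$, contradicting the fact that $H\setminus U$ is a cluster hedge-subgraph (and thus $P_3$-free).

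Therefore $E_i \in U$ for any solution $U$. The only subtlety worth pointing out is that the internal $P_3$ property involves two edges of the \emph{same} hedge: removing any other hedge from $H$ cannot destroy either of these two edges, which is precisely why we are forced to remove $E_i$ itself. I do not anticipate any obstacle; this is a direct application of \Cref{def:hedgesubgraph} and \Cref{def:clusterhedgesubgraph}.
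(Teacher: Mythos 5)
Your proof is correct and matches the paper's intent exactly: the paper states this observation without proof as following ``by definition,'' and your argument is precisely the direct elaboration of that definitional fact (both edges of the internal $P_3$ lie in $E_i$, so only removing $E_i$ itself can destroy them). No gaps; the only cosmetic point is that the paper writes $E_i \subseteq U$ where $E_i \in U$ (as you conclude) is the intended meaning.
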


In order to solve \HCD, it is not difficult to see that the important subgraphs of a hedge graph $H$ are exactly the $P_3$'s and $K_3$'s that are contained in the underlying graph of $H$. 
We will rely on this fact several times. 
Thus we state it formally as follows. 

\begin{lemma}\label{lem:allPandK}
Let $H$ be a hedge graph on $n$ vertices. There is an $O(n^3)$-time algorithm that constructs another hedge graph $H'$ with $O(n^3)$ vertices having the following properties: 
\begin{itemize}
    \item[(i)] every connected component of $H'$ consists of exactly three vertices (so that it is either a $P_3$ or a $K_3$), and 
    \item[(ii)] any solution for \HCD on $H$ is a solution on $H'$ and vice versa.
\end{itemize}
\end{lemma}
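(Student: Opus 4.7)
The plan is to construct $H'$ as a disjoint union of three-vertex gadgets, one per induced $P_3$ or induced $K_3$ of the underlying graph of $H$, and to inherit hedge membership edge-by-edge. The driving observation is that the only obstruction to being a cluster graph is an induced $P_3$, and any induced $P_3$ that appears in $H \setminus U$ must live on three vertices whose induced subgraph in $H$ is either a $P_3$ (the path already existed) or a $K_3$ (one edge of the triangle was deleted). Hence localizing to three-vertex witnesses loses no information.

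First, I would enumerate all unordered triples $T = \{a,b,c\} \subseteq V(H)$ such that $H[T]$ is a $P_3$ or a $K_3$; there are at most $\binom{n}{3} = O(n^3)$ such triples, and they can be identified in $O(n^3)$ time by scanning all triples. For each such $T$ I introduce three fresh vertices $a_T, b_T, c_T$ in $V(H')$ and, for every edge $\{u,v\} \in E(H[T])$, I add the edge $\{u_T, v_T\}$ to $E(H')$. Then, for each hedge $E_i \in \mathcal{E}(H)$, I define a hedge $E_i' \in \mathcal{E}(H')$ consisting of all copies $\{u_T, v_T\}$ of the original edges $\{u,v\} \in E_i$, and I discard empty hedges. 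Each edge of $H'$ descends from a unique edge of $H$, so $\{E_i'\}$ is a valid hedge partition of $E(H')$. By construction, every connected component of $H'$ has exactly three vertices and is a $P_3$ or a $K_3$, and $|V(H')| = O(n^3)$, so property (i) and the stated time and size bounds hold.

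For property (ii), I would exhibit the natural bijection between solutions: to $U \subseteq \mathcal{E}(H)$ I associate $U' = \{E_i' : E_i \in U\} \subseteq \mathcal{E}(H')$. For the forward implication, suppose $U$ is a solution on $H$ and fix a gadget for a triple $T$. If $H[T]$ is a $P_3 = (a,b,c)$, then the non-edge $\{a,c\}$ still fails to appear in $H \setminus U$, so one of $\{a,b\}, \{b,c\}$ must be removed, killing the corresponding edge of the gadget. If $H[T]$ is a $K_3$, then the number of edges surviving in $(H \setminus U)[T]$ lies in $\{0,1,3\}$, and the gadget inherits the same count, which is cluster-consistent. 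Thus $H' \setminus U'$ is a cluster graph. The backward implication is the contrapositive: if $H \setminus U$ contains an induced $P_3$ on some triple $T$, then $H[T]$ is a $P_3$ or $K_3$ (since hedge removal only deletes edges), so a gadget for $T$ sits in $H'$, and after deleting $U'$ it still carries an induced $P_3$, contradicting that $U'$ is a solution.

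The main point needing care, rather than a genuine obstacle, is to include gadgets for \emph{both} $P_3$'s and $K_3$'s of $H$: triangles are by themselves cluster-compatible, but partial hedge removal inside a triangle can create a brand-new induced $P_3$ that no $P_3$-gadget of $H$ would witness. Once this is recognized, and once one checks that the hedge reassignment is well-defined because distinct edges of $H$ (belonging to distinct hedges) give rise to distinct edges of $H'$, the bi-directional argument is routine and the $O(n^3)$ bounds on time and size follow immediately from the triple enumeration.
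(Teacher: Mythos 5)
Your proposal is correct and follows essentially the same route as the paper's proof: enumerate all three-vertex induced $P_3$'s and $K_3$'s, build one disjoint three-vertex gadget per triple with hedge membership inherited edge-by-edge, and argue equivalence of solutions via the observation that any induced $P_3$ surviving a hedge removal lives on a triple inducing a $P_3$ or $K_3$ in $H$. The only cosmetic difference is that the paper first discards components of $H$ with at most two vertices (and their hedges) rather than discarding empty hedges at the end, which amounts to the same thing.
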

\begin{proof}
We will use the following observation. 
Let $R$ be a set of edges and let $H \setminus R$ be the subgraph of $H$. 
The vertices of a $P_3$ in $H \setminus R$ induce either a $P_3$ or a $K_3$ in $H$. 
Based on this observation, let us assume that $H$ does not contain any connected component with at most two vertices. If such components exist, then we obtain a solution by ignoring all corresponding vertices, since they are not involved in any $P_3$ or $K_3$ in $H$. Note also that we ignore hedges that contain only edges of connected components with exactly two vertices. Thus we henceforth assume that each connected component of $H$ has at least three vertices. 

We enumerate all three-vertex subgraphs of $H$ that are isomorphic to $P_3$ or $K_3$. 
Let $\mathcal{P}_3$ be the set of all $P_3$'s and let $\mathcal{K}_3$ be the set of all $K_3$'s. 
For every element of $\mathcal{P}_3 \cup \mathcal{K}_3$, we add a connected component on three vertices in $H'$ that is isomorphic to $P_3$ or $K_3$, respectively. 
Observe that an edge $\{x,y\}$ of $H$ may appear as $\{x',y'\}$ in several connected components in $H'$, and an edge $\{x',y'\}$ of $H'$ corresponds to exactly one edge $\{x,y\}$ in $H$. Notice that by the previous assumption on $H$, any edge of $H$ belongs to some element of $\mathcal{P}_3 \cup \mathcal{K}_3$. 
Moreover, we keep exactly the same set of hedges between $H$ and $H'$: each edge $\{x',y'\}$ of $H'$ belongs to the hedge of $H$ that contains the corresponding edge $\{x,y\}$. 
Thus, $H'$ contains exactly $k$ vertices and at most $k$ edges, where $k=3 (|\mathcal{P}_3|+|\mathcal{K}_3|)$, and property (i) is satisfied by construction. 

To complete the justification for property (ii), notice the following. 
The sets of all $P_3$'s and $K_3$'s coincide in both graphs $H$ and $H'$. 
Thus $\mathcal{P}_3 \cup \mathcal{K}_3$ contains the necessary structures to obtain a valid solution for \HCD, since the set of hedges are exactly the same on $H$ and $H'$. 
Furthermore, any edge removal $\{x,y\}$ on $H$ results in removing all corresponding edges of the form $\{x',y'\}$ on $H'$ because they belong to the same hedge of $\{x,y\}$. 
Similarly, removing a hedge of $H'$ corresponds to removing all occurrences $\{x',y'\}$ of an edge $\{x,y\}$ in $H$, as they belong to the same hedge by construction. 
Therefore, the solutions on $H$ and $H'$ are equivalent. 
\end{proof}

Regarding the algorithm of \Cref{lem:allPandK}, we note that there is an enumeration algorithm that runs in $O(m+|\mathcal{P}_3|+|\mathcal{K}_3)|$ time, where $|\mathcal{P}_3|$ and $|\mathcal{K}_3|$ are the number of $P_3$'s and $K_3$'s, respectively \cite{HoangKSS13}. Moreover, it is known that $|\mathcal{K}_3| = O(m^{3/2})$ by the results in \cite{ItaiR78}. Still, our bound given in \Cref{lem:allPandK} is enough for the considered problem.

\section{Hardness result implied by the underlying structure}\label{sec:hardness}
Here we explore the complexity of \HCD. Clearly, if \CD is NP-complete on a certain graph class then \HCD remains NP-complete on the same class of graphs. Interestingly, we show that in a wide range of graph classes for which \CD is solved in polynomial time, \HCD is NP-complete. 


For $\delta\geq 0$, we denote by $\delta P_3$ the graph obtained from the union of $\delta$ vertex-disjoint $P_3$'s.

\begin{lemma}\label{lem:paths}
\HCD is NP-complete on hedge graphs with underlying graph isomorphic to $\delta P_3$. 
\end{lemma}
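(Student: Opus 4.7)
The plan is to prove NP-completeness by a polynomial-time reduction from \VC, which is classically NP-complete even on graphs without isolated vertices. Membership in NP is routine: given a guessed subset $U \subseteq \mathcal{E}(H)$, one verifies in polynomial time whether $H \setminus U$ is $P_3$-free, which on a $\delta P_3$ underlying graph just means checking, for every $P_3$-component, that at least one of its two edges has been removed.

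For NP-hardness, the key observation is that on an underlying graph isomorphic to $\delta P_3$ every $P_3$-component must be destroyed, and destroying a $P_3$ amounts to removing at least one of the (at most) two hedges containing its two edges. This is precisely the constraint structure of \VC, with hedges playing the role of vertices and $P_3$-components playing the role of edges. Accordingly, starting from an instance $(G,k)$ of \VC with $G=(V,E)$, I would build a hedge graph $H$ by creating one fresh three-vertex $P_3$-component per edge $e=\{u,v\}\in E$, and assigning its two edges to hedges $E_u$ and $E_v$, where $E_w$ collects exactly one edge from each $P_3$-component corresponding to an edge of $G$ incident to $w$. The underlying graph of $H$ is then isomorphic to $|E|\cdot P_3$, has $3|E|$ vertices, and the number of hedges equals $|V|$, so the reduction is polynomial.

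Correctness then splits into two direct verifications. If $S$ is a vertex cover of $G$, then $U=\{E_w:w\in S\}$ destroys every $P_3$-component because each edge $\{u,v\}\in E$ has $u\in S$ or $v\in S$, so either $E_u$ or $E_v$ lies in $U$; the residual hedge-subgraph consists only of an induced matching and isolated vertices, hence is a cluster graph. Conversely, if $U$ is any hedge deletion solution for $H$, then for every $\{u,v\}\in E$ at least one of $E_u,E_v$ must be in $U$, so $S=\{w:E_w\in U\}$ is a vertex cover of $G$ with $|S|=|U|$. Thus $G$ has a vertex cover of size at most $k$ if and only if $H$ has a hedge deletion of size at most $k$.

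The main obstacle is mild rather than deep: I must ensure the construction does not produce internal $P_3$'s, since by \Cref{obs:internalP3} any hedge spanning an internal $P_3$ is forced into every solution and would spoil the tight size correspondence with vertex cover. This is handled cleanly by making the two edges of each $P_3$-component lie in distinct hedges, which holds automatically because the endpoints of any edge in $G$ are distinct, and by discarding isolated vertices of $G$ beforehand so that every hedge $E_w$ is non-empty and counts as a genuine element of $\mathcal{E}(H)$.
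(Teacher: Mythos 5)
Your proposal is correct and follows essentially the same reduction as the paper: one fresh $P_3$-component per edge $\{u,v\}$ of the \VC instance, with its two edges placed in hedges $E_u$ and $E_v$, and the same size-preserving equivalence between vertex covers and hedge deletion sets. The only cosmetic difference is that the paper cites \Cref{obs:hedgesubgraphs} for NP membership, while you verify it directly; both are fine.
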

\begin{proof} 
Due to \Cref{obs:hedgesubgraphs} the problem is in NP. We provide a polynomial reduction from the NP-complete \VC problem. Without loss of generality, we assume that the input graph $G$ for \VC does not contain any isolated vertex.  
Given a graph $G = (V, E)$ on $n$ vertices and $m$ edges for the \VC problem, we construct a hedge graph $H = \delta P_3$ with $|V(H)| = 3m$, $|\mathcal{E}(H)| = n$ and $2m$ edges, as follows:
\begin{itemize}
    \item For each edge $e=\{x, y\} \in E(G)$, we add three vertices $e_{xy}, e_{x} \text{ and } e_{y}$ to $V(H)$. Moreover, we add the edges $\{e_{xy}, e_{x}\}$ 
and $\{e_{xy}, e_{y}\}$. 
    \item For every vertex $z$ of $G$, we add the hedge $E_z \in \mathcal{E}(H)$ where ${E}_z$ contains all the edges of the form $\{e_{uz}, e_{z}\} \in E(H)$. 
\end{itemize}
For an illustration, we refer to \Cref{fig:Hardness Cluster Deletion}. Observe that every connected component of $H$ is isomorphic to a $P_3$ and, thus, $H$ is indeed a $\delta P_3$ graph with $\delta = m$. Moreover, notice that there is a one-to-one correspondence between the edges of $G$ and the connected components of $H$. 

We claim that $G$ has a vertex cover $S$ with $|S| \leq k$ if and only if 
there is a set of $U$ hedges in $H$ with $|U|\leq k$ such that $H \setminus U$ is a cluster hedge-subgraph of $H$. 

%

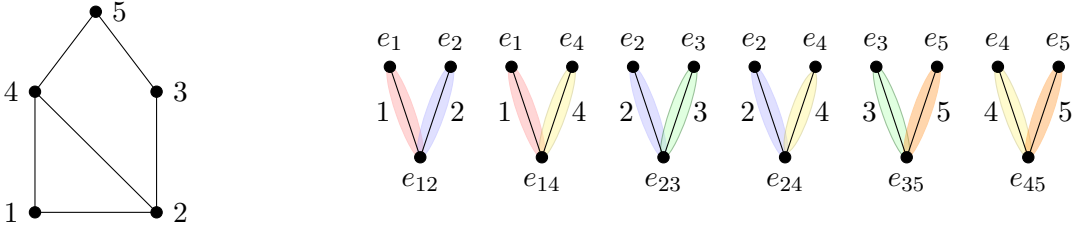
\begin{figure}[t]
\centering
\begin{tikzpicture}
\node[draw, circle, fill=black, inner sep=1.5pt, label=left:$1$] at (2.13,23.77) {};
\node[draw, circle, fill=black, inner sep=1.5pt, label=right:$2$] at (3.73,23.77) {};
\node[draw, circle, fill=black, inner sep=1.5pt, label=left:$4$] at (2.13,25.37) {};
\node[draw, circle, fill=black, inner sep=1.5pt, label=right:$3$] at (3.73,25.37) {};
\node[draw, circle, fill=black, inner sep=1.5pt, label=right:$5$] at (2.93,26.43) {};
\draw (2.13,23.77) -- (3.73,23.77);
\draw (3.73,23.77) -- (3.73,25.37);
\draw (2.13,23.77) -- (2.13,25.37);
\draw (2.13,25.37) -- (2.93,26.43);
\draw (3.73,25.37) -- (2.93,26.43);
\draw (2.13,25.37) -- (3.73,23.77);

\begin{scope}[shift={(4.8,-15.5)}]

\draw[red!50, fill=red!40, opacity=0.4, rotate around={-71:(2.2,40.6)}] 
  (2.2,40.6) ellipse (0.65cm and 0.12cm);
\draw[blue!40, fill=blue!30, opacity=0.4, rotate around={71:(2.6,40.6)}] 
  (2.6,40.6) ellipse (0.65cm and 0.12cm);
\draw[red!50, fill=red!40, opacity=0.4, rotate around={-71:(3.8,40.6)}] 
  (3.8,40.6) ellipse (0.65cm and 0.12cm);
\draw[yellow!70!black, fill=yellow!60, opacity=0.4, rotate around={71:(4.2,40.6)}] 
  (4.2,40.6) ellipse (0.65cm and 0.12cm);
\draw[blue!40, fill=blue!30, opacity=0.4, rotate around={-71:(5.4,40.6)}] 
  (5.4,40.6) ellipse (0.65cm and 0.12cm);
\draw[green!40!black, fill=green!30, opacity=0.4, rotate around={71:(5.8,40.6)}] 
  (5.8,40.6) ellipse (0.65cm and 0.12cm);
\draw[blue!40, fill=blue!30, opacity=0.4, rotate around={-71:(7.0,40.6)}] 
  (7.0,40.6) ellipse (0.65cm and 0.12cm);
\draw[yellow!70!black, fill=yellow!60, opacity=0.4, rotate around={71:(7.4,40.6)}] 
  (7.4,40.6) ellipse (0.56cm and 0.12cm);
\draw[green!40!black, fill=green!30, opacity=0.4, rotate around={-71:(8.6,40.6)}] 
  (8.6,40.6) ellipse (0.65cm and 0.12cm);
\draw[orange!90, fill=orange!80, opacity=0.4, rotate around={71:(9.0,40.6)}] 
  (9.0,40.6) ellipse (0.65cm and 0.12cm);
\draw[yellow!70!black, fill=yellow!60, opacity=0.4, rotate around={-71:(10.2,40.6)}] 
  (10.2,40.6) ellipse (0.65cm and 0.12cm);
\draw[orange!90, fill=orange!80, opacity=0.4, rotate around={71:(10.6,40.6)}] 
  (10.6,40.6) ellipse (0.65cm and 0.12cm);


\node[draw, circle, fill=black, inner sep=1.5pt, label=above:$e_{1}$] at (2,41.20) {};
\node[draw, circle, fill=black, inner sep=1.5pt, label=above:$e_{2}$] at (2.80,41.20) {};
\node[draw, circle, fill=black, inner sep=1.5pt, label=below:$e_{12}$] at (2.40,40.00) {};
\node[draw, circle, fill=black, inner sep=1.5pt, label=above:$e_{1}$] at (3.60,41.20) {};
\node[draw, circle, fill=black, inner sep=1.5pt, label=above:$e_{4}$] at (4.40,41.20) {};
\node[draw, circle, fill=black, inner sep=1.5pt, label=below:$e_{14}$] at (4.00,40.00) {};
\node[draw, circle, fill=black, inner sep=1.5pt, label=above:$e_{2}$] at (5.20,41.20) {};
\node[draw, circle, fill=black, inner sep=1.5pt, label=below:$e_{23}$] at (5.60,40.00) {};
\node[draw, circle, fill=black, inner sep=1.5pt, label=above:$e_{3}$] at (6.00,41.20) {};
\node[draw, circle, fill=black, inner sep=1.5pt, label=above:$e_{2}$] at (6.80,41.20) {};
\node[draw, circle, fill=black, inner sep=1.5pt, label=below:$e_{24}$] at (7.20,40.00) {};
\node[draw, circle, fill=black, inner sep=1.5pt, label=above:$e_{4}$] at (7.60,41.20) {};
\node[draw, circle, fill=black, inner sep=1.5pt, label=above:$e_{3}$] at (8.40,41.20) {};
\node[draw, circle, fill=black, inner sep=1.5pt, label=above:$e_{5}$] at (9.20,41.20) {};
\node[draw, circle, fill=black, inner sep=1.5pt, label=below:$e_{35}$] at (8.80,40.00) {};
\node[draw, circle, fill=black, inner sep=1.5pt, label=above:$e_{4}$] at (10.00,41.20) {};
\node[draw, circle, fill=black, inner sep=1.5pt, label=below:$e_{45}$] at (10.40,40.00) {};
\node[draw, circle, fill=black, inner sep=1.5pt, label=above:$e_{5}$] at (10.80,41.20) {};

\draw (2.00,41.20) -- (2.40,40.00) node [midway, xshift=-0.75em, yshift=0em] {$1$};
\draw (2.40,40.00) -- (2.80,41.20) node [midway, xshift=0.75em, yshift=0em] {$2$};
\draw (3.60,41.20) -- (4.00,40.00) node [midway, xshift=-0.75em, yshift=0em] {$1$};
\draw (4.00,40.00) -- (4.40,41.20) node [midway, xshift=0.75em, yshift=0em] {$4$};
\draw (5.20,41.20) -- (5.60,40.00) node [midway, xshift=-0.75em, yshift=0em] {$2$};
\draw (5.60,40.00) -- (6.00,41.20) node [midway, xshift=0.75em, yshift=0em] {$3$};
\draw (6.80,41.20) -- (7.20,40.00) node [midway, xshift=-0.75em, yshift=0em] {$2$};
\draw (7.20,40.00) -- (7.60,41.20) node [midway, xshift=0.75em, yshift=0em] {$4$};
\draw (8.40,41.20) -- (8.80,40.00) node [midway, xshift=-0.75em, yshift=0em] {$3$};
\draw (8.80,40.00) -- (9.20,41.20) node [midway, xshift=0.75em, yshift=0em] {$5$};
\draw (10.00,41.20) -- (10.40,40.00) node [midway, xshift=-0.75em, yshift=0em] {$4$};
\draw (10.40,40.00) -- (10.80,41.20) node [midway, xshift=0.75em, yshift=0em] {$5$};
\end{scope}
\end{tikzpicture}
\caption{Illustration of the reduction given in \Cref{lem:paths}. On the left side, we show an input graph $G$ for \VC and on the right side, we depict the corresponding hedge graph $H$, according to the construction.}
\label{fig:Hardness Cluster Deletion}
\end{figure}

Assume that $S \subseteq V(G)$ is a vertex cover of $G$. 
We show that $U = \{E_z \mid z \in S\}$ is a valid solution for \HCD on $H$. By construction, $H \setminus U$ is a hedge-subgraph of $H$. 
If $H \setminus U$ is not a cluster hedge-subgraph then there is a connected component 
$C = \{e_{x}, e_{xy}, e_{y}\}$ of $H\setminus U$ such that $E(C) \cap E(U) = \emptyset$. 
Since the two edges of $C$ belong to the hedges $E_x$ and $E_y$, we have $E_x, E_y \notin U$. Thus, $x,y \notin S$.
Then, however, there is an edge $\{x,y\}$ of $G$ which is not covered by a vertex of $S$ and we reach a contradiction. Therefore $H \setminus U$ results in a cluster hedge-subgraph and $|S|=|U|$. 

For the opposite direction, suppose that $U$ is a solution for \HCD.  
We claim that $S=\{z \mid E_z \in U\}$ is a vertex cover for $G$. 
By construction, for every edge $\{x,y\}$ of $G$ there is a vertex $e_{xy}$ in $H$ that is incident to two edges $\{e_{xy}, e_{x}\}$ and $\{e_{xy}, e_{y}\}$. The two edges belong to the hedges $E_x$ and $E_y$, respectively. Thus, for every connected component of $H$, at least one of its two hedges belongs to $U$, since every connected component is a $P_3$ in $H$. 
This means that for every edge $\{x,y\}$ of $G$ that corresponds to a connected component containing the vertex $e_{xy}$ of $H$, at least one of $E_x, E_y$ belongs to $U$, so that $x \in S$ or $y \in S$. 
Therefore $S$ is a vertex cover of $G$ and $|S|=|U|$. 
%
%
%
\end{proof}

Now we are ready to extend our result on a wide range of graph classes. 
We note that \Cref{lem:paths} alone is not enough to be applicable on larger underlying graphs (i.e., graphs containing union of $P_3$'s), since the rest of the edges need to be included in some of the hedges. 
For instance, connected graphs are not applicable by \Cref{lem:paths}, though they contain a large enough union of $P_3$'s.

Observe that for an underlying graph $G$, there are many hedge graphs (exponential in $m$) for which their underlying graph is isomorphic to $G$. We collect all such hedge graphs, as follows. 
Given a graph $G$, we denote by $\mathcal{H}(G)$ the class of hedge graphs that contains all hedge graphs for which their underlying graph is isomorphic to $G$. 

Given two graphs $F$ and $G$, an \emph{$F$-packing} in $G$ is a collection of pairwise vertex-disjoint copies of
$F$ in $G$. An $F$-packing in $G$ is called \emph{perfect} if it covers all the vertices of $G$. 
We write $F_3$ to denote the graph on three vertices that is connected, i.e., $F_3 \in \{P_3, K_3\}$. 
It should be noted that the decision problem of whether a graph $G$ has a perfect $F_3$-packing is NP-complete \cite{KirkpatrickH83}. However, there are simple classes of graphs in which a perfect $F_3$-packing can be computed efficiently, such as trees or complete graphs. 

Now we are ready to state our claimed NP-completeness result. We take advantage of a perfect $F_3$-packing of the underlying graph and spread the hedges in a particular way that exhibits the difficulty shown in \Cref{lem:paths}. 
We stress that we  do not claim NP-completeness on any non-cluster graph, but insist on non-cluster graphs for which a perfect $F_3$-packing can be computed in polynomial-time. 

\begin{theorem}\label{theo:largesubgraphs}
For any non-cluster graph $G$ with a perfect $F_3$-packing,
\HCD is NP-complete on the class of $\mathcal{H}(G)$ hedge graphs. 
%
%
\end{theorem}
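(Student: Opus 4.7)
We reduce from \VC to \HCD on $\mathcal{H}(G)$, extending the $\delta P_3$-based construction of \Cref{lem:paths}. Membership in NP is immediate from \Cref{obs:hedgesubgraphs}. Given a \VC instance $(G_0, k)$ with $m = |E(G_0)|$, we pick from the considered family a non-cluster graph $G$ whose perfect $F_3$-packing $\mathcal{P} = \{F^1, \ldots, F^\delta\}$ satisfies $\delta \geq m$. Each VC edge $e = \{x, y\}$ is injectively mapped to some component $F^{i_e}$; we label its three vertices as $(e_x, e_{xy}, e_y)$, choosing $e_{xy}$ to be the middle vertex whenever $F^{i_e}$ is a $P_3$; and we place the two gadget edges $\{e_x, e_{xy}\}, \{e_y, e_{xy}\}$ into `VC hedges' $E_x$ and $E_y$, exactly mirroring \Cref{lem:paths}. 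Every remaining edge of $G$---the chord $\{e_x, e_y\}$ of each $K_3$ packing component used above, the edges of the $\delta - m$ unused packing components, and each non-packing edge of $G$---is placed into an auxiliary hedge structure designed so that any valid hedge deletion must include all of these auxiliaries. The reduction outputs $H$ with target $k' = k + c$, where $c$ is the total number of auxiliary hedges, a quantity depending only on $G$ and computable in polynomial time.

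The forward direction mirrors \Cref{lem:paths}: given a vertex cover $S$ of $G_0$, take $U = \{E_z : z \in S\} \cup \{\text{all auxiliary hedges}\}$; then each gadget $F^{i_e}$ loses at least one of its two gadget edges (since $S$ covers $e$), each $K_3$ chord is removed via its auxiliary hedge (so the $K_3$ degenerates into at most a single edge after the gadget deletion), and every non-packing edge and unused-component edge disappears entirely. What survives in $H \setminus U$ is a disjoint collection of isolated edges and vertices, which is a cluster hedge-subgraph. For the backward direction, given a solution $U$ of size $\leq k'$, we normalize $U$ by a local-exchange argument so that it contains every auxiliary hedge; the remaining $\leq k$ hedges of $U$ are then of the form $E_z$, and exactly as in \Cref{lem:paths} one checks that $S = \{z : E_z \in U\}$ is a vertex cover of $G_0$.

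\textbf{The principal obstacle} is the backward-direction normalization step, especially in the presence of $K_3$ packing components. A $K_3$ whose three edges are all retained in $H \setminus U$ is a valid clique and hence a valid cluster piece: the naive setup admits solutions in which a $K_3$ gadget remains intact, contributing neither $E_x$ nor $E_y$ to $U$ and potentially leaving the corresponding VC edge $e = \{x, y\}$ uncovered. The remedy is to exploit the non-cluster hypothesis on $G$: combined with the perfect $F_3$-packing, the existence of at least one induced $P_3$ in $G$ guarantees that every $K_3$ gadget either shares a vertex with a non-packing edge or lies near another gadget, so that cross-component $P_3$'s route through it and force either a VC hedge or an auxiliary hedge into $U$. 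The auxiliary hedges must in turn be assigned (for example, by suitably sharing hedge indices between chords and non-packing edges) so that any such auxiliary-hedge removal can be traded, via the exchange step, for the removal of one of $E_x, E_y$ without increasing $|U|$. A careful case analysis on the type of each $F^{i_e}$ and on the local structure of $G$ around each gadget is what makes the correspondence tight; this analysis constitutes the technical heart of the proof.
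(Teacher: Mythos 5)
There is a genuine gap, and it sits exactly where you flagged it: the design of the auxiliary hedges and the backward-direction normalization. The paper's proof avoids both difficulties with a single idea that your proposal is missing: \emph{all} leftover edges (the chords of the used $K_3$ packing components, the edges of unused packing components, and every non-packing edge of $G$) are placed into \emph{one} auxiliary hedge, and --- because $G$ is non-cluster --- one arranges that the two edges of some induced $P_3=(x,y,z)$ of $G$ lie in this hedge (the packing components touching $\{x,y,z\}$ are simply excluded from the gadget mapping to guarantee this). That single hedge then spans an internal $P_3$ and is forced into \emph{every} solution by \Cref{obs:internalP3}; its removal deletes every $K_3$ chord, so no gadget triangle can survive intact, the target becomes $k+1$, and no exchange argument is needed. (The paper also reduces from \Cref{lem:paths} rather than from \VC directly, so that after the forced removal what remains is literally a $\delta P_3$ instance; this is a cosmetic difference.)

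Your proposed remedy does not substitute for this. First, the claim that ``every $K_3$ gadget either shares a vertex with a non-packing edge or lies near another gadget'' is false in general: take $G$ to be the disjoint union of one induced $P_3$ and many pairwise disjoint triangles. This $G$ is non-cluster and has a perfect $F_3$-packing, yet each triangle is an isolated component, touches no non-packing edge, and no cross-component $P_3$ routes through it; with locally-assigned auxiliaries such a gadget can remain an intact clique, leaving its \VC edge uncovered. The forcing must be \emph{global}, via a hedge that simultaneously contains the distant induced $P_3$ and the chord of every triangle gadget. Second, with $c>1$ auxiliary hedges and budget $k+c$, the normalization ``WLOG $U$ contains every auxiliary hedge'' is not a routine local exchange: if $U$ omits an auxiliary, adding it costs budget, and you must show the omission already forced at least one compensating \VC hedge into $U$ --- precisely the claim your case analysis would have to establish and does not. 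Collapsing to a single forced auxiliary hedge dissolves both problems at once; as written, your proof does not go through.
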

\begin{proof}
We give a polynomial reduction from the \HCD problem on hedge graphs proved to be NP-complete in \Cref{lem:paths}. 
Let $H'$ be a hedge graph with its underlying graph isomorphic to $\delta' P_3$ having $\ell'$ hedges and $n'$ vertices. 
As we claim the construction for any non-cluster underlying graph, we will use $H'$ only to define the hedges of the given underlying graph $G$.
Let $C_1, \ldots, C_{\delta}$ be the collection of the three-vertex connected subgraphs in the given perfect $F_3$-packing of $G$, where $n=3\delta$. 
Since $G$ is non-cluster, there is a $P_3 = (x,y,z)$ in $G$. 
We denote by $X$ be the set of the connected subgraphs of the collection having at least one vertex from the $P_3$, that is, $X = \{C_i \mid C_i\cap\{x,y,z\} \neq \emptyset\}$. Note that $|V(X)| \in \{3,6,9\}$. 
Now we describe the hedge graph $H$ such that $\delta = \delta' + |X|$ and $n = n' + |V(X)|$. 
Let $C_j \in \{C_1, \ldots, C_{\delta}\}\setminus X$. Observe that $C_j$ contains two or three edges. 
We arbitrarily map two edges of $C_j$ into two edges of a connected component of $H'$ (recall that $H' = \delta' P_3$). According to the mapping for each $C_j$, we construct $\ell'$ hedges $E_1, \ldots, E_{\ell'}$ in $H$ that respect the corresponding hedges of $H'$. 
For any other edge of $H$ that it is not mapped, we include it into a single hedge $E_{\ell'+1}$. In particular, notice that the two edges of the $P_3 = (x,y,z)$ belong to $E_{\ell'+1}$. 
Thus $H$ contains $\ell'+1$ hedges.   
%
%
Now we claim that $(H', k)$ is a yes-instance if and only if $(H, k+1)$ is a yes-instance for \HCD. 

If $U'$ is a solution for \HCD on $H'$ then $H \setminus (U' \cup \{E_{\ell'+1}\})$ is a cluster hedge-subgraph because all vertices of $V(X) = H-H'$ are trivial clusters and $H'$ is a cluster hedge-subgraph. 
For the opposite direction, let $U$ be a solution for \HCD on $H$. By \Cref{obs:internalP3}, we have $E_{\ell'+1} \subseteq U$ due to the internal $P_3$ induced by $\{x,y,z\}$. 
Thus $U \setminus \{E_{\ell'+1}\}$ contains $k$ hedges. Since $H'$ is a hedge-subgraph of $H$, we conclude that $H' \setminus (U \setminus \{E_{\ell'+1}\})$ is a cluster hedge-subgraph.  
\end{proof}

As an application of \Cref{theo:largesubgraphs}, we require hedge graphs for which their underlying graph admits a polynomial-time algorithm to compute a perfect $F_3$-packing. 
Typical examples of graph classes that fall in the previous characterization are long paths and almost cliques (graphs isomorphic to $K_n$ without a single edge). Both examples reveal an interesting complexity difference with the classical variation of the problem, because on such graphs \CD is known to be solved in polynomial time, whereas by \Cref{theo:largesubgraphs} \HCD remains NP-complete.  

We next conclude the section with a polynomial-time algorithm with respect to the $F_3$-packing characterization. 
The size of an $F_3$-packing refers to the number of vertex-disjoint subgraphs in the collection. 
For a graph $G$, we denote by $\delta_{\max}$ the size of the largest $F_3$-packing. If $G$ admits a perfect $F_3$-packing then $\delta_{\max} = n/3$, whereas if $G$ is a tree of height one then $\delta_{\max} = 1$. 
For a hedge graph $H$, we also write $\delta_{\max}$ to refer to the actual number of its underlying graph. 
We show that if $\delta_{\max}$ is constant then \HCD admits a polynomial-time solution. 
In particular, our claimed algorithm has running time $n^{O(\delta_{\max})}$.  
To mention few classes of graphs with small $\delta_{\max}$, we note complete bipartite graphs with bounded one side or trees with bounded number of vertex-disjoint $P_3$'s. 

\begin{theorem}\label{theo:dichotomy}
\HCD is polynomial-time solvable on hedge graphs for which $\delta_{\max} = O(1)$. 
\end{theorem}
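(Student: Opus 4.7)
The plan is to give an enumeration algorithm whose running time is $n^{O(\delta_{\max})}$, which is polynomial because $\delta_{\max}$ is a constant. First I would greedily compute a maximal $F_3$-packing of the underlying graph of $H$ and let $W$ be its set of vertices. Since any $F_3$-packing has size at most $\delta_{\max}$, we have $|W| \le 3\delta_{\max} = O(1)$; by maximality neither a $P_3$ nor a $K_3$ of $H$ lies entirely in $V \setminus W$, so $H - W$ is a disjoint union of a matching $M$ and isolated vertices.

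The key structural observation is that every cluster hedge-subgraph $G_S$ of $H$ has a very restricted form: each connected component $C$ of $G_S$ is a clique in $H$, so $C \cap (V \setminus W)$ is a clique in $H - W$, and since $H - W$ has no $K_3$ this forces $|C \cap (V \setminus W)| \le 2$. Consequently $G_S$ is encoded by a partition $\pi = (W_1, \ldots, W_r)$ of $W$, together with a set $A_i \subseteq V \setminus W$ of size at most two attached to each $W_i$ (so the clusters meeting $W$ are exactly the cliques $W_i \cup A_i$), and a partition of the leftover $L := V \setminus W \setminus \bigcup_i A_i$ into singletons and two-vertex clusters drawn from $M$.

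The algorithm enumerates all pairs $(\pi, A)$: there are $O(1)$ partitions $\pi$ (at most the Bell number of $|W|$) and each $A_i$ admits $O(n^2)$ choices, yielding $O(n^{2|W|})$ enumerations. For each $(\pi, A)$ I would (i) discard it unless every $W_i \cup A_i$ is a clique in $H$; (ii) classify each edge of $H$ as \emph{known-intra} (both endpoints in the same $W_i \cup A_i$), \emph{known-cross} (endpoints in two different already-committed clusters), or \emph{uncertain} (an edge of $M$ whose endpoints both lie in $L$); and (iii) decide the fate of every hedge $E_j$ by a local rule. If $E_j$ has both a known-intra and a known-cross edge, the enumeration is infeasible. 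If $E_j$ has a known-intra but no known-cross edge, all its uncertain edges must be paired and $E_j$ is kept. If $E_j$ has a known-cross but no known-intra edge, all its uncertain edges must be split and $E_j$ is removed. If $E_j$ has only uncertain edges, pair them all, which keeps $E_j$ (strictly better than splitting). Because $M$ is a matching, distinct uncertain edges are vertex-disjoint and each lies in exactly one hedge, so these local decisions never conflict on a matched pair. The cost of $(\pi, A)$ is the number of hedges marked removed, and the algorithm outputs the minimum over all feasible enumerations.

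Correctness follows from the structural characterization, which guarantees that every cluster hedge-subgraph is captured by some $(\pi, A)$, together with the optimality of the per-hedge rule. Each enumeration is processed in $O(m+n)$ time, giving a total running time of $n^{O(\delta_{\max})}$. The main obstacle is establishing the per-hedge rule rigorously: one must argue that the pair-versus-split verdicts on different leftover matched pairs are truly independent, which relies on the vertex-disjointness of uncertain edges inherited from the matching structure of $H - W$, and that the enumeration captures every optimal cluster hedge-subgraph, including those in which some leftover matched pairs are deliberately split rather than kept.
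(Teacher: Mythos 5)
Your proposal is correct and follows essentially the same route as the paper: compute a bounded-size $F_3$-packing with vertex set $W$, observe that $H-W$ is a disjoint union of a matching and isolated vertices, and enumerate the $n^{O(\delta_{\max})}$ ways to partition $W$ into clusters with at most two attached vertices from $V\setminus W$ each. The only difference is in the last step --- the paper checks each enumerated cluster subgraph for hedge-validity via \Cref{obs:hedgesubgraphs}, whereas you resolve the fate of each hedge and of the leftover matched pairs with an explicit forced/greedy per-hedge rule, which in fact treats the kept-versus-split decision for those pairs more explicitly than the paper's proof does.
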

\begin{proof}
Let $H$ be a hedge graph and let $\delta=\delta_{\max}$. 
We show the claimed polynomial-time algorithm whenever $\delta$ is a fixed small number. 
First we show how to compute such a collection of vertex-disjoint connected subgraphs on $3$ vertices of $H$ of size $\delta$. 
We enumerate all subgraphs of $H$ on three vertices in $O(n^3)$ time. From all three-vertex subgraphs with at least two edges, we choose $\delta$ to check if they are vertex-disjoint. In total, we need $O(n^{3\delta})$ time to compute the largest $\delta$ and the corresponding $F_3$-packing of $H$. 

We now describe how to compute a solution for \HCD. In fact, we will show that all possible cluster subgraphs of $H$ can be enumerated in polynomial time. This leads to a polynomial-time algorithm by \Cref{obs:hedgesubgraphs}. 
Let $H_{\delta}$ be the subgraph of $H$ that is isomorphic to $\delta P_3$. 
By the $F_3$-packing in $H$, such a subgraph always exist.  
The $3\delta$ vertices of $H_{\delta}$ are denoted by $V_{\delta}$. 
We partition the vertices of $V(H)\setminus V_{\delta}$ into two sets: $M$ the vertices that form a maximum matching in $H - H_{\delta}$ and $Q$ the vertices that do not belong to $M$. 
We claim that there are no other edges in $H[M \cup Q]$ except the ones belonging to the matching. 
To see this, observe the following: 
\begin{itemize}
    \item any edge between vertices of $Q$ results in a matching of larger size.
    \item any edge between vertices of $M$ that are not matched results in a larger $\delta P_3$.
    \item any edge between a vertex of $M$ and a vertex of $Q$ results in a larger $\delta P_3$.
\end{itemize}
Thus any claimed edge reaches a contradiction to the maximum $\delta$ or to the maximum matching of the vertices of $M$. 

This means that in any cluster subgraph of $H$ at most two vertices of $M \cup Q$ belong to the same cluster. 
We enumerate all possible partitions of $V_{\delta}$ and for each such partition we add at most two vertices from $M \cup Q$ into each set of the partition. Since the number of sets of a given partition is at most $3\delta$ and $|M|+|Q|\leq n$, we have at most $O((n^{2})^{3\delta})=O(n^{6\delta})$ such choices. For each possible choice, we keep the ones that form a cluster subgraph of $H$. Then we check in polynomial time if such a cluster subgraph is a valid solution for \HCD on $H$ by \Cref{obs:hedgesubgraphs}.
Therefore, since $\delta$ is a fixed small number, all steps can be performed in polynomial time. 
\end{proof}

\section{(In)approximability and Incompressibility}
Here we show that \HCD is \MHD-complete and has no polynomial kernel (or polynomial compression), unless NP $\subseteq$ coNP/poly. 
When we consider the approximability, we naturally refer to the optimization version of the problem, whereas for the kernelization we refer to the decision version parameterized by the size of the solution (i.e., the number of hedge removals).  
In both directions we will apply boolean constraint satisfaction problems. 
Before introducing some relevant notation, let us first describe where the difficulty occurs when approaching the problem in both situations. 

It is not difficult to see that \HCD is closely related to \VC (just consider only the $P_3$'s in \Cref{lem:paths}). However, it appears that one needs to solve a vertex cover in a ``dynamic manner'': whenever a cover solution is found (for $P_3$-freeness), its removal results in new $P_3$’s that are not nicely structured in the original graph (except the fact that they appear in $K_3$’s). In some sense, the obstacles can propagate in the graph under each hedge removal. This chain of implications might result in a solution that is far from optimal and not within a reasonable guarantee. 
Thus, the main obstacle is that of introducing new induced copies of $P_3$ when we delete hedges, which causes a propagation of hedge deletions. 
This propagation phenomenon also arises in CSP problems when assigning \texttt{true} to Boolean variables in certain formulas that we describe next. 

We employ standard notation related to constraint satisfiability problems. 
A CNF formula $\phi$ is a conjuction of clauses where each clause is a disjunction of literals. 
A satisfying assignment for $\phi$ is an assignment on its Boolean variables such that each clause in $\phi$ is satisfied. 
For each Boolean variable, we interchange the value of one to represent \texttt{true} and zero to represent \texttt{false}. 
The \emph{degree of a Boolean variable} is the number of its occurrences in $\phi$ and $\phi$ is called \emph{3-regular} if all of its variables have degree 3. Moreover, the \emph{weight} of a truth assignment for $\phi$ is the number of ones in the assignment. 

For each clause in $\phi$, we will apply a Boolean constraint or function $f$ as follows. As we deal with 3CNF, that is, each clause is the disjunction of at most three literals, for $f(x,y,z)$ the conjuctive formula $\phi$ is of the form $
f(x_1,y_1,z_1) \wedge \cdots \wedge f(x_m,y_m,z_m)$,
 where each $f(x_i,y_i,z_i)$ is a clause of $\phi$. 

\subsection{(In)approximability of \HCD}\label{sec:inapprox}
In this section, we consider the approximability of \HCD. Our goal is to settle the approximation complexity of the problem with respect to poly(OPT)-approximation (i.e., an algorithm that returns a solution of cost bounded by some polynomial function of the optimum) and poly-APX-hardness. However we are able to show that \HCD is poly-APX-hard if and only if each problem of a particular class, known as \MHD-complete problems, is poly-APX-hard. 

Let us formally define the notion that preserves \MHD-completeness. 
The $A$-reduction applies on combinatorial optimization problems in which (i) their instances and solutions can be recognized in polynomial time, (ii) their solutions are polynomially bounded in the input size, and (iii) the objective function can be computed in polynomial time from an instance and a solution. 
%
\begin{definition}[$A$-reduction]\label{def:A-reducibility}
A problem $P$ \emph{$A$-reduces} to a problem $Q$ if there exist two polynomial-time computable functions $F$ and $G$ and a constant $\alpha$ such that for any instance ${I}$ of $P$ the following hold:
\begin{enumerate}[label = (\arabic*)]
    \item $F({I})$ is an instance of $Q$;
    \item $G({I}, S')$ is a feasible solution for ${I}$, where $S'$ is any feasible solution for $F({I})$;
    \item if $S$ is an $r$-approximate solution for $F({I})$, then $G({I}, S)$ is an $(\alpha \cdot r)$-approximate solution for ${I}$, for any $r \geq 1$.
\end{enumerate}
\end{definition}

Though $A$-reductions preserve approximability problems up to a constant factor (or higher), we will actually apply $A$-reductions as natural polynomial-time reductions. 
In fact we show that there is a polynomial-time reduction from a \MHD-complete problem to \HCD (and vice versa), that preserves the same size of the solutions. 
Our starting point is the \MHD-complete problem \MinO$(\mathcal{F})$: given a set of boolean variables and a family of characteristic constraints $\mathcal{F}$ over the variables, find a variable assignment that satisfy all of the constraints while minimizing the number of variables set to one (i.e., variables assigned to \texttt{true}). We refer to \emph{characteristic constraints}, to mean that every constraint $f$ is constructed from $\mathcal{F}$ by applying certain rules on some tuple of the variables. 
To be more precise, we define the necessary properties of boolean constraints. 
\begin{itemize}
    \item a constraint $f$ is \emph{0-valid} if the all-zeroes assignment satisfies it.
    \item a constraint $f$ is \emph{IHS-$B^+$} if it can be expressed using a CNF formula in which the clauses are all of one of the following types: $x_1 \wedge \cdots \wedge x_k$ for some positive integer $k \leq B$, or $\neg x_1 \wedge  x_2$, or $\neg x_1$.
    IHS-$B^-$ constraints are defined analogously, with every literal being replaced by its complement. Moreover, IHS-$B$ constraint is either IHS-B$^+$ or IHS-B$^-$. 
    \item a constraint $f$ is \emph{weakly positive} if it can be expressed using a CNF formula that has at most one negated variable in each clause.
\end{itemize}

The above constraints can be extended to a family of constraints.
For a family of constraints $\mathcal{F}$, we say that $\mathcal{F}$ is one of the above classes if all of its constraints belong to the same class.  
Observe that $A$-reductions are transitive. 
For two problems $P$ and $Q$, we say that $P$ is \emph{$Q$-complete under $A$-reductions} if there are $A$-reductions from $P$ to $Q$ and from $Q$ to $P$. 
Now we are in position to formally state the following result. 

\begin{theorem}[\cite{KhannaSTW01}]\label{theo:minones}
If a family of constraints $\mathcal{F}$ is weakly positive, but it is neither 0-valid nor IHS-$B$ for any constant $B$, then \MinO$(\mathcal{F})$ is \MHD-complete under $A$-reductions. 
Consequently, it is NP-hard to approximate \MinO$(\mathcal{F})$ within factor 
$2^{O(\log^{1-\epsilon} n)}$ for any $\epsilon >0$, where $n$ is the number of variables in the given instance. 
\end{theorem}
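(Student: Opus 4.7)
The plan is to establish \MHD-completeness of \MinO$(\mathcal{F})$ by exhibiting $A$-reductions in both directions and then to transfer the known inapproximability of \MHD onto \MinO$(\mathcal{F})$ by composition. The easy direction is an $A$-reduction from \MinO$(\mathcal{F})$ to \MHD: since $\mathcal{F}$ is weakly positive, every $f \in \mathcal{F}$ admits a CNF representation with at most one negated literal per clause, so I would expand each application of $f$ into its clauses, split clauses of length greater than three by the standard implicational encoding with fresh carrier variables, and pad length-$2$ positive clauses to length three with a dummy variable fixed to $0$; unary clauses already fit the \MHD format. The auxiliary variables contribute at most a constant factor to the objective, so the optima of the two instances coincide up to that factor.

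The reverse direction, from \MHD to \MinO$(\mathcal{F})$, is the main obstacle and rests on an expressiveness lemma: I would show that if $\mathcal{F}$ is neither $0$-valid nor IHS-$B$ for any constant $B$, then every relation appearing in an \MHD clause is primitive-positive definable in $\mathcal{F}$ using polynomially many existentially quantified variables. I would proceed by case analysis along Post's lattice. The failure of $0$-validity, combined with weak positivity, already supplies a gadget pinning some variable to $1$ and binary implications $x \rightarrow y$. The failure of IHS-$B$ for every $B$ is what forces $\mathcal{F}$ to realize a ternary clause of the form $\neg x \vee y \vee z$: one must rule out that $\mathcal{F}$ collapses into some bounded-positive-width class, and this is precisely where the polymorphism-based characterization of weakly positive clones is crucial.

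Once the expressiveness lemma is available, gluing gadgets replaces each \MHD clause in the source instance with a small \MinO$(\mathcal{F})$ subinstance; since the number of auxiliary variables is polynomial and their weight contribution is controlled by a constant factor, this is an $A$-reduction. Composing it with the PCP-based inapproximability of \MHD (via Label-Cover-type hardness) transfers the $2^{O(\log^{1-\epsilon} n)}$ lower bound to \MinO$(\mathcal{F})$, after verifying that the polynomial blow-up of the reduction is absorbed inside the $\log^{1-\epsilon}$ factor.
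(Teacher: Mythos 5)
The paper does not actually prove this statement: \Cref{theo:minones} is imported verbatim from Khanna, Sudan, Trevisan, and Williamson \cite{KhannaSTW01}, so there is no in-paper proof to compare against. Your outline does track the general shape of the argument in that reference: one $A$-reduction from \MinO$(\mathcal{F})$ to \MHD obtained by expanding each weakly positive constraint into clauses with at most one negated literal, and one in the reverse direction obtained by showing that such an $\mathcal{F}$ can express the basic Horn-deletion constraints.

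As a proof, however, the proposal has a genuine gap exactly where the theorem's content lies. The ``expressiveness lemma'' you invoke --- that failure of $0$-validity together with failure of IHS-$B$ for every constant $B$ forces $\mathcal{F}$ to realize $\neg x \vee y \vee z$, the constant $1$, and binary implications --- is the entire technical heart of the classification for this case, and you assert it rather than prove it. Moreover, primitive-positive definability is not the right notion: in a \MinO$(\mathcal{F})$ instance the existentially quantified auxiliary variables are charged to the objective, so one needs the notion of a (strict or perfect) \emph{implementation} from \cite{KhannaSTW01}, which additionally guarantees that some optimal extension assigns the auxiliaries weight bounded relative to the weight of the original variables. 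Your remark that the auxiliary weight is ``controlled by a constant factor'' is precisely the property that must be engineered gadget by gadget; without it, condition (3) of \Cref{def:A-reducibility} fails and the reduction does not preserve approximation ratios. Finally, the $2^{O(\log^{1-\epsilon} n)}$ hardness of \MHD itself is another nontrivial theorem of \cite{KhannaSTW01}, established by a self-improving reduction, not a black box that follows immediately from Label Cover. None of these steps is wrong in spirit, but each is a missing proof rather than a completed one, which is presumably why the paper cites the result instead of reproving it.
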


We turn our attention to \HCD. For a hedge graph $H$ we enumerate all $P_3$'s and $K_3$'s and apply \Cref{lem:allPandK}. 
We will build a family of constraints in a way that every boolean variable $x$ corresponds to exactly one hedge $E_x$ and a variable $x$ is set to $1$ if and only if the corresponding hedge $E_x$ is removed.   
Depending on the number of hedges that cover each $P_3$ and $K_3$, we define the following constraints: 
\begin{itemize}
    \item $f_1(x_1, x_2, x_3)$ is equal to zero if and only if exactly one of the variables $x_1,x_2,x_3$ is set to $1$. That is, $f_1(x_1, x_2, x_3) = (\lnot x_1 \lor x_2 \lor x_3) \land (x_1 \lor \lnot x_2 \lor x_3) \land (x_1 \lor x_2 \lor \lnot x_3)$. We will use $f_1$ to encapsulate a $K_3$ that is covered by exactly three hedges.
    \item $g_1(x_1,x_2)$ is equal to zero if and only if exactly $x_1$ is set to $1$. That is, $g_1(x_1,x_2)=(\lnot x_1 \lor x_2)$. We will use $g_1$ to encapsulate a $K_3$ that is covered by exactly two hedges in which the two edges of the $K_3$ belong to the hedge ${x_1}$. 
    \item $g_2(x_1,x_2)$ is equal to zero if and only if both $x_1$ and $x_2$ are set to $0$. That is, $g_2(x_1,x_2)=(x_1 \lor x_2)$. We will use $g_2$ to encapsulate a $P_3$ that is covered by two hedges. 
    \item $f_2(x)=x$. We will use $f_2$ to encapsulate a $P_3$ that is covered by exactly one hedge.     
\end{itemize}

Notice that we do not define a constraint for a triangle that is covered by exactly one hedge.  


\begin{lemma}[\cite{BliznetsCKP18}]\label{lemma:firstFproperties}
The family of constraints $\mathcal{F'}=\{f_1, f_2\}$ is weakly positive, and at the same time it is
neither 0-valid nor IHS-$B$ for any $B$. 
\end{lemma}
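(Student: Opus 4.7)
The plan is to verify each of the three properties of $\mathcal{F'}=\{f_1,f_2\}$ separately, using the explicit CNF representations of $f_1$ and $f_2$ as the starting point. Conditions (weakly positive) and (not 0-valid) are inspection checks: for weakly positiveness, observe that each of the three clauses of $f_1$ contains exactly one negated literal, and $f_2=x$ is a unit positive clause, so each clause of every constraint in $\mathcal{F'}$ has at most one negated variable. For non 0-validity, simply evaluate $f_2$ at $x=0$ to get $0$, so the all-zeros assignment fails to satisfy $f_2$.

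The core of the proof is showing that $\mathcal{F'}$ is not IHS-$B$ for any $B$. Since $f_2$ is both IHS-$B^+$ and IHS-$B^-$ for any $B\ge 1$, the burden falls on $f_1$: I will show that $f_1$ admits no CNF representation in IHS-$B^+$ form and none in IHS-$B^-$ form. The key observation is to compute the satisfying set of $f_1$: the only non-satisfying assignments are the three unit-weight vectors $(1,0,0), (0,1,0), (0,0,1)$, so both $(0,0,0)$ and $(1,1,1)$ satisfy $f_1$, and the three weight-two vectors also satisfy it.

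For the IHS-$B^+$ case, the allowed clause types are purely positive disjunctions $(x_{i_1}\vee\cdots\vee x_{i_k})$ of size at most $B$, implications $(\neg x_i\vee x_j)$, and unit negatives $(\neg x_i)$. Since $(1,1,1)$ satisfies $f_1$, no unit negative $(\neg x_i)$ can appear, and no implication $(\neg x_i\vee x_j)$ with $i\neq j$ can appear because for each ordered pair there is a satisfying assignment of $f_1$ of weight $2$ that violates it (e.g.\ $(1,0,1)$ violates $x_1\Rightarrow x_2$, and similarly for the other five pairs). On the other hand, since $(0,0,0)$ satisfies $f_1$, no non-empty purely positive clause may appear either. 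Thus any IHS-$B^+$ formula equivalent to $f_1$ consists solely of trivially satisfied clauses, which would make it constant true and admit $(1,0,0)$ as a satisfying assignment, a contradiction. The argument for IHS-$B^-$ is entirely symmetric (swap roles: $(0,0,0)$ kills positive units and implications of the form $(x_i\vee\neg x_j)$, while $(1,1,1)$ kills every non-empty purely negative clause). This rules out IHS-$B$ for all $B$.

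The only mildly delicate step is enumerating the ruled-out clause types in the IHS-$B^\pm$ analysis, and this is best handled by a clean case split based on the two witness assignments $(0,0,0)$ and $(1,1,1)$ of $f_1$; no routine calculation beyond the six implication checks is required. The other two properties are immediate from the definitions of $f_1$ and $f_2$.
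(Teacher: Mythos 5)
Your proof is correct, but there is nothing in the paper to compare it against: \Cref{lemma:firstFproperties} is stated with a citation to \cite{BliznetsCKP18} and imported as a black box, with no in-paper argument (the paper only re-uses it to prove \Cref{lemma:secondFproperties} for the larger family $\mathcal{F}''$). Your self-contained verification is the standard one and all its steps check out: weak positivity and non-0-validity are read off the given CNF of $f_1$ and from $f_2(0)=0$ (which suffices for the family, since a family is 0-valid only if every member is); and for the IHS-$B$ part you correctly reduce to $f_1$ alone and use the two witness assignments $(0,0,0)$ and $(1,1,1)$ together with the three weight-two satisfying vectors to rule out, one by one, every admissible clause type in an IHS-$B^+$ (respectively IHS-$B^-$) representation, forcing any such representation to be a tautology and contradicting $f_1(1,0,0)=0$. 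Crucially, you quantify over \emph{all} candidate CNF representations rather than just the displayed one, which is the point a careless argument would miss. One minor remark: the paper's definition of IHS-$B^+$ writes the clause types with $\wedge$ where the standard definition (and your proof) uses disjunctions; your reading is the intended one and is the one under which \Cref{theo:minones} applies.
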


Therefore, \Cref{theo:minones} and \Cref{lemma:firstFproperties} imply that \MinO($\mathcal{F'}$) is \MHD-hard under A-reductions. We now show the first reduction from \MinO($\mathcal{F'}$). 

\begin{lemma}\label{lemma:monestohcd}
Let $\mathcal{F'}=\{f_1, f_2\}$. 
Then, \MinO$(\mathcal{F'})$ $A$-reduces to \HCD. 
\end{lemma}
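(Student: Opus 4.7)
The plan is to turn each boolean variable of the \MinO$(\mathcal{F'})$ instance into a distinct hedge of a hedge graph, and encode each constraint as a small three-vertex gadget in the spirit of \Cref{lem:allPandK}. Specifically, given an input formula $\phi$ over variables $x_1,\ldots,x_n$, I would open for each variable $x_i$ an (initially empty) hedge $E_{x_i}$; then for each clause $f_1(x_i,x_j,x_k)$ I would add a fresh triangle on three new vertices whose three edges are placed respectively in $E_{x_i}$, $E_{x_j}$, and $E_{x_k}$, and for each clause $f_2(x_i)$ I would add a fresh $P_3$ on three new vertices with both of its edges placed in $E_{x_i}$ (making it an internal $P_3$). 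Because gadgets are built on pairwise disjoint triples of fresh vertices, the hedges form a valid partition of $E(H)$ and the construction is polynomial-time, which yields the function $F$ of an $A$-reduction. The recovery function $G$ maps a set $U$ of removed hedges to the assignment $\tau_U$ with $\tau_U(x_i)=1$ iff $E_{x_i}\in U$.

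The core of the argument is a gadget-local correspondence. For a triangle gadget associated with $f_1(x_i,x_j,x_k)$, removing $0$, $2$, or $3$ of the three incident hedges leaves a cluster subgraph on the gadget (a $K_3$, a single edge, or three isolated vertices, respectively), while removing exactly one hedge leaves a $P_3$; this exactly matches the definition of $f_1$ being satisfied iff not exactly one of $x_i,x_j,x_k$ equals $1$. For a $P_3$ gadget associated with $f_2(x_i)$, \Cref{obs:internalP3} forces $E_{x_i}\in U$, which is exactly the condition $f_2(x_i)=x_i$ being satisfied by $\tau_U$. Hence a hedge subset $U$ gives a cluster hedge-subgraph of $H$ if and only if $\tau_U$ satisfies every clause of $\phi$, so feasible solutions on the two sides are in bijection.

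Cost preservation follows from the fact that the hedges $E_{x_1},\ldots,E_{x_n}$ are pairwise distinct and in one-to-one correspondence with the variables, so $|U|$ equals the number of ones in $\tau_U$; therefore optima coincide and any $r$-approximate solution for $F(I)$ transfers to an $r$-approximate assignment for $I$, giving an $A$-reduction with constant $\alpha=1$. The one point that requires some care is checking that the encoding does not create spurious $P_3$'s or $K_3$'s that are not accounted for by a single clause gadget: this is ensured by the vertex-disjointness of gadgets, so every $P_3$ and $K_3$ of $H$ lies inside a single gadget and the cluster condition decomposes clause by clause. Beyond this bookkeeping, there is no substantive obstacle; the content of the proof really lies in verifying that the truth tables of the triangle and internal-$P_3$ gadgets match $f_1$ and $f_2$ exactly.
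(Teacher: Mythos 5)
Your proposal is correct and follows essentially the same route as the paper: the identical variable-to-hedge, $f_1$-to-fresh-triangle, and $f_2$-to-internal-$P_3$ construction, the same gadget-local case analysis (removing $0$, $2$, or $3$ hedges of a triangle is fine while removing exactly one leaves a $P_3$), the same appeal to \Cref{obs:internalP3}, and the same cost-preserving conclusion with $\alpha=1$.
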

\begin{proof}
    We give a polynomial-time computable transformation from an instance of \MinO$(\mathcal{F'})$ to an instance of \HCD.
    Given a formula $\phi$, for each variable $x$ in $\phi$ we create a hedge in $H$, denoted by $X$. 
    Now, for every constraint, we construct a specific $K_3$ or $P_3$ in $H$.
    \begin{itemize}
        \item For every constraint $f_1(x_1, x_2, x_3)$, we construct a vertex-disjoint $K_3$ in $H$, where its three edges belong to $X_1,X_2,X_3$, respectively. 
        \item For each constraint $f_2(x)$, we construct a vertex-disjoint $P_3$ whose two edges belong to hedge $X$.
    \end{itemize}
    Observe that there is a one-to-one correspondence between the hedges of $H$ and the variables of $\phi$. Moreover, every connected component is $K_3$ or $P_3$ and the number of connected components is at most the number of constraints.
    
    We claim that there is a satisfying assignment for $\phi$ that sets exactly $k$ variables to 1 if and only if there is a set of hedges $U$ in $H$ such that $H \setminus U$ is a cluster hedge-subgraph of $H$ and $|U|=k$. Notice that the constant $\alpha$ of the A-reduction given in \Cref{def:A-reducibility} is equal to 1.
    
    Assume that there exists a satisfying assignment for $\phi$ with exactly $k$ variables set to $1$. We show that it is enough to delete the corresponding hedges from $H$, that is $U = \{X \mid x=1\}$.
    We know that for each constraint of type $f_1$, the formula is evaluated to $0$ if and only if exactly one of the variables $x_1, x_2, x_3$ is set to $1$. Consequently:
    \begin{itemize}
        \item If none of the variables are set to $1$, no hedge is deleted from the corresponding $K_3$ in $H$.
        \item If exactly two variables are set to $1$, the two corresponding hedges are deleted, leaving a $K_2$ and an isolated vertex in $H$, which constitute a valid cluster.
        \item If all three variables are set to $1$, all hedges of the corresponding $K_3$ are deleted, resulting in three isolated vertices, each of which forms a valid cluster.
    \end{itemize}
    
    For each constraint of type $f_2$, the corresponding variable is always set to $1$. Consequently, the associated hedge is deleted, and each initial $P_3$ of the construction corresponds to three isolated vertices, since both of its edges belong to the same hedge, which is guaranteed to be removed.
    
    Hence, no new $P_3$ structures can emerge from any $K_3$ in $H$ after the deletions, as the presence of such a path would imply that, for some constraint of type $f_1$, exactly one variable was assigned the value $1$, which contradicts the assumption that $\phi$ is satisfied.

    For the opposite direction, assume a solution $U$ for \HCD on $H$ with exactly $k$ hedges. We claim that for every hedge of $U$, it is enough to set the corresponding variable to 1. 
Let $x$ be a variable that is set to 1 with the corresponding hedge $X \in U$.    
If the constraint is of type $f_2$, then there is a connected component in $H$ isomorphic to $P_3$ with both of its edges belonging to the hedge $X$. Thus $X \in U$ which means that $f_2$ is satisfied. 
Now consider a constraint of type $f_1$. Assume that there is a constraint $f_1$ that is not satisfied. By definition, $f_1$ is equal to zero if and only if exactly one of its three variables is set to 1. Without loss of generality, assume that $x_1 \in U$ and $x_2,x_3 \notin U$. Then, however, the corresponding connected component of $f_1$ spans a $P_3$ in $H \setminus U$ which contradicts the fact that $H \setminus U$ is a cluster hedge-subgraph. 
Therefore, the claimed assignment satisfies $\phi$, as desired. 
%
%
\end{proof}

Next we consider the completeness with respect to $A$-reductions. We extend the family $\mathcal{F'}$ to include all defined constraints. 

\begin{lemma}\label{lemma:secondFproperties}
The family of constraints $\mathcal{F''}=\{f_1, g_1, g_2, f_2\}$ is weakly positive, and at the same time it is
neither 0-valid nor IHS-$B$ for any $B$. 
\end{lemma}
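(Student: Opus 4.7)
The plan is to verify the three required properties by direct inspection, inheriting the non-IHS-$B$ part from \Cref{lemma:firstFproperties} since $\mathcal{F''}\supseteq\mathcal{F'}=\{f_1,f_2\}$. The enlargement only adds two binary constraints $g_1$ and $g_2$, both of which are very mild in structure, so the work reduces to checking that they do not destroy any property already enjoyed by $\mathcal{F'}$, and that at least one constraint in the family witnesses the required failures.

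For weak positivity, I would examine each of the four constraints against the CNF representation given in the definition and confirm that every clause contains at most one negated literal. The constraint $f_1$ is displayed as a conjunction of three clauses, each carrying exactly one negation. The constraint $g_1=\neg x_1\vee x_2$ is a single clause with one negated variable, while $g_2=x_1\vee x_2$ and $f_2=x$ contain no negations at all. Hence $\mathcal{F''}$ is weakly positive.

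For the not 0-valid property, it suffices to exhibit one constraint whose all-zeroes input is falsifying. Both $g_2(0,0)=0\vee 0=0$ and $f_2(0)=0$ witness this, so the family is not 0-valid. For the not IHS-$B$ property, I would invoke \Cref{lemma:firstFproperties}: it already establishes that $\mathcal{F'}=\{f_1,f_2\}$ is not IHS-$B$ for any $B$. Since being IHS-$B$ requires every constituent constraint of the family to admit the restricted CNF form under a common sign, and $\mathcal{F''}$ still contains the obstructions $f_1,f_2$ that already defeat $\mathcal{F'}$, the conclusion carries over to $\mathcal{F''}$.

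I do not anticipate a significant obstacle; the statement is essentially a bookkeeping extension of \Cref{lemma:firstFproperties}. The only step that would be nontrivial in isolation is the IHS-$B$ refutation for $f_1$, which one would argue by noting that the satisfying set of $f_1$ is not closed under componentwise AND (e.g.\ $(1,1,0)\wedge(1,0,1)=(1,0,0)$ falsifies $f_1$), ruling out IHS-$B^+$, together with a short case analysis on the two-literal clauses $x_i\vee\neg x_j$ to rule out IHS-$B^-$. Since this is already done inside \Cref{lemma:firstFproperties}, our proof only needs to cite it.
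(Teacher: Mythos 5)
Your proof is correct and follows essentially the same route as the paper: inherit the failure of 0-validity and of IHS-$B$ from \Cref{lemma:firstFproperties} via the inclusion $\mathcal{F'}\subset\mathcal{F''}$, and check weak positivity directly by observing that each clause of $g_1$ and $g_2$ (and of $f_1,f_2$) has at most one negated literal. The extra witnesses you supply (e.g.\ $g_2(0,0)=0$ and the closure-under-AND obstruction for $f_1$) are harmless additions beyond what the paper records.
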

\begin{proof}
Observe that $\{f_1, f_2\} = \mathcal{F'} \subset \mathcal{F''}=\{f_1, g_1, g_2, f_2\}$. 
Thus $\mathcal{F}''$ is neither 0-valid nor IHS-$B$ for any $B$ by \Cref{lemma:firstFproperties}. 
We show that $\mathcal{F}''$ is weakly positive. Each clause appearing in $g_1$ and $g_2$ contains at most one negated literal. Since $f_1, f_2$ are weakly positive by \Cref{lemma:firstFproperties}, we conclude that 
the entire family $\mathcal{F}''$ satisfies the claimed properties.
%
%
\end{proof}

Again, notice that \Cref{theo:minones} and \Cref{lemma:secondFproperties} show that \MinO($\mathcal{F''}$) is \MHD-hard under A-reductions. Next, we show a reduction in the other direction, from \HCD to \MinO$(\mathcal{F''})$. 

\begin{lemma}\label{lemma:hcdtomones}
Let $\mathcal{F''}=\{f_1, g_1, g_2, f_2\}$. Then, 
\HCD $A$-reduces to \MinO$(\mathcal{F''})$.  
\end{lemma}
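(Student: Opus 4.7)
The plan is to give a polynomial-time, weight-preserving transformation that, together with the natural back-map, realises an $A$-reduction (with constant $\alpha = 1$) from \HCD to \MinO$(\mathcal{F''})$. The functions $F$ and $G$ of \Cref{def:A-reducibility} will be constructed so that an assignment $\sigma$ satisfies the produced formula $\phi$ if and only if the set of hedges whose variables are set to $1$ is a valid \HCD-solution, and the weight of $\sigma$ equals the size of that solution; this immediately gives items~(2) and~(3) of the definition.

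First I would apply \Cref{lem:allPandK} to replace $H$ by an equivalent hedge graph $H'$ in which every connected component has exactly three vertices and is therefore isomorphic to $P_3$ or $K_3$, with the same hedge set. For each hedge $E$ of $H'$ I introduce a boolean variable $x_E$, read as ``$E$ is deleted''. Then I would traverse the connected components of $H'$ and, depending on the isomorphism type of the component and the number of hedges that cover its edges, add one constraint from $\mathcal{F''}$ per component: a $P_3$ whose two edges lie in a single hedge $E$ (an internal $P_3$ in the sense of \Cref{obs:internalP3}) contributes $f_2(x_E)$; a $P_3$ covered by two hedges $E_1, E_2$ contributes $g_2(x_{E_1}, x_{E_2})$, which forbids keeping both; a $K_3$ covered by three distinct hedges $E_1, E_2, E_3$ contributes $f_1(x_{E_1}, x_{E_2}, x_{E_3})$, whose zeros are exactly the assignments that delete precisely one edge and leave a $P_3$; a $K_3$ covered by exactly two hedges contributes a $g_1$-constraint instantiated so that the first argument is the variable of the singleton-edge hedge and the second is the variable of the two-edge hedge, so that the unique forbidden configuration corresponds to deleting the singleton hedge while keeping the other (leaving an induced $P_3$ on the two surviving edges); and a $K_3$ all of whose edges belong to a single hedge contributes no constraint at all, since deleting that hedge leaves three isolated vertices while keeping it leaves a clique.

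For the backward map I would set $G(H,\sigma) := \{E : \sigma(x_E) = 1\}$. Correctness reduces to a finite per-component case analysis: for each of the five patterns above, the contributed constraint (or its absence) is by design exactly the one whose falsifying assignments match the deletion patterns that leave an induced $P_3$ on that component; and since distinct components of $H'$ are vertex-disjoint, no induced $P_3$ can straddle two of them, so the local checks suffice globally. Weights are preserved by construction, so any $r$-approximate minimum-ones assignment is transported to an $r$-approximate hedge-deletion solution. The one delicate step that I expect to require care is fixing the orientation of the asymmetric constraint $g_1$ in the two-hedge $K_3$ case so that the forbidden pair $(1,0)$ aligns with the unique bad deletion pattern (remove singleton hedge, keep two-edge hedge); once this orientation is correct and the vacuous one-hedge $K_3$ case is separated out, the remaining verification is mechanical. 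Combined with \Cref{lemma:monestohcd} and \Cref{lemma:secondFproperties}, this direction completes the $A$-reduction equivalence between \HCD and the \MHD-complete problem \MinO$(\mathcal{F''})$.
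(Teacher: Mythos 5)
Your proposal is correct and follows essentially the same route as the paper's proof: apply \Cref{lem:allPandK}, introduce one variable per hedge, emit one constraint of $\mathcal{F''}$ per three-vertex component according to its type and hedge count (with the one-hedge $K_3$ contributing nothing), and verify the weight-preserving equivalence component by component with $\alpha=1$. You even orient $g_1$ exactly as the paper's proof does (first argument the singleton-edge hedge, second the two-edge hedge), which is the correct choice for forbidding the unique bad deletion pattern.
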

\begin{proof}

Let $H$ be a hedge graph given as an instance of \HCD. By \Cref{lem:allPandK}, we assume that every connected component of $H$ contains exactly three vertices. We show how to construct an instance of \MinO$(\mathcal{F}'')$ from $H$. 
    In order to construct a formula $\phi$ we consider every $K_3$ or $P_3$ in each connected component of $H$ and add the appropriate constraint according to the configuration of their hedges:
    \begin{itemize}
        \item For every $K_3$ whose three edges belong to distinct hedges, we add the constraint $f_1(x_1, x_2, x_3)$, where the variables $x_1$, $x_2$, and $x_3$ correspond to hedges $X_1$, $X_2$, and $X_3$, respectively.
    
        \item For every $K_3$ where two edges belong to the same hedge and the remaining edge belongs to a different hedge, we add the constraint $g_1(x_1, x_2)$, where $x_2$ represents the hedge $X_2$ containing two of the edges and $x_1$ represents the hedge $X_1$ containing the remaining edge.

        \item For every $K_3$ whose three edges belong to a single hedge, we do not add any constraint and ignore such a connected component.
        
        \item For every $P_3$ whose two edges belong to different hedges, we add the constraint $g_2(x_1, x_2)$, where the variables $x_1$ and $x_2$ correspond to hedges $X_1$ and $X_2$, respectively.
    
        \item For every $P_3$ whose two edges both belong to the same hedge, we add the constraint $f_2(x)$, where $x$ represents the hedge $X$.
    \end{itemize}
This completes the construction of the instance $\phi$ for \MinO$(\mathcal{F}'')$ (see also \Cref{fig:MinOnes-and-Propagation}). It is not difficult to see that all steps can be performed in polynomial time. 

Let $\ell$ be the number of hedges in $H$. We claim the following: for any $I \subseteq \{1, \ldots, \ell\}$, 
$H \setminus U$ where $U = \{X_i \mid i \in I\}$ is a cluster hedge-subgraph if and only if the assignment $\{x_i=1 \mid i \in I\}$ satisfies $\phi$. 

If $U$ is a solution for \HCD on $H$ then the original $P_3$'s in $H$ are not present in $H \setminus U$. This means that the constraints $g_2$ and $f_2$ have at least one of their variables set to 1. Now consider the constraints $f_1$ and $g_1$ that correspond to connected components isomorphic to $K_3$. For a $K_3$, let $H(K_3)$ be the set of hedges covered by the $K_3$. If $|H(K_3)|=3$ then we know that one of the following cases occur in $H \setminus U$: $|H(K_3) \cap U| \in \{0,2,3\}$. This is exactly described by the constraint $f_1$, whenever $f_1$ is satisfied. Now assume that $|H(K_3)|=2$ and let $X_1,X_2$ be the hedges covered by the $K_3$, where $X_2$ covers two of the edges of the $K_3$. If $X_1 \in U$ and $X_2 \notin U$ then $H \setminus U$ is not a cluster hedge-subgraph as there is a $P_3$ spanned by edges of $X_2$. Otherwise, constraint $g_1$ is satisfied. Therefore, the claimed assignment satisfies $\phi$.

For the opposite direction, assume that the assignment $\{x_i=1 \mid i \in I\}$ satisfies $\phi$. 
We show that every connected component of $H$ is a cluster graph in $H \setminus U$. 
Constraints $g_2$ and $f_2$ guarantee that each connected component of $H$ isomorphic to $P_3$ in $H$ does not contain at least one of its two edges in $H \setminus U$. For every connected component of $H$ isomorphic to $K_3$, we apply constraints $f_1$ or $g_1$ depending on the number of hedges that covers the $K_3$. We note that if the $K_3$ is covered by exactly one hedge $X$ then regardless of whether $X \in U$ or not, the three vertices of the connected component induce a cluster graph in $H \setminus U$. If the $K_3$ is covered by two hedges $X_1,X_2$ with $X_1$ the hedge that covers the unique edge of $K_3$, then the constraint $g_2$ forces that $X_1 \in U$ and $X_2 \notin U$ never occurs. Otherwise, the $K_3$ is covered by three hedges and the constraint $f_1$ is responsible not to remove exactly one of the two edges. 
Therefore, $H \setminus U$ is a cluster hedge-subgraph of $H$, as claimed. 
\end{proof}

By combining \Cref{lemma:monestohcd,lemma:hcdtomones} and \Cref{theo:minones} we conclude with the following result.

\begin{theorem}\label{theo:mddcompleteness}
    \HCD is \MHD-complete under A-reductions. 
\end{theorem}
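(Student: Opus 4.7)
The plan is to read the theorem as two assertions (membership and hardness, both under $A$-reductions) and to chain the $A$-reductions supplied by the preceding lemmas through a suitable \MinO instantiation, exploiting the transitivity of $A$-reducibility. Concretely, given $A$-reductions $P \to Q$ with constant $\alpha_1$ and $Q \to R$ with constant $\alpha_2$, composing the two polynomial-time mappings $F,G$ yields an $A$-reduction $P \to R$ with constant $\alpha_1\alpha_2$; this observation is all the glue needed.

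For MHD-hardness of \HCD, the plan is to invoke \Cref{lemma:firstFproperties}, which certifies that the family $\mathcal{F}' = \{f_1, f_2\}$ is weakly positive, and is neither $0$-valid nor IHS-$B$ for any constant $B$. \Cref{theo:minones} then ensures that \MinO$(\mathcal{F}')$ is \MHD-complete under $A$-reductions, so in particular there exists an $A$-reduction from \MHD to \MinO$(\mathcal{F}')$. Composing this with the $A$-reduction from \MinO$(\mathcal{F}')$ to \HCD provided by \Cref{lemma:monestohcd} yields an $A$-reduction from \MHD to \HCD.

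For the membership direction, I would apply \Cref{lemma:secondFproperties}, which shows that the enlarged family $\mathcal{F}'' = \{f_1, g_1, g_2, f_2\}$ is again weakly positive and neither $0$-valid nor IHS-$B$. \Cref{theo:minones} then makes \MinO$(\mathcal{F}'')$ \MHD-complete, so there is an $A$-reduction from \MinO$(\mathcal{F}'')$ to \MHD. Composing this with the $A$-reduction from \HCD to \MinO$(\mathcal{F}'')$ supplied by \Cref{lemma:hcdtomones} produces an $A$-reduction from \HCD to \MHD, completing the proof.

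There is no real obstacle at the level of the theorem itself; the substantive work has already been done in \Cref{lemma:monestohcd} and \Cref{lemma:hcdtomones} (where the two reductions carefully align each hedge with a Boolean variable and each $P_3$/$K_3$ with the correct constraint so that the ``set of hedges to remove'' coincides with the ``set of variables set to one''), and in the verifications of \Cref{lemma:firstFproperties,lemma:secondFproperties} that the chosen constraint families fit the hypotheses of \Cref{theo:minones}. The only care required here is to remark that $A$-reducibility is closed under composition, so chaining the two sides does give \MHD-completeness of \HCD under $A$-reductions.
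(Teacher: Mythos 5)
Your proposal is correct and matches the paper's argument exactly: the paper proves \Cref{theo:mddcompleteness} by combining \Cref{lemma:monestohcd}, \Cref{lemma:hcdtomones}, and \Cref{theo:minones} (with \Cref{lemma:firstFproperties} and \Cref{lemma:secondFproperties} certifying the hypotheses on $\mathcal{F}'$ and $\mathcal{F}''$), relying on the transitivity of $A$-reductions noted earlier in \Cref{sec:inapprox}. Nothing is missing.
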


We also state the claimed results. In particular, \Cref{lemma:firstFproperties,lemma:monestohcd} and \Cref{theo:minones} imply \Cref{cor:hardfactor}, whereas \Cref{theo:minones,theo:mddcompleteness} and $A$-reductions imply \Cref{cor:hardapprox}. 

\begin{corollary}\label{cor:hardfactor}
It is NP-hard to approximate \HCD within factor 
$2^{O(\log^{1-\epsilon} r)}$ for any $\epsilon >0$, where $r$ is the number of hedges in the given hedge graph.
\end{corollary}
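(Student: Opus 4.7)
The plan is to chain together the three ingredients already available: Theorem~\ref{theo:minones} gives the quantitative inapproximability for \MinO$(\mathcal{F})$ when $\mathcal{F}$ satisfies the stated structural properties; Lemma~\ref{lemma:firstFproperties} certifies that $\mathcal{F}'=\{f_1,f_2\}$ satisfies exactly those properties; and Lemma~\ref{lemma:monestohcd} gives an $A$-reduction from \MinO$(\mathcal{F}')$ to \HCD. So Theorem~\ref{theo:minones} applies to \MinO$(\mathcal{F}')$, yielding that it is NP-hard to approximate \MinO$(\mathcal{F}')$ within factor $2^{O(\log^{1-\epsilon} n)}$ for every $\epsilon>0$, where $n$ is the number of variables of the input formula.

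First I would transfer this hardness through the $A$-reduction. Inspecting the proof of Lemma~\ref{lemma:monestohcd}, each variable of $\phi$ is mapped to exactly one hedge of the output hedge graph $H$, and the constant $\alpha$ in Definition~\ref{def:A-reducibility} equals $1$; in particular the number $r$ of hedges in $H$ equals the number $n$ of variables in $\phi$, and any $\rho$-approximate solution for \HCD on $H$ yields (via $G$) a $\rho$-approximate solution for the \MinO$(\mathcal{F}')$ instance $\phi$. Consequently, any polynomial-time $\rho$-approximation algorithm for \HCD would give a polynomial-time $\rho$-approximation for \MinO$(\mathcal{F}')$.

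Therefore, supposing for contradiction that \HCD admits a polynomial-time approximation within factor $2^{O(\log^{1-\epsilon} r)}$ for some $\epsilon>0$, composing with the $A$-reduction produces a polynomial-time $2^{O(\log^{1-\epsilon} n)}$-approximation for \MinO$(\mathcal{F}')$, contradicting Theorem~\ref{theo:minones} under the assumption $\mathrm{P}\neq\mathrm{NP}$. This gives the desired conclusion.

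The only step that needs care is the bookkeeping of parameters: one must check that the parameter appearing inside the $\log^{1-\epsilon}(\cdot)$ on each side of the reduction transforms correctly. Because the reduction of Lemma~\ref{lemma:monestohcd} is a direct one-to-one correspondence between variables and hedges, we simply have $r=n$, so no polynomial slack is incurred; even if there were a polynomial blow-up $r=n^{O(1)}$, the shape $2^{O(\log^{1-\epsilon}(\cdot))}$ is preserved, since $\log^{1-\epsilon}(n^c)=c^{1-\epsilon}\log^{1-\epsilon} n$, and $\epsilon$ is arbitrary. Hence the quantitative inapproximability carries over verbatim to \HCD with parameter $r$.
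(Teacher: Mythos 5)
Your proposal is correct and follows exactly the route the paper takes: the paper derives \Cref{cor:hardfactor} by combining \Cref{lemma:firstFproperties}, \Cref{lemma:monestohcd}, and \Cref{theo:minones}, just as you do. Your additional bookkeeping (that $r=n$ because the reduction maps variables bijectively to hedges, and that $\alpha=1$ so the factor transfers verbatim) is a correct and welcome elaboration of the step the paper leaves implicit.
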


\begin{corollary}\label{cor:hardapprox}
\HCD admits an $n^{\delta_1}$-approximation algorithm for any $\delta_1 > 0$ if and only if 
each \MHD-complete problem admits an $n^{\delta_2}$-approximation algorithm for any $\delta_2 > 0$.   
\end{corollary}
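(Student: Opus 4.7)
The plan is to derive \Cref{cor:hardapprox} as a direct consequence of \Cref{theo:mddcompleteness} combined with the defining property of $A$-reductions, namely that they preserve the approximation ratio up to a multiplicative constant. Fix an arbitrary \MHD-complete problem $Q$. By \Cref{theo:mddcompleteness} and the transitivity of $A$-reducibility (two \MHD-complete problems under $A$-reductions are $A$-equivalent), there exist $A$-reductions $(F_1, G_1, \alpha_1)$ from $Q$ to \HCD and $(F_2, G_2, \alpha_2)$ from \HCD to $Q$. Let $p$ and $q$ be polynomials bounding the output size of $F_1$ and $F_2$, respectively.

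For the forward direction, suppose \HCD admits an $n^{\delta_1}$-approximation algorithm $\mathcal{A}_{\delta_1}$ for every $\delta_1 > 0$. Given a target $\delta_2 > 0$ and an instance $I$ of $Q$ of size $n$, I would run $\mathcal{A}_{\delta_1}$ on $F_1(I)$ for $\delta_1 := \delta_2 / (2 \deg p)$, obtaining a solution $S$ whose ratio to the optimum of $F_1(I)$ is at most $|F_1(I)|^{\delta_1} \leq p(n)^{\delta_1} = O(n^{\delta_2 / 2})$. Applying $G_1$ to $(I, S)$ yields a feasible solution of $I$ whose approximation ratio, by property~(3) of \Cref{def:A-reducibility}, is at most $\alpha_1 \cdot p(n)^{\delta_1}$, and this is bounded by $n^{\delta_2}$ for all sufficiently large $n$. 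Small instances can be solved optimally by brute force in constant time, so $Q$ admits an $n^{\delta_2}$-approximation for every $\delta_2 > 0$.

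The reverse direction is entirely symmetric, using the $A$-reduction $(F_2, G_2, \alpha_2)$ from \HCD to $Q$ and absorbing the polynomial blow-up $q(n)$ by choosing $\delta_2' := \delta_1 / (2 \deg q)$. Since $\alpha_1, \alpha_2$ and the degrees of $p, q$ are absolute constants and $\delta_1, \delta_2$ can be chosen arbitrarily small, the arbitrary choice of $Q$ completes the argument.

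The only subtle point — and the main (mild) obstacle — is that the quantifier in the statement is expressed in terms of the input size $n$, while the bound from \Cref{cor:hardfactor} and the natural approximation ratios are expressed in terms of parameters intrinsic to the instance (number of hedges, number of variables). Tracking these through the polynomial $|F_1|, |F_2|$ blow-ups requires choosing $\delta_1$ strictly smaller than $\delta_2$ by a factor depending on the degrees of the reduction polynomials; this is harmless because the hypothesis supplies an $n^{\delta}$-approximation for \emph{every} positive $\delta$, so the necessary slack is always available.
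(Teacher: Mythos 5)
Your proposal is correct and follows essentially the same route as the paper, which derives \Cref{cor:hardapprox} directly from the \MHD-completeness established in \Cref{theo:mddcompleteness} together with the defining property of $A$-reductions (approximation ratios are preserved up to a constant factor, with the polynomial instance blow-up absorbed by shrinking the exponent $\delta$). The paper leaves these routine calculations implicit; your write-up simply makes the quantifier bookkeeping and the choice $\delta_1 = \delta_2/(2\deg p)$ explicit, which is a faithful elaboration rather than a different argument.
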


\subsection{Incompressibility of \HCD}\label{sec:incompres}
Recall that with a simple brute-force algorithm \HCD can be solved in $2^{k} \cdot n^{O(1)}$ time. 
Here we show that \HCD does not have a polynomial kernel parameterized by $k$ unless the polynomial hierarchy collapses.

A ternary boolean function $f(x,y,z)$, where $x, y$ and $z$ are either
boolean variables or constants 0 or 1, is \emph{propagational} if it satisfies 
\begin{itemize}
    \item $f (1, 0, 0) = 0$ and
    \item $f (0, 0, 0) = f (1, 0, 1) = f (1, 1, 0) = f (1, 1, 1) = 1$.
\end{itemize}
It is not difficult to see that the following Boolean function is propagational \cite{CaiC15,KratschW13}:
$$
\texttt{Not-1-in-3}(x, y,z) = f_1(x, y, z) = (\lnot x \lor y \lor z) \land (x \lor \lnot y \lor z) \land (x \lor y \lor \lnot z).
$$
Note that $f_1$ is symmetric ternary function and the order of the three arguments in $f_1$ is not important. Within the list of arguments, we write first the constants (i.e., $f_1(0,x,y)=f_1(x,0,y)$, $f_1(0,1,x)=f_1(1,0,x)$). Also notice that $f_1 \in \mathcal{F'} \cap \mathcal{F''}$ in our previous considered \MinO \xspace problem which was used to encode every $K_3$ that contains three hedges. 

We will reduce from the \textsc{Propagational-$f$ Satisfiability} that is defined as follows: given a conjunctive formula $\phi$ of a propagational ternary function $f$ with distinct variables inside each clause of $\phi$ and an integer parameter $k \geq 0$, the task is to decide whether there exists a satisfying truth assignment for $\phi$ of weight at most $k$.
Cai and Cai \cite{CaiC15} introduced the problem and proved that it has no polynomial kernel under plausible hierarchy assumptions.    

\begin{theorem}[\cite{CaiC15}]\label{theo:propagational}
For any propagational ternary boolean function $f$, 
\textsc{Propagational-$f$ Satisfiability} 
on 3-regular conjunctive formulas admits no polynomial compression,
hence no polynomial kernel, unless NP $\subseteq$ coNP/poly. 
\end{theorem}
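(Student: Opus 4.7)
The plan is to establish this result via OR-cross-composition (in the Bodlaender--Jansen--Kratsch framework) from an NP-complete source such as 3-CNF-SAT. Recall that an OR-cross-composition from any NP-complete problem into a parameterized problem $Q$ precludes polynomial compression for $Q$ unless NP $\subseteq$ coNP/poly, which matches the desired conclusion. The source instances would be $t$ copies of 3-CNF-SAT of equal size, and the goal is to produce a single instance of \textsc{Propagational-$f$ Satisfiability} on a 3-regular conjunctive formula whose parameter $k$ is bounded by a polynomial in the size of a single source instance, independently of $t$.

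First I would exploit the defining propagational property: the constraint $f(1,0,0) = 0$ forces that whenever a variable $x$ is set to 1 in a clause $f(x,y,z)$, at least one of $y,z$ must also be 1, while $f(0,0,0) = 1$ means a variable can safely stay 0 if its neighbours are 0. This yields a cascade: turning on a single variable propagates a chain of 1-assignments through the clauses containing it. I would use this cascade to encode assignment-verification inside a propagational formula. Each SAT variable $x_i$ would be represented by a pair of propagational variables $x_i^{+}, x_i^{-}$ linked through a gadget enforcing that exactly one of them is 1, and each 3-SAT clause would be encoded by a small propagational sub-formula that forces at least one representative literal variable to be 1. The weight of a faithful satisfying configuration for a single source instance would then be a fixed polynomial in the number of its variables.

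The selector step would introduce a trigger variable $s_j$ for each of the $t$ source instances, together with a select-one gadget that in any minimum-weight solution forces exactly one $s_j$ to 1. Activating $s_j$ would initiate the propagation through the $j$-th instance encoding and force the verification that instance $j$ is satisfiable; the encodings of distinct instances would share only the selector, so no chain of 1-assignments could bridge two encodings at a discount. The parameter $k$ would bound the total weight to that of one satisfying assignment plus the constant selector overhead, which is polynomial in the size of a single instance, as required by the cross-composition definition. Correctness in both directions then follows: a yes-instance among the $t$ sources yields a weight-$k$ satisfying assignment, and any weight-$k$ assignment must activate exactly one trigger and thereby witness satisfiability of the corresponding source.

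The main obstacle would be enforcing 3-regularity while preserving the cascade semantics. Padding clauses must be inserted to equalize every variable's degree to exactly 3, but such clauses could accidentally introduce new propagation chains that either inflate the minimum weight or enable a low-weight satisfying assignment not corresponding to any satisfiable source. A robust padding scheme would likely attach each under-degree variable to a local ``absorbing'' gadget that is propagational-satisfied by the all-zero assignment on its padding variables, so that padding never forces a 1. A secondary subtlety is that $f$ is prescribed on only five of its eight truth-table entries, so the gadget design must be uniform over all admissible completions of $f$; this is typically handled by arranging that no feasible assignment ever restricts a gadget to one of the three undetermined configurations, making the construction oblivious to those entries.
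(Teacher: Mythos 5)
You should first note that the paper does not prove this statement at all: it is imported verbatim from Cai and Cai \cite{CaiC15} (the paper only uses it as the source of its ppt-reduction in \Cref{theo:nokernel}), so there is no in-paper proof to compare against. Judged against the known proof in the cited reference, your high-level strategy --- a cross-composition built around an instance selector whose activation triggers a cascade of forced $1$-assignments through the chosen instance's encoding --- is the right family of ideas; it is essentially the Kratsch--Wahlstr\"om composition for \texttt{Not-1-in-3} satisfiability, which \cite{CaiC15} generalizes to arbitrary propagational $f$.

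However, the sketch glosses over exactly the points where a propagational $f$ is delicate, and as written some gadgets would fail. Since only $f(1,0,0)=0$ is forced to be false, the \emph{only} semantics you may rely on for soundness is the implication $x \rightarrow (y \vee z)$, and for completeness your constructed assignment must evaluate every clause at one of the four determined $1$-entries. Consequently: (i) ``exactly one of $x_i^{+},x_i^{-}$'' is not expressible --- all expressible constraints are satisfied by the all-ones assignment --- so at-most-one must come from a tight weight budget, and with the auxiliary variables needed to simulate ternary disjunctions (whose number of forced ones varies with the assignment) making that budget tight is a genuine difficulty, not a footnote; (ii) your padding scheme for $3$-regularity is problematic as described: attaching an under-degree variable to a clause whose other padding variables are all zero either forces that variable to $0$ (if it sits in the first position, via $f(x,0,0)$) or lands on one of the three unspecified truth-table entries (if it sits elsewhere), so the ``absorbing gadget satisfied by all-zeros'' does not obviously exist --- the clean fixes are tautological padding clauses such as $f(1,1,x)$, which evaluate to $1$ at determined entries regardless of $x$, and degree-splitting via implication pairs $f(x,y,0)\wedge f(y,x,0)$; and (iii) your stated condition that ``no feasible assignment ever restricts a gadget to an undetermined configuration'' is both too strong and not the right requirement --- it need only hold for the witnessing assignment in the completeness direction, while soundness must be argued purely from the single forced-zero entry. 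These are the places where the actual proof in \cite{CaiC15} does real work, so the proposal should be regarded as a correct plan with substantive gaps rather than a proof.
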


In what follows, we set $f_1$ as a propagational function and consider the \textsc{Propagational-$f_1$ Satisfiability} problem. For technical reasons, it is more convenient to deal with $\phi$ that does not contain a function of the form $f_1(0,0,x)$. Next we show how this assumption is achieved.

\begin{lemma}\label{lem:f1constant}
Let $(\phi,k)$ be an instance of \textsc{Propagational-$f_1$ Satisfiability}. In polynomial time we can compute a Boolean formula $\phi'$ from $\phi$ such that $f_1(0,0,x) \notin \phi'$ and 
$(\phi,k)$ is a yes-instance if and only if $(\phi',k)$ is a yes-instance. 
\end{lemma}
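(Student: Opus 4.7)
The key observation is that a clause of the form $f_1(0,0,x)$ is equivalent to the literal $\neg x$: indeed $f_1(0,0,0)=1$ while $f_1(0,0,1)=0$, so such a clause is satisfied if and only if $x=0$; by the symmetry of $f_1$, the same holds for $f_1(0,x,0)$ and $f_1(x,0,0)$. The plan is to exploit this observation by iteratively forcing variables to $0$ until no clause of the forbidden shape survives.

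Concretely, I would compute $\phi'$ by the following closure procedure. Initialize $\phi_0 := \phi$ and $V_0 := \emptyset$. While $\phi_0$ contains some clause of the form $f_1(0,0,x)$ (in any argument order), add $x$ to $V_0$, delete that clause, and substitute the constant $0$ for every remaining occurrence of $x$ in the arguments of all other clauses of $\phi_0$. Whenever a substitution turns a clause into one with three constant arguments, evaluate $f_1$ on them: in the satisfying cases ($f_1(0,0,0), f_1(0,1,1), f_1(1,0,1), f_1(1,1,0), f_1(1,1,1)$) delete the clause, and in the falsifying cases ($f_1(0,0,1), f_1(0,1,0), f_1(1,0,0)$) stop the procedure and output any canonical unsatisfiable formula avoiding $f_1(0,0,\cdot)$, for instance the single constant clause $f_1(1,0,0)$. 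Each iteration places a fresh variable into $V_0$, so the loop terminates after at most $|V(\phi)|$ rounds, each performed in polynomial time; let $\phi'$ be the resulting formula. By construction, $\phi'$ contains no clause of the form $f_1(0,0,x)$.

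For the equivalence of $(\phi,k)$ and $(\phi',k)$, I would argue by induction on the order in which variables are added to $V_0$: at the moment $x$ is inserted, the clause $f_1(0,0,x)$ that triggered the insertion comes from a clause of the original $\phi$ whose other two arguments are either the constant $0$ or variables that the induction hypothesis has already shown must be $0$ in every satisfying assignment of $\phi$. Consequently, any satisfying assignment of $\phi$ sets every variable of $V_0$ to $0$, and restricting it to $V(\phi)\setminus V_0$ produces a satisfying assignment of $\phi'$ of the same weight; conversely, extending a satisfying assignment of $\phi'$ by $0$'s on $V_0$ recovers a satisfying assignment of $\phi$ of identical weight, so the weight bound $k$ is preserved in both directions. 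The only genuine subtlety, and the step that needs the most care in the write-up, is the bookkeeping for clauses that collapse to a fully-constant $f_1$ during the propagation — they either simplify harmlessly to a satisfied clause or immediately witness unsatisfiability of $\phi$, and handling both possibilities is what keeps $\phi'$ well-defined while remaining free of $f_1(0,0,x)$ clauses.
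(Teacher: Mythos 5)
Your proposal is correct and follows essentially the same route as the paper: observe that $f_1(0,0,x)\equiv\lnot x$ forces $x=0$, substitute the constant $0$ for $x$, discard fully-constant clauses (detecting unsatisfiability if one evaluates to $0$), and iterate until no clause of the forbidden shape remains, with weights preserved because only zeroes are assigned. Your write-up is merely more explicit about the cascading substitutions and the induction over $V_0$; the paper compresses this into "applying such a transformation for every variable."
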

\begin{proof}
We construct $\phi'$ from $\phi$ by removing certain clauses. 
For each Boolean variable $x$, we check whether $f_1(0,0,x) \in \phi$. If this is not the case for every $x$, then $\phi'=\phi$ and we conclude with the instance $(\phi',k)$. 
Assume that $f_1(0,0,x) \in \phi$.  
By the definition of $f_1$, observe that $\phi$ is satisfied only if $x=0$. 
We construct $\phi'$ from $\phi$ by replacing every occurrence of $x$ with the constant $0$ in $\phi$. 
Since we have not assigned one to any variable during this construction, it is not difficult to see that $\phi$ is satisfied by at most $k$ ones if and only if $\phi'$ is satisfied by at most $k$ ones. 
Notice also that clauses in $\phi'$ that contain only constants can be safely removed, because if we encounter $f_1(0,0,1) \in \phi'$ then $(\phi,k)$ is a no-instance; otherwise we just ignore such clauses (as they are always \texttt{true}). 
By applying such a transformation for every variable, we conclude with a valid $\phi'$ such that $f_1(0,0,x) \notin \phi'$. 
\end{proof}

We next establish the claimed incompressibility of \HCD 
mainly by a ppt-reduction from \textsc{Propagational-$f_1$ Satisfiability}. 
The proof is similar to the reduction given in \Cref{lemma:monestohcd}. 

\begin{figure}[!t]
\centering
\begin{tikzpicture}[scale=1.5]

\begin{scope}[shift={(-4.5,18)}] 
\node[anchor=west] at (0,2.2){\MinO$(\mathcal{F}'')$ formula:};
\node[anchor=west] at (0,1.8) {$\phi = f_1(x_1,x_2,x_3) \land g_1(x_4,x_3) \land $};
\node[anchor=west] at (0.5,1.3) {$g_1(x_3,x_4) \land f_1(x_1,x_3,x_5) \land $} ;
\node[anchor=west] at (0.5,0.8) {$f_2(x_1) \land g_2(x_2,x_3)$} ;
\node[anchor=west] at (0,0){\textsc{Propagational-$f_1$ Satisfiability} formula:};
\node[anchor=west] at (0,-0.5) {$\phi = f_1(x_1,x_2,x_3) \land f_1(0,x_3,x_4) \land $};
\node[anchor=west] at (0.5,-1.0) {$f_1(0,x_4,x_3) \land f_1(x_1,x_3,x_5) \land $} ;
\node[anchor=west] at (0.5,-1.5) {$f_1(1,1,x_6) \land f_1(0,1,x_1) \land f_1(1,x_2,x_3)$} ;

\draw[double, double distance=1pt] (-0.1, 0.4) -- (5.5, 0.4);
\end{scope}

\begin{scope}[shift={(0,0)}]
\node[draw, circle, fill=black, inner sep=1.5pt
] at (1.60,20.00) {};
\node[draw, circle, fill=black, inner sep=1.5pt
] at (2.80,20.40) {};

\node[draw, circle, fill=black, inner sep=1.5pt
] at (2.80,19.60) {};
\node[draw, circle, fill=black, inner sep=1.5pt
] at (4.00,20.00) {};
\node[draw, circle, fill=black, inner sep=1.5pt
] at (5.20,20.40) {};

\node[draw, circle, fill=black, inner sep=1.5pt
] at (5.20,19.60) {};
\node[draw, circle, fill=black, inner sep=1.5pt
] at (1.60,18.40) {};
\node[draw, circle, fill=black, inner sep=1.5pt
] at (2.80,18.80) {};

\node[draw, circle, fill=black, inner sep=1.5pt
] at (2.80,18.00) {};
\node[draw, circle, fill=black, inner sep=1.5pt
] at (4.00,18.40) {};
\node[draw, circle, fill=black, inner sep=1.5pt
] at (5.20,18.80) {};

\node[draw, circle, fill=black, inner sep=1.5pt
] at (5.20,18.00) {};
\node[draw, circle, fill=black, inner sep=1.5pt
] at (1.60,16.80) {};
\node[draw, circle, fill=black, inner sep=1.5pt
] at (2.80,17.20) {};

\node[draw, circle, fill=black, inner sep=1.5pt
] at (2.80,16.40) {};
\node[draw, circle, fill=black, inner sep=1.5pt
] at (4.00,17.20) {};
\node[draw, circle, fill=black, inner sep=1.5pt
] at (4.80,17.20) {};

\node[draw, circle, fill=black, inner sep=1.5pt
] at (4.00,16.40) {};
\node[draw, circle, fill=black, inner sep=1.5pt
] at (4.80,16.40) {};

\node[draw, circle, fill=black, inner sep=1.5pt
] at (5.60,17.20) {};
\node[draw, circle, fill=black, inner sep=1.5pt
] at (5.60,16.40) {};

\draw (1.60,20.00) -- (2.80,20.40) node[midway, above] {$x_1$};
\draw (1.60,20.00) -- (2.80,19.60) node[midway, below] {$x_2$};
\draw (2.80,20.40) -- (2.80,19.60) node[midway, right] {$x_3$};

\draw (4.00,20.00) -- (5.20,20.40) node[midway, above] {$x_3$};
\draw (4.00,20.00) -- (5.20,19.60) node[midway, below] {$x_3$};
\draw (5.20,20.40) -- (5.20,19.60) node[midway, right] {$x_4$};

\draw (1.60,18.40) -- (2.80,18.80) node[midway, above] {$x_4$};
\draw (1.60,18.40) -- (2.80,18.00) node[midway, below] {$x_4$};
\draw (2.80,18.80) -- (2.80,18.00) node[midway, right] {$x_3$};

\draw (4.00,18.40) -- (5.20,18.80) node[midway, above] {$x_1$};
\draw (4.00,18.40) -- (5.20,18.00) node[midway, below] {$x_3$};
\draw (5.20,18.80) -- (5.20,18.00) node[midway, right] {$x_5$};

\draw (1.60,16.80) -- (2.80,17.20) node[midway, above] {$x_6$};
\draw (1.60,16.80) -- (2.80,16.40) node[midway, below] {$x_6$};
\draw (2.80,17.20) -- (2.80,16.40) node[midway, right] {$x_6$};

\draw (4.00,17.20) -- (4.80,17.20) node[midway, below] {$x_1$};
\draw (4.80,17.20) -- (5.60,17.20) node[midway, below] {$x_1$};
\draw (4.00,16.40) -- (4.80,16.40) node[midway, below] {$x_2$};
\draw (4.80,16.40) -- (5.60,16.40) node[midway, below] {$x_3$};
\end{scope}

\draw[double, double distance=1pt] (1.0, 16.0) -- (1.0, 21.0);

\end{tikzpicture}
\caption{Illustrating CNF formulas for \MinO$(\mathcal{F}'')$ and \textsc{Propagational-$f_1$ Satisfiability}, alongside the corresponding hedge graph given in \Cref{lemma:hcdtomones} and \Cref{theo:nokernel}, respectively. Note that the $K_3$ spanned solely by hedge ${x_6}$ does not correspond to any constraint in the \MinO$(\mathcal{F}'')$ formulation.
}
\label{fig:MinOnes-and-Propagation}
\end{figure}
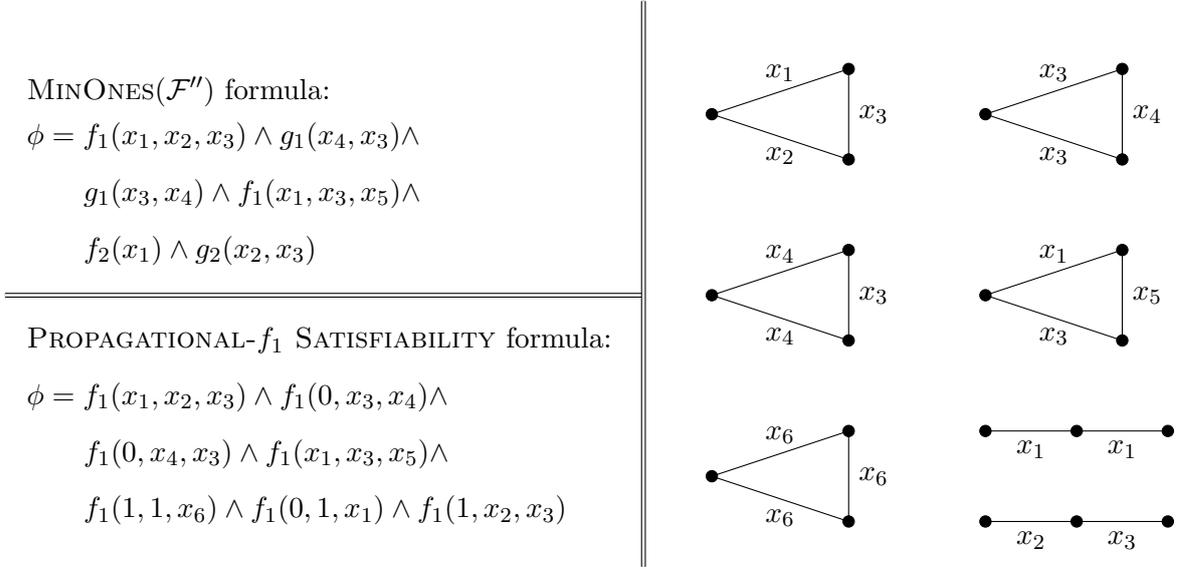

\begin{theorem}\label{theo:nokernel}
\HCD parameterized by the solution size admits no polynomial compression,
hence no polynomial kernel, unless NP $\subseteq$ coNP/poly. 
\end{theorem}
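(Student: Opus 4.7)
The plan is a ppt-reduction from \textsc{Propagational-$f_1$ Satisfiability} on 3-regular conjunctive formulas, which by Theorem~\ref{theo:propagational} has no polynomial compression unless NP $\subseteq$ coNP/poly. By Lemma~\ref{lem:f1constant} the input formula $\phi$ may be assumed free of clauses of the form $f_1(0,0,x)$, so that every remaining clause is one of $f_1(x,y,z)$ (three distinct variables), $f_1(0,y,z)$, $f_1(1,y,z)$, $f_1(0,1,x)$, or $f_1(1,1,x)$. The construction can be viewed as the natural enrichment of the one in Lemma~\ref{lemma:monestohcd}, now equipped with gadgets for clauses that involve constants.

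I build a hedge graph $H$ with one hedge $X$ per variable $x$ of $\phi$ and one or two vertex-disjoint three-vertex gadgets per clause (so that every connected component of the underlying graph is a $P_3$ or $K_3$, in the spirit of Lemma~\ref{lem:allPandK}); I identify $x = 1$ with $X \in U$. Each clause is encoded so that the induced gadget becomes a cluster hedge-subgraph precisely when the clause evaluates to~$1$: a three-variable clause $f_1(x,y,z)$ becomes a $K_3$ whose three edges lie in the distinct hedges $X,Y,Z$; a clause $f_1(0,y,z)$, which is equivalent to $y = z$, is encoded by a \emph{pair} of $K_3$-gadgets, one with two edges in $Y$ and the third in $Z$ and the symmetric one (since a single such gadget only encodes a one-sided implication); $f_1(1,y,z)$, equivalent to $y \vee z$, becomes a $P_3$ with one edge in $Y$ and one in $Z$; $f_1(0,1,x)$, which forces $x = 1$, becomes a $P_3$ both of whose edges lie in $X$, so that Observation~\ref{obs:internalP3} automatically places $X$ in every solution; and $f_1(1,1,x)$ is a tautology requiring no gadget. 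I set $k' = k$; the construction is clearly polynomial in $|\phi|$ and the parameter is trivially preserved.

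Correctness reduces to a gadget-by-gadget verification matching the cluster condition to the semantics of $f_1$. A three-hedge $K_3$ is a cluster iff the number of its removed hedges is not exactly one, i.e., iff the Not-1-in-3 constraint is satisfied. For the paired $f_1(0,y,z)$ gadget, removing only $Z$ in the first $K_3$ leaves a $P_3$ whose two edges both lie in $Y$ (forcing $Z \in U \Rightarrow Y \in U$), and the symmetric gadget yields the converse, jointly enforcing $y = z$. The two $P_3$-gadgets are immediate, and a trivial gadget for $f_1(1,1,x)$ imposes no constraint. Because fresh vertices are used for every gadget, the clauses do not interfere and each may be analyzed in isolation, giving a bijection between satisfying weight-$k$ assignments of $\phi$ and \HCD-solutions of size $k$ in $H$; combining this with Theorem~\ref{theo:propagational} gives the desired incompressibility.

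The main obstacle I anticipate is the backward direction: given an arbitrary hedge-deletion set $U$ of size at most $k$, we must recover a satisfying assignment of the same weight. This is where the choice of two $K_3$-gadgets for $f_1(0,y,z)$ is essential, since a single such gadget enforces only one of the two implications of the biconditional $y = z$; the careful separation of gadgets via fresh vertices is what ensures that no ``spurious'' solution for $H$ can decouple the two directions.
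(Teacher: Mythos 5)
Your proposal is correct and follows essentially the same route as the paper: a ppt-reduction from \textsc{Propagational}-$f_1$ \textsc{Satisfiability} using \Cref{lem:f1constant}, with the identical gadgets per clause type (single-hedge internal $P_3$ for $f_1(0,1,x)$, two-hedge $P_3$ for $f_1(1,y,z)$, three-hedge $K_3$ for $f_1(x,y,z)$, and the paired asymmetric $K_3$'s enforcing the biconditional for $f_1(0,y,z)$). The only cosmetic difference is that you drop the gadget for the tautological clause $f_1(1,1,x)$, where the paper adds a single-hedge $K_3$ that imposes no constraint anyway.
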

\begin{proof}
We give a ppt-reduction from \textsc{Propagational-$f_1$ Satisfiability} to the parameterized problem of \HCD. 
Let $(\phi,k)$ be an instance of \textsc{Propagational-$f_1$ Satisfiability}. 
By \Cref{lem:f1constant}, we assume that $f_1(0,0,x) \notin \phi$. 
As $f_1$ is symmetric ternary and propagational function, the following forms can only occur in $\phi$: 
$f_1(0,1,x)$, $f_1(1,1,x)$, $f_1(0,x,y)$, $f_1(1,x,y)$, $f_1(x,y,z)$,
where $x,y,z$ are Boolean variables. 
We construct a hedge graph $H$ from $\phi$ in which each hedge $E_x$ of $H$ corresponds to exactly one Boolean variable $x$ of $\phi$. For each clause in $\phi$, we construct some connected components in $H$ where every connected component has exactly three vertices in $H$ (so that every component is a $P_3$ or a $K_3$). Depending on the five formulations of $f_1$, we construct vertex-disjoint connected components as follows: 
\begin{itemize}
    \item $f_1(0,1,x)$: add a $P_3$ that is covered by exactly one hedge $E_x$.
    \item $f_1(1,1,x)$: add a $K_3$ that is covered by exactly one hedge $E_x$.
    \item $f_1(0,x,y)$: add two $K_3$'s where the edges of the two components are covered by exactly two hedges $E_x$ and $E_y$, such that the two edges of the one component belong to $E_x$ and the two edges of the other component belong to $E_y$.
    \item $f_1(1,x,y)$: add a $P_3$ that is covered by exactly two hedges $E_x,E_y$.
    \item $f_1(x,y,z)$: add a $K_3$ that is covered by exactly three hedges $E_x, E_y, E_z$. 
\end{itemize}
This concludes the construction of $H$ (see also \Cref{fig:MinOnes-and-Propagation}). It is not difficult to see that the construction can be performed in polynomial time. 
Now we claim that $(\phi,k)$ is a yes-instance of \textsc{Propagational-$f_1$ Satisfiability} if and only if
$(H,k)$ is a yes-instance of \HCD. 

Let $X'$ be a subset of the Boolean variables such that $\{x=1 \mid x\in X'\}$ satisfies $\phi$. 
We show that $H \setminus E(X')$ is a cluster hedge-subgraph where $E(X') = \{E_x \mid x \in X'\}$. 
Assume for contradiction that there is $P_3=(u,v,w)$ in $H \setminus E(X')$. This means that the hedges $E_x$ and $E_y$ of the edges $\{u,v\}$ and $\{v,w\}$, respectively, do not belong to $E(X')$, so that for the corresponding variables we have $x=y=0$ (notice that $E_x,E_y$ are not necessarily different hedges). Moreover, if $\{u,w\}$ is an edge of $H$ then we know that $E_z \notin \{E_x, E_y\}$ where $E_z$ is the hedge of the edge $\{u,w\}$ and for its corresponding variable we have $z=1$. Depending on whether $E_x = E_y$ and $\{u,w\} \in E(H)$, we conclude that a particular clause in $\phi$ is not satisfied:
\begin{itemize}
    \item $E_x = E_y$ and $\{u,w\} \notin E(H)$. Then $f_1(0,1,x)=0$, because $x=0$;
    \item $E_x = E_y$ and $\{u,w\} \in E(H)$. Then $f_1(0,x,z)=0$, because $x=0$ and $z=1$;
    \item $E_x \neq E_y$ and $\{u,w\} \notin E(H)$. Then $f_1(1,x,y) = 0$, because $x=y=0$;
    \item $E_x \neq E_y$ and $\{u,w\} \in E(H)$. Then $f_1(x,y,z)=0$, because $x=y=0$ and $z=1$. 
\end{itemize}
Therefore, in each case we conclude with a clause that is not satisfied, leading to a contradiction, which implies that there is no $P_3$ in $H \setminus E(X')$.  

For the opposite direction, let $H' = H \setminus E(X')$ be a cluster hedge-subgraph where $E(X')$ is a subset of the hedges of $H$. Then we claim that $\{x=1 \mid x\in E(X')\}$ satisfies $\phi$. 
We show that in each formulation of $f_1$ in $\phi$, $f_1(x,y,z)$ is satisfied. 
\begin{itemize}
    \item $f_1(0,1,x)$: the corresponding $P_3$ is not present in $H'$, so that $x\in E(X')$ and $x=1$ which means that $f_1(0,1,x)$ is satisfied. 
    \item $f_1(1,1,x)$: there are two cases to consider in $H'$. Either the corresponding $K_3$ remains the same ($x=0$) or the three vertices belong to singleton clusters ($x=1$). Thus, $f_1(1,1,x)$ is satisfied. 
    \item $f_1(0,x,y)$: there are two cases to consider in $H'$. Either the two corresponding components are untouched ($x=0$ and $y=0$) or all six vertices belong to singleton clusters ($x=1$ and $y=1$). Thus, $f_1(0,x,y)$ is satisfied. 
    \item $f_1(1,x,y)$: the corresponding $P_3$ is not present in $H'$, so that $x\in E(X')$ and $x=1$ or $y\in E(X')$ and $y=1$ (or both belong to $E(X')$ and $x=y=1$). Thus, $f_1(1,x,y)$ is satisfied. 
    \item $f_1(x,y,z)$: either all edges of the corresponding $K_3$ are present in $H'$ ($x=y=z=0$) or at least two of its edges are removed ($x=y=1$). Thus, $f_1(x,y,z)$ is satisfied.
\end{itemize}
Therefore, all clauses of $\phi$ are satisfied as claimed. 
\end{proof}

\section{Towards a constant-factor approximation}\label{sec:approx}
Here we consider approximating \HCD up to a constant factor, by restricting to a certain class of hedge graphs. 
In order to achieve a constant factor approximation, we require a relationship between the hedges and the triangles of the underlying graph.  
In particular, we consider hedge graphs in which every triangle of the underlying graph is covered by at most two hedges. We refer to such hedge graphs as \emph{bi-hedge graphs}. 
By \Cref{lem:paths}, observe that \HCD remains NP-complete on bi-hedge graphs. 

 
To avoid heavy notation and without affecting consistency, 
we refer to a hedge of $\mathcal{E}(H)$ both as $x$ or $E_x$, whenever there is no ambiguity.  
For a hedge $x$, we denote by $SK_3(x)$ (single-$K_3$) the set of triangles $K_3=(a,b,c)$ in $H$ such that exactly one of the three edges belongs to $E_x$, that is $\{a,b\} \in E_x$ and $\{a,c\}, \{b,c\} \notin E_x$. Since every triangle contains at most two hedges, notice that both $\{a,c\}$ and $\{b,c\}$ belong to a single hedge $E_y \neq E_x$. 
Let $D(x)$ be the set of hedges that appear in $SK_3(x)$. 
We say that a hedge $y$ \emph{dominates} a hedge $x$ if there is a sequence of hedges 
$\langle y=h_1, \ldots, h_p=x \rangle$ with $p \geq 1$ such that $h_i \in D(h_{i+1})$, for all $1\leq i <p$. 
We denote by $R(x)$ all hedges that dominate $x$. Notice that $x\in R(x)$. 
\Cref{fig:domination} illustrates the corresponding notions.  

The intuition behind the previous definitions is that whenever the hedge $E_x$ is deleted from $H$, all hedges that dominate $x$ span new $P_3$ in $H \setminus \{E_x\}$. In fact, we will show that all such hedges do not create new $P_3$ in the hedge graph $H \setminus R(x)$. 

\begin{figure}[t]
\centering
\begin{tikzpicture}[scale=1.5]
\begin{scope}[shift={(0,0)}]
\node[draw, circle, fill=black, inner sep=1.5pt
] at (1.60,20.00) {};
\node[draw, circle, fill=black, inner sep=1.5pt
] at (2.80,20.40) {};

\node[draw, circle, fill=black, inner sep=1.5pt
] at (2.80,19.60) {};
\node[draw, circle, fill=black, inner sep=1.5pt
] at (4.00,20.00) {};
\node[draw, circle, fill=black, inner sep=1.5pt
] at (5.20,20.40) {};

\node[draw, circle, fill=black, inner sep=1.5pt
] at (5.20,19.60) {};
\node[draw, circle, fill=black, inner sep=1.5pt
] at (1.60,18.40) {};
\node[draw, circle, fill=black, inner sep=1.5pt
] at (2.80,18.80) {};

\node[draw, circle, fill=black, inner sep=1.5pt
] at (2.80,18.00) {};
\node[draw, circle, fill=black, inner sep=1.5pt
] at (4.00,18.40) {};
\node[draw, circle, fill=black, inner sep=1.5pt
] at (5.20,18.80) {};

\node[draw, circle, fill=black, inner sep=1.5pt
] at (5.20,18.00) {};
\node[draw, circle, fill=black, inner sep=1.5pt
] at (1.60,16.80) {};
\node[draw, circle, fill=black, inner sep=1.5pt
] at (2.80,17.20) {};

\node[draw, circle, fill=black, inner sep=1.5pt
] at (2.80,16.40) {};
\node[draw, circle, fill=black, inner sep=1.5pt
] at (4.00,17.20) {};
\node[draw, circle, fill=black, inner sep=1.5pt
] at (4.80,17.20) {};

\node[draw, circle, fill=black, inner sep=1.5pt
] at (4.00,16.40) {};
\node[draw, circle, fill=black, inner sep=1.5pt
] at (4.80,16.40) {};

\node[draw, circle, fill=black, inner sep=1.5pt
] at (5.60,17.20) {};
\node[draw, circle, fill=black, inner sep=1.5pt
] at (5.60,16.40) {};

\draw (1.60,20.00) -- (2.80,20.40) node[midway, above] {$x_1$};
\draw (1.60,20.00) -- (2.80,19.60) node[midway, below] {$x_1$};
\draw (2.80,20.40) -- (2.80,19.60) node[midway, right] {$x_3$};

\draw (4.00,20.00) -- (5.20,20.40) node[midway, above] {$x_3$};
\draw (4.00,20.00) -- (5.20,19.60) node[midway, below] {$x_3$};
\draw (5.20,20.40) -- (5.20,19.60) node[midway, right] {$x_4$};

\draw (1.60,18.40) -- (2.80,18.80) node[midway, above] {$x_2$};
\draw (1.60,18.40) -- (2.80,18.00) node[midway, below] {$x_2$};
\draw (2.80,18.80) -- (2.80,18.00) node[midway, right] {$x_3$};

\draw (4.00,18.40) -- (5.20,18.80) node[midway, above] {$x_3$};
\draw (4.00,18.40) -- (5.20,18.00) node[midway, below] {$x_3$};
\draw (5.20,18.80) -- (5.20,18.00) node[midway, right] {$x_5$};

\draw (1.60,16.80) -- (2.80,17.20) node[midway, above] {$y_1$};
\draw (1.60,16.80) -- (2.80,16.40) node[midway, below] {$y_1$};
\draw (2.80,17.20) -- (2.80,16.40) node[midway, right] {$y_2$};

\draw (4.00,17.20) -- (4.80,17.20) node[midway, below] {$x_2$};
\draw (4.80,17.20) -- (5.60,17.20) node[midway, below] {$z_1$};
\draw (4.00,16.40) -- (4.80,16.40) node[midway, below] {$x_5$};
\draw (4.80,16.40) -- (5.60,16.40) node[midway, below] {$y_2$};
\end{scope}

\begin{scope}[shift={(7,-5.12)}, scale=1.2]

\draw[black, fill=green!20, opacity=0.4, rotate around={0:(1.2,20.1)}] 
      (1.2,20.1) ellipse (1.25cm and 0.9cm);

\draw[black, fill=blue!15, opacity=0.4, rotate around={0:(2.4,18.75)}] 
      (2.4,18.75) ellipse (0.7cm and 0.35cm);

\draw[black, fill=red!15, opacity=0.4] 
      (0.4,18.75) ellipse (0.3cm and 0.3cm);

\node[draw, circle, fill=black, inner sep=1.5pt, label={above:$x_1$}] at (0.4,20.4) {};
\node[draw, circle, fill=black, inner sep=1.5pt, label={above:$x_3$}] at (1.2,20.0) {};
\node[draw, circle, fill=black, inner sep=1.5pt, label={above:$x_4$}] at (2.0,20.4) {};
\node[draw, circle, fill=black, inner sep=1.5pt, label={above:$x_2$}] at (0.4,19.6) {}; 
\node[draw, circle, fill=black, inner sep=1.5pt, label={above:$x_5$}] at (2.0,19.6) {};
\node[draw, circle, fill=black, inner sep=1.5pt, label={below:$y_1$}] at (2.0,18.8) {};
\node[draw, circle, fill=black, inner sep=1.5pt, label={below:$y_2$}] at (2.8,18.8) {};
\node[draw, circle, fill=black, inner sep=1.5pt, label={below:$z_1$}] at (0.4,18.8) {};

\draw[-{Stealth[length=2.1mm]}] (0.4,20.4) -- (1.2,20.0);
\draw[-{Stealth[length=2.1mm]}] (1.2,20.0) -- (2.0,20.4);
\draw[-{Stealth[length=2.1mm]}] (0.4,19.6) -- (1.2,20.0);
\draw[-{Stealth[length=2.1mm]}] (1.2,20.0) -- (2.0,19.6);

\draw[-{Stealth[length=2.1mm]}] (2.0,18.8) -- (2.8,18.8);

\draw (0.4,19.6) -- (0.4,18.8);
\draw (2.0,19.6) -- (2.8,18.8);
\end{scope}

\end{tikzpicture}
\caption{Illustrating the domination relation. On the left side, a hedge graph is shown by enumerating all $P_3$ and $K_3$ in the underlying graph. On the right side, the domination relation is depicted among the hedges by introducing a graph representation: for any two hedges $h$ and $h'$ (i) there is a directed edge from $h$ to $h'$ whenever $h \in D(h')$ and (ii) there is an edge $\{h,h'\}$ whenever there is a $P_3$ with edges spanned by $h$ and $h'$. Observe that if we remove the hedge $x_5$ then all hedges $x_1,x_2,x_3$ must be removed which constitute the hedges that dominate $x_5$, that is, $R(x_5)=\{x_1,x_2,x_3,x_5\}$. All hedges  of $R(x_5)$ reach $x_5$ with a directed path in this graph representation. 
}
\label{fig:domination}
\end{figure}
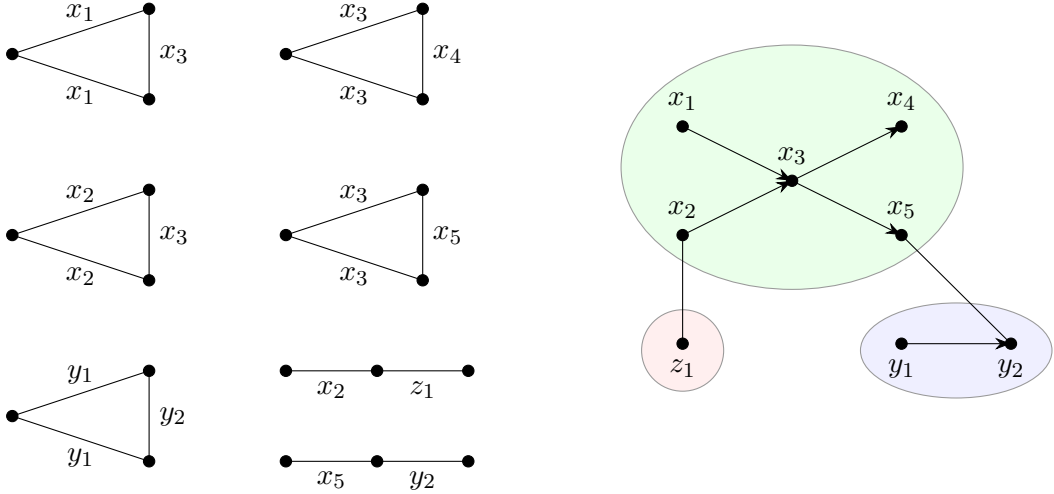

\begin{lemma}\label{lem:domproperties}
Let $H$ be a bi-hedge graph 
and let $x \in \mathcal{E}(H)$ be a hedge of $H$.
Then, the following hold. 
\begin{itemize}
    \item[(i)] For a hedge $y \in \mathcal{E}(H)$, we have $y\in R(x)$ if and only if $R(y) \subseteq R(x)$.
    \item[(ii)] Let $U$ be a solution for \HCD on $H$. If $x \in U$ then $R(x) \subseteq U$. 
    \item[(iii)] Any $P_3$ in the underlying graph of $H \setminus R(x)$ is a $P_3$ in the underlying graph of $H$. 
\end{itemize}
\end{lemma}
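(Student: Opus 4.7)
The plan is to prove the three parts in order so that (ii) and (iii) may use (i), and (iii) may use (i) again. Part (i) is essentially a bookkeeping fact about how the chains defining $R(\cdot)$ compose. Part (ii) is a cascade argument that uses the bi-hedge hypothesis to turn a failed deletion into an induced $P_3$. Part (iii) is the one that demands real care, because we must show that removing all of $R(x)$ at once does not create any new induced $P_3$, even though the cascade description suggests that intermediate removals within $R(x)$ could.

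For (i), the forward direction concatenates witnessing chains: if $\langle y = h_1, \ldots, h_p = x\rangle$ witnesses $y \in R(x)$ and $\langle z = g_1, \ldots, g_q = y\rangle$ witnesses $z \in R(y)$, then gluing the two chains at $y$ yields a valid chain from $z$ to $x$, so $z \in R(x)$. The converse direction follows from the trivial chain of length one, which gives $y \in R(y)$, hence $R(y)\subseteq R(x)$ implies $y\in R(x)$.

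For (ii), pick any $y \in R(x)$ with chain $\langle y = h_1, \ldots, h_p = x\rangle$ and show by downward induction on the index that $h_i \in U$. The base case $h_p = x \in U$ is the hypothesis. For the inductive step, suppose $h_{i+1} \in U$ but $h_i \notin U$. Since $h_i \in D(h_{i+1})$, there is a triangle $(a,b,c)$ in $H$ with $\{a,b\} \in E_{h_{i+1}}$; the bi-hedge hypothesis forces $\{a,c\}, \{b,c\}$ to share a single hedge, which must be $E_{h_i}$ by the definition of $D(\cdot)$. In $H \setminus U$, the edge $\{a,b\}$ is deleted while $\{a,c\}, \{b,c\}$ survive, producing the induced $P_3 = (a,c,b)$ and contradicting that $U$ is a solution.

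The main obstacle is (iii). Let $(a,b,c)$ be an induced $P_3$ in the underlying graph of $H \setminus R(x)$; I must rule out the possibility that $\{a,c\}$ was an edge of $H$ whose hedge lay in $R(x)$. Suppose for contradiction that $\{a,c\} \in E(H)$, and let $y_1, y_2, z$ be the hedges of $\{a,b\}, \{b,c\}, \{a,c\}$ respectively. Since $\{a,b\}$ and $\{b,c\}$ survive, $y_1, y_2 \notin R(x)$, while $z \in R(x)$. The bi-hedge property allows at most two distinct hedges among $\{y_1,y_2,z\}$; combined with $y_1,y_2\ne z$, the only possibility is $y_1 = y_2 \neq z$. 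But then the triangle $(a,b,c)$ lies in $SK_3(z)$ with $y_1$ covering its two non-$z$ edges, so $y_1 \in D(z) \subseteq R(z)$. Applying (i) to $z \in R(x)$ gives $R(z) \subseteq R(x)$, hence $y_1 \in R(x)$, contradicting $y_1 \notin R(x)$. Therefore $\{a,c\}$ is not an edge of $H$, and $(a,b,c)$ is already an induced $P_3$ of $H$.
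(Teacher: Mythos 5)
Your proof is correct and follows essentially the same route as the paper's: chain concatenation for (i), a rightmost-failure/induction argument along the dominating sequence for (ii), and the bi-hedge triangle case analysis combined with $D(z)\subseteq R(z)$ and part (i) for (iii). The only differences are presentational (downward induction versus picking the rightmost hedge outside $U$, and your slightly more explicit justification that the two surviving edges of the triangle share a hedge).
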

\begin{proof}
We start with the first claim. Assume that $y\in R(x)$ and there is a hedge $z \in R(y)$ such that $z \notin R(x)$. 
Since $z \in R(y)$, $z$ dominates $y$, which implies that there is a sequence $\langle z=h_1, \ldots, h_p=y \rangle$.
Moreover, $y\in R(x)$ implies that there is a sequence $\langle y=f_1, \ldots, f_q=x \rangle$. By combining the two sequences, we get $z$ dominates $x$ and so $z \in R(x)$ which contradicts our assumption. 
For the opposite direction, by definition $y \in R(y)$. Since $R(y)\subseteq R(x)$, we get $y \in R(x)$.

We continue with the second one. 
Assume $x \in U$. Let $\langle h_1, \ldots, h_j=y, \ldots,h_p=x \rangle$ be a sequence that dominates $x$ and let $y$ be the rightmost hedge for which holds $y\notin U$. This means $h_{j+1} \in U$.
Moreover, $y\in D(h_{j+1})$  which implies there is a $K_3 \in SK_3(h_{j+1})$ with the two other edges of $K_3$ belong to hedge $y$. 
Since $h_{j+1} \in U$, there is a $P_3$ in $H \setminus U$ that is formed by the two edges that belong to hedge $y$. 
Hence, there is a contradiction since $U$ is a solution for \HCD.

Next, assume for contradiction that there is a $P_3$ in the underlying graph of $H \setminus R(x)$ that is not a $P_3$ in the underlying graph of $H$. 
This means that the vertices of $P_3$ form a $K_3$ in the underlying graph of $H$.
Since $H$ is a bi-hedge graph, the $K_3$ belongs to $SK_3(y)$ for some $y \in R(x)$.
Thus, the other two edges of the $K_3$ belong in the same hedge, we call this hedge $z$.
Hence, $z \in D(y)$ which implies $z \in R(y)$ and, thus, by (i) we get $z \in R(x)$.
Therefore, there is no such $P_3$ in the underlying graph of $H \setminus R(x)$.    
\end{proof}

Moreover, it is not difficult to see that computing all domination relations between the hedges can be done in polynomial time. 

\begin{lemma}\label{lem:polyR}
Given a bi-hedge graph $H$, there is a polynomial-time algorithm to compute $R(x)$ for any hedge $x$ of $H$. 
\end{lemma}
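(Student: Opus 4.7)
The plan is to reduce the computation of $R(x)$ to a standard reachability problem in an auxiliary directed graph whose vertex set is $\mathcal{E}(H)$ and whose arcs encode the one-step domination relation $D(\cdot)$.

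First, I would enumerate all triangles in the underlying graph of $H$ in polynomial time (for instance in $O(n^3)$, or more efficiently as noted in the discussion following \Cref{lem:allPandK}). For each triangle $K_3=(a,b,c)$, since $H$ is a bi-hedge graph, its three edges are covered by at most two hedges; I can therefore determine in constant time whether the triangle contributes to $SK_3(z)$ for some hedge $z$, namely when exactly one edge of the triangle lies in $E_z$ and the other two lie in a common hedge $E_y\neq E_z$. Iterating over all triangles and inspecting their hedge labels yields $SK_3(z)$, and hence $D(z)$, for every hedge $z$ simultaneously, within polynomial time.

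Next, I would build a directed graph $\mathcal{D}$ on vertex set $\mathcal{E}(H)$ by inserting an arc from $y$ to $z$ whenever $y\in D(z)$. By definition of the domination relation, a hedge $y$ dominates $x$ precisely when there is a directed path from $y$ to $x$ in $\mathcal{D}$; consequently, $R(x)$ consists of $x$ together with all vertices of $\mathcal{D}$ from which $x$ is reachable. This set can be obtained by a reverse BFS (or DFS) from $x$ in $\mathcal{D}$, running in time linear in the size of $\mathcal{D}$, which is polynomial in the size of $H$.

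The correctness of the reduction is immediate from the definitions of $D(\cdot)$ and $R(\cdot)$, so there is no real technical obstacle; the only point that requires care is the bookkeeping when a triangle is covered by exactly one hedge (which contributes to no $SK_3(z)$) or by two hedges in the unbalanced $2$-$1$ pattern that is needed to populate $SK_3(z)$. The bi-hedge assumption is exactly what guarantees that this pattern is well-defined and that $D(z)$ can be read off locally from each triangle, keeping every step polynomial-time computable.
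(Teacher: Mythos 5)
Your proposal is correct and follows essentially the same route as the paper: enumerate all triangles, read off $SK_3(\cdot)$ and hence $D(\cdot)$ locally using the bi-hedge assumption, encode the one-step domination relation as arcs of an auxiliary directed graph, and recover $R(x)$ as the set of vertices from which $x$ is reachable. The arc orientation and the reverse-reachability step match the paper's construction exactly, so nothing further is needed.
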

\begin{proof}
We begin by enumerating all triangles of the underlying graph of $H$ in $O(n^3)$ time. 
We then compute $D(x)$ by exploring the set of triangles $SK(x)$. Then, for any triangle we know that there are at most two hedges involved, since $H$ is bi-hedge. 
Thus we can create an auxiliary directed graph (with bi-directional edges allowed) that encapsulates all relations for the set $D(x)$. 
To compute the domination relation it is enough to find all vertices that reach $x$ in the auxiliary graph by \Cref{lem:domproperties}. Such computations can be done in $O(n^3 \ell)$ time, where $\ell$ is the number of hedges in $H$. 
\end{proof}

For the approximation algorithm, we will reduce \HCD  to \VC in two steps. In particular,  we first define a variant of \VC where apart from the given graph $G$,  a set of vertex-lists $\mathcal{L}=\{L(v)\subseteq V(G) \, |\, v \in V(G)\}$ is also given and the goal is to find a minimum vertex cover $S\subseteq V(G)$ such that $\bigcup_{v \in S} L(v)=S$. For a set $X$ of vertices, we write $L(X)$ to denote $\bigcup_{x \in X} L(x)$.  
We refer to the latter problem as \MVC which is formally defined as follows. 
\vspace*{-0.01in}
\pbDef{\MVC}
	{A graph $G$, $L(v) \subseteq V(G)$ for all $v \in V(G)$,  and a non-negative integer $k$.}
	{Decide whether there is a vertex cover $S \subseteq V(G)$ such that $L(S)=S$ and $\lvert S \rvert \leq k$.}

We are not aware if \MVC can be approximated within a constant factor. 
Instead, we show it is enough to consider a restricted variation of \MVC that fulfills the following property on the vertex-lists: 
\begin{itemize}
    \item[(P1)] for every $x \in L(y)$, we have $L(x) \subseteq L(y)$.
\end{itemize}

Notice that such a property is stated within \Cref{lem:domproperties}~(i) with respect to the domination relation of the hedges. 
We now give the claimed reduction. For technical reasons, by \Cref{obs:internalP3} we can assume there is no hedge that spans an internal $P_3$ in the bi-hedge graph.

\begin{lemma}\label{lem:HCDtoMVC}
Let $H$ be a bi-hedge graph in which no hedge spans an internal $P_3$. There exists a polynomial time algorithm that, given an instance $(H,k)$ for \HCD, produces an equivalent instance for \MVC that satisfies (P1).
\end{lemma}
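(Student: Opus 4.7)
The plan is to construct an auxiliary graph $G$ whose vertex set is the set of hedges $\mathcal{E}(H)$ and whose edges encode the ``simple'' $P_3$ obstructions in $H$, and then attach lists derived from the domination sets $R(\cdot)$. Concretely, I would set $V(G) := \mathcal{E}(H)$ and add $\{x,y\} \in E(G)$ whenever there exists a $P_3$ in the underlying graph of $H$ one of whose edges belongs to $E_x$ and the other to $E_y$ (since no hedge spans an internal $P_3$, such a $P_3$ necessarily involves two distinct hedges). For each hedge $x$, I set $L(x) := R(x)$; these lists are computable in polynomial time by \Cref{lem:polyR}, and property (P1) is immediate from \Cref{lem:domproperties}(i), which says precisely that $x \in R(y)$ implies $R(x) \subseteq R(y)$. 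Finally, set $k' := k$ and output $(G,\mathcal{L},k')$.

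For the forward direction of the equivalence, let $U$ be a solution of \HCD with $|U| \le k$. Each edge $\{x,y\}$ of $G$ corresponds to a simple $P_3$ in $H$ whose two edges lie in $E_x$ and $E_y$; since $H \setminus U$ must be $P_3$-free, we need $x \in U$ or $y \in U$, so $U$ is a vertex cover of $G$. Moreover, by \Cref{lem:domproperties}(ii) we have $R(x) \subseteq U$ for every $x \in U$, hence $L(U) = \bigcup_{x \in U} R(x) \subseteq U$; combined with $x \in R(x) = L(x)$, which yields $U \subseteq L(U)$, this gives the required equality $L(U) = U$.

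For the reverse direction, suppose $S$ is a solution of the constructed \MVC-instance with $|S| \le k$. The constraint $L(S) = S$ says $R(x) \subseteq S$ for every $x \in S$, i.e., $S$ is closed under domination. I claim that the underlying graph of $H \setminus S$ contains no $P_3$. Any such $P_3 = (u,v,w)$ must already be a $P_3$ in the underlying graph of $H$, by the argument of \Cref{lem:domproperties}(iii) extended to the closed set $S$: otherwise $\{u,v,w\}$ forms a $K_3$ in $H$, which by the bi-hedge property belongs to $SK_3(y)$ for some $y \in S$, and its remaining two edges lie in a single hedge $z \in D(y) \subseteq R(y) \subseteq S$, contradicting the presence of those edges in $H \setminus S$. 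So the $P_3$ already exists in $H$ and is spanned by two hedges $x,y \notin S$; this gives an edge $\{x,y\}$ of $G$ uncovered by $S$, contradicting that $S$ is a vertex cover. Hence $H \setminus S$ is a cluster hedge-subgraph, and $S$ is a valid \HCD-solution.

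The main technical point I anticipate is the extension of \Cref{lem:domproperties}(iii) from ``removing one set $R(x)$'' to ``removing the closed set $S$'': the triangle argument used in that lemma transfers verbatim once one invokes the closure $R(x) \subseteq S$ for every $x \in S$, which is exactly what $L(S)=S$ guarantees. The hypothesis that no hedge spans an internal $P_3$ is what ensures the simple-$P_3$ encoding as edges of $G$ is unambiguous (each edge of $G$ has two distinct endpoints), so the correspondence between solutions is a perfect identity $U \leftrightarrow S$ with no change in parameter, preserving the equivalence.
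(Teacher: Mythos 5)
Your proposal is correct and follows essentially the same route as the paper: the same auxiliary graph $G$ on the hedge set with edges for hedge-pairs spanning a $P_3$, lists $L(x)=R(x)$, property (P1) via \Cref{lem:domproperties}~(i), the forward direction via \Cref{lem:domproperties}~(ii), and the reverse direction via the same case split on whether a $P_3$ of $H\setminus S$ was already a $P_3$ in $H$ (covered edge of $G$) or arose from a $K_3$ (handled by the bi-hedge property and closure of $S$ under domination). The only cosmetic difference is that you phrase the $K_3$ case as an extension of \Cref{lem:domproperties}~(iii) to the closed set $S$, whereas the paper argues it directly; the content is identical.
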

\begin{proof}
  Let $H=(V,\mathcal{E})$ be a bi-hedge graph. From $H$, we construct a graph $G$ and a set $\mathcal{L}$ for \MVC as follows. For every $E_x \in \mathcal{E}$, there is a corresponding vertex $x$ in $V(G)$ and there is an edge between $x$ and $y$ if the corresponding hedges $E_x$ and $E_y$ form a $P_3$ in $H$. Moreover, for every vertex $v \in V(G)$, we have $L(v)=R(E_v)$ for the corresponding hedge $E_v$. 
%
Observe that the construction takes polynomial time because $V(G)$ is exactly $|\mathcal{E}|$ and every $L(v)$ can be computed in polynomial time by \Cref{lem:polyR} and has size at most $|\mathcal{E}|$. 
Notice also that, by construction, the statements for the hedges of \Cref{lem:domproperties} hold for the corresponding vertices in $G$. Thus, property (P1) is satisfied for the vertex-lists of $G$.  
We claim that there is a solution $U$ for \HCD of size at most $k$ if and only if there is a solution $S$ for \MVC in $G$ of size at most $k$.

Let $U$ be a solution for \HCD with $k$ hedges. 
For every $P_3$, there exists a hedge $x$ such that $x \in U$. 
We claim that $S = U$ is a solution for \MVC in $G$. 
By construction, for every $P_3$ in $H$ there is an edge in $G$. Hence, we have that $S$ is a vertex cover for $G$, since there are no edges in $G - S$. 
Moreover, by \Cref{lem:domproperties}~(ii) we have that $R(E_x) \subseteq U$ for any hedge $E_x$. In terms of the vertex-lists of $G$, this means that $L(S) = S$, because $x \in L(x)$ for any vertex of $G$. 
Therefore, $S$ is a vertex cover for $G$ with $L(S) =S$ and $|S|\leq k$.

For the opposite direction, assume that there is a vertex cover $S$ of size at most $k$ for $G$ such that $L(S)=S$ and the vertex-lists satisfy property (P1).
We claim that $U = S$ is a solution for \HCD. 
Assume for contradiction that there is at least one $P_3$ in $H \setminus U$. 
Suppose that the underlying graph of $H \setminus U$ contains a $P_3=(a,b,c)$ induced by the vertices $a,b,c$. 
Let $E_x$ be a hedge of $H \setminus U$ such that $\{a,b\} \in E_x$.  
If $a,b,c$ induce a $P_3$ in $H$, then we have $\{b,c\} \in E_y$ with $E_x \neq E_y$, since there are no hedges in $H$ that span an internal $P_3$.
Thus $x$ or $y$ belongs to $S$, because $S$ is a vertex cover of $G$.
Hence, the vertices $a,b,c$ form a $K_3$ in $H$. As $H$ is bi-hedge, there are two hedges involved in the $K_3$. 
Since $\{a,c\}$ is not an edge of $H \setminus U$, the single hedge of the $K_3$ contains the edge $\{a,c\}$, so that $\{a,b,c\} \in SK(y)$. 
We conclude that $\{a,b\}, \{b,c\} \in E_x$, $\{a,c\} \in E_y$, and $E_y \in U$. 
Thus, by definition we have $x \in D(y)$, which implies $x \in R(y)$. 
This leads to a contradiction by \Cref{lem:domproperties}~(ii), since $y \in U$ and $R(y) \nsubseteq U$.
Therefore, $U$ is a solution for \HCD.
\end{proof}

We turn our attention to solve \MVC that satisfies property (P1). We take advantage of the following result. 

\begin{observation}\label{obs:capL}
For every edge $\{x,y\} \in E(G)$, we have $L(x)\cap L(y) \subseteq S$, where $S$ is any solution of \MVC. 
\end{observation}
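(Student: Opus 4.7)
The plan is to observe that the constraint $L(S)=S$ in the definition of \MVC forces $L(v)\subseteq S$ for every vertex $v\in S$, and then combine this with the vertex-cover property on the edge $\{x,y\}$.

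More precisely, I would proceed as follows. First, unfold the condition $L(S)=S$: since $L(S)=\bigcup_{v\in S} L(v)=S$, for every individual $v\in S$ we automatically have $L(v)\subseteq S$. Next, fix an arbitrary edge $\{x,y\}\in E(G)$. Because $S$ is a vertex cover of $G$, at least one of its endpoints must lie in $S$; without loss of generality assume $x\in S$ (the case $y\in S$ is symmetric). Applying the observation of the first step to $x$, we get $L(x)\subseteq S$, and therefore
\[
L(x)\cap L(y)\;\subseteq\; L(x)\;\subseteq\; S,
\]
which is exactly the desired inclusion.

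Notice that the argument does not actually invoke property~(P1); the inclusion $L(x)\cap L(y)\subseteq S$ follows purely from the two defining constraints of a feasible \MVC solution (vertex cover and $L(S)=S$). Consequently there is no real obstacle here: the statement is essentially a direct unpacking of definitions, and property~(P1) will only come into play in subsequent lemmas where one wants to convert \MVC on instances satisfying (P1) into an unconstrained \VC instance by adding suitable edges between endpoints whose lists are disjoint modulo the common vertices placed into the solution by this observation.
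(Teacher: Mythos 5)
Your proof is correct and follows essentially the same route as the paper's: both arguments use the vertex-cover property to place one endpoint, say $x$, in $S$, and then the constraint $L(S)=S$ to conclude $L(x)\subseteq S$; the paper merely phrases this as a proof by contradiction on a vertex $z\in L(x)\cap L(y)\setminus S$, while you state it directly. Your added remark that property~(P1) is not needed here is also consistent with the paper, which likewise does not invoke it in this observation.
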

\begin{proof}
Let $z$ be a vertex such that $z \in L(x)\cap L(y)$. Assume for contradiction that $z\notin S$.
Since $\{x,y\}$ is an edge of $G$, either $x$ or $y$ belongs to $S$. 
Without loss of generality, assume that $x\in S$.
By the fact that $L(S)=S$ and $x\in S$, we have $L(x)\subseteq S$. The latter leads to a contradiction since $z\notin S$ and $z \in L(x)$.
Hence, $L(x)\cap L(y)\subseteq S$.
\end{proof}

\noindent We say that a solution $S$ is \emph{minimal} for \MVC  if for every vertex $x \in S$, 
\begin{itemize}
    \item there is an edge $\{x,y\} \in E(G)$ with $y\notin S$, or 
    \item there is a vertex $y \in S$  such that $x \in L(y)$ and $\{y,z\} \in E(G)$ with $z\notin S$. 
\end{itemize}

\begin{figure}[t]
\centering
\begin{tikzpicture}[scale=2]

\draw[
  black,
  fill=blue!20,
  opacity=0.4,
  rotate around={0:(1.6,18.7)}
] (1.6,18.7) ellipse (1.8cm and 0.5cm);

\draw (0.4,19.4) -- (0.4,18.8);
\draw (2.8,19.4) -- (2.8,18.8);

\node[draw, circle, fill=white, inner sep=1.5pt, label={above:$\{ \underline{x_1}\}$}] at (-0.4,19.4) {};
\node[draw, circle, fill=white, inner sep=1.5pt, label={above:$\{ \underline{x_3}, x_2, x_1\}$}] at (1.4,19.4) {};
\node[draw, circle, fill=white, inner sep=1.5pt, label={above:$\{ \underline{x_4}, x_3, x_2, x_1\}$}] at (4.2,19.4) {};
\node[draw, circle, fill=white, inner sep=1.5pt, label={above:$\{ \underline{x_2}\}$}] at (0.4,19.4) {}; 
\node[draw, circle, fill=white, inner sep=1.5pt, label={above:$\{ \underline{x_5}, x_3, x_2, x_1\}$}] at (2.8,19.4) {};
\node[draw, circle, fill=white, inner sep=1.5pt, label={below:$\{ \underline{y_1}\}$}] at (2.0,18.8) {};
\node[draw, circle, fill=black, inner sep=1.5pt, label={below:$\{ \underline{y_2}, y_1\}$}] at (2.8,18.8) {};
\node[draw, circle, fill=black, inner sep=1.5pt, label={below:$\{ \underline{z_1}\}$}] at (0.4,18.8) {};

\end{tikzpicture}
\caption{This graph corresponds to the graph obtained from \Cref{fig:domination} towards the construction given in \Cref{lem:HCDtoMVC} for the \MVC problem. Every vertex $v$ is equipped with a list $L(v)$; the unique vertex with $v \in L(v)$ is shown underlined. We also highlight the difference between a minimum vertex cover (represented by the black vertices) and a  minimum multi-vertex cover (indicated by the shaded region).}
\label{fig:multi-vertex cover}
\end{figure}
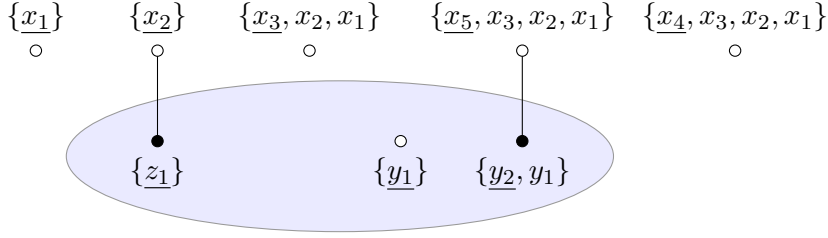
\begin{lemma}\label{lem:MVCtoVC}
If property (P1) is satisfied then there exists a 2-approximation polynomial-time algorithm for \MVC.
\end{lemma}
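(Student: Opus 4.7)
The plan is to reduce the constrained problem \MVC on instances satisfying (P1) to plain \VC on a cleverly augmented graph, and then invoke a known polynomial-time $2$-approximation for \VC. The reduction proceeds in two phases. First, by \Cref{obs:capL}, for every edge $\{x,y\} \in E(G)$ the set $L(x) \cap L(y)$ must lie in any feasible solution; I therefore commit the set $T = \bigcup_{\{x,y\} \in E(G)} L(x) \cap L(y)$ to the solution up front, at zero approximation cost. Second, I build an augmented graph $G'$ on vertex set $V(G) \setminus T$ by adding, for every remaining edge $\{x,y\}$, all missing edges between $L(x) \setminus T$ and $L(y) \setminus T$ so that these two sets induce a complete bipartite subgraph in $G'$, on top of the original edges of $G[V \setminus T]$.

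The key structural observation is that any vertex cover of a complete bipartite graph contains at least one full side. Combined with property (P1), which nests the lists via $L(v) \subseteq L(u)$ whenever $v \in L(u)$, this forces any \emph{minimal} vertex cover $S'$ of $G'$ to be list-closed outside $T$, that is, $L(v) \subseteq T \cup S'$ for every $v \in S'$. Consequently $T \cup S'$ is a feasible \MVC solution on the original instance.

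For the approximation ratio, I would establish the two required inequalities. For any feasible \MVC solution $S^*$ on $G$ we have $T \subseteq S^*$, and $S^* \setminus T$ is a vertex cover of $G'$: every added edge between $L(x) \setminus T$ and $L(y) \setminus T$ is covered because the original edge $\{x,y\}$ is covered, say by $x \in S^*$, and then $L(x) \subseteq S^*$ by the list-closure $L(S^*) = S^*$. Hence the optimum of \VC on $G'$ is at most $\mathrm{OPT}_{\MVC}(G) - |T|$. Running any polynomial-time $2$-approximation for \VC on $G'$ (e.g., \cite{Karakostas09,vazirani2001}), pruning the output to a minimal vertex cover (which only removes vertices), and returning $T$ together with the pruned cover yields a feasible \MVC solution of size at most $|T| + 2\,\mathrm{OPT}(G') \le 2\,\mathrm{OPT}_{\MVC}(G)$.

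The main technical obstacle I anticipate lies in justifying the list-closure step: one must argue carefully that every minimal vertex cover of $G'$ is automatically list-closed, so that $T \cup S'$ is feasible for \MVC. The nested structure provided by (P1) is essential here, because it guarantees that whenever a vertex $u$ is selected to cover a specific edge, every $v \in L(u)$ sits on the same side as $u$ in each complete bipartite expansion that involves $u$, and therefore cannot be pruned without uncovering an augmented edge. Once this list-closure is justified, the remainder is routine bookkeeping that combines the standard \VC $2$-approximation with the upfront commitment of $T$.
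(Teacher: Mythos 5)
Your proposal is correct and follows essentially the same route as the paper: commit the forced set $T=\bigcup_{\{x,y\}\in E(G)}L(x)\cap L(y)$ up front via \Cref{obs:capL}, form the auxiliary graph on $V(G)\setminus T$ by completing each $L(x),L(y)$ pair into a complete bipartite subgraph, use property (P1) together with minimality of the vertex cover to obtain list-closure (this is exactly the role of \Cref{claim:subsets} and the minimality analysis in the paper), and finish with the standard $2$-approximation for \VC plus a pruning step. The only difference is presentational: the paper establishes a two-way correspondence between minimal \MVC solutions and minimal vertex covers of $G'$ with sizes offset by $|T|$, whereas you prove the two one-sided inequalities directly, which yields the same bound.
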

\begin{proof}
Given a graph $G$ and a set $\mathcal{L}$ for \MVC that satisfies (P1), we construct an auxiliary graph $G'$ as follows:
$$
V(G') = V(G)\setminus \{z \,|\, z\in L(x)\cap L(y) \, \text{ with } \{x,y\}\in E(G)\}. 
$$
For every edge $\{x,y\}\in E(G)$ with $x,y\in V(G')$, 
we add all edges between $L(x)$ and $L(y)$ to $E(G')$, that is, $E(L(x),L(y))\subseteq E(G')$. This completes the description of $G'$. 
Observe that not only the given sets $L(x)$ and $L(y)$ are vertex-disjoint as vertices that belong to $G'$, but for any two vertices $x' \in L(x)$ and $y' \in L(y)$, we know that $L(x')$ and $L(y')$ are vertex-disjoint by property (P1). 
Also notice that the subgraph of $G$ induced by the vertices $V(G')$, is a subgraph of $G'$. 
In particular, we have the following observation by construction: for every edge $\{x',y'\}\in E(G')$ there is an edge $\{x,y\}\in (E(G') \cap E(G))$ such that $x' \in L(x)$ and $y' \in L(y)$.

\begin{claim}\label{claim:subsets}
For any $x,y$ of $V(G')$ with $L(x)\subseteq L(y)$ on $G$, we have $N_{G'}(x)\supseteq N_{G'}(y)$.
\end{claim}
\begin{claimproof}
   Since $L(x)\subseteq L(y)$, we have $x \in L(y)$ by $x\in L(x)$ and property (P1). Let $z$ be a neighbor of $y$ in $G'$. By the construction of $G'$, we know that $z \in L(w)$ for some vertex $w$ of $G'$ and $\{y,w\} \in E(G)$, so that $\{y,z\} \in E(L(y),L(w))$. As $x \in L(y)$, we have that $\{x,z\} \in E(L(y),L(w))$, which concludes the claim. 
\end{claimproof}


We now map the \MVC problem on $G$ to the \VC problem on $G'$. Our goal is to prove the following: if $S'$ is an $\alpha$-approximate solution for \VC on $G'$ then there is a polynomial-time algorithm to construct an $\alpha$-approximate solution $S$ for \MVC on $G$. 
Instead of looking at any solution, we consider the \emph{minimal} ones. 
We claim that there is a minimal solution $S$ for \MVC if and only if there is a minimal solution $S'$ for \textsc{Vertex Cover} in $G'$ and such a mapping between $S$ and $S'$ can be done in polynomial time.

Let $S$ be a minimal solution for \MVC. 
We show that $S' = S \cap V(G')$ is a minimal vertex cover for $G'$.
We begin by showing that $S'$ is a vertex cover for $G'$. 
Suppose that there is an edge $\{x',y'\}$ in $G'$ such that $x',y'\notin S'$. 
Then, 
we know that there is an edge $\{x,y\} \in (E(G) \cap E(G'))$ such that $x' \in L(x)$ and $y' \in L(y)$. This means that $x$ or $y$ belongs to $S$, and, by the fact that $L(S)=S$, we have $x'$ or $y'$ belonging to $S$. Since $x',y'$ are both vertices of $G'$, we conclude that $S'$ contains at least one of them. Hence, $S'$ is a vertex cover for $G'$.
%

Next, we show the minimality for $S'$.
Let $x \in S \cap V(G')$. We show that there exists a vertex $w$ in $G'$ such that $w \notin S'$ with $\{x,w\} \in E(G')$. By construction, $x \in S$. Since $S$ is minimal, we have two cases to consider. 
Assume that there is a vertex $y \notin S$ such that $\{x,y\} \in E(G)$. 
Then $y$ is a vertex of $G'$, because for all vertices of $V(G)\setminus V(G')$ we know that they belong to $S$ by \Cref{obs:capL}. Thus $\{x,y\} \in E(G')$ by construction, which means that $S'$ is minimal because $y \notin S'$ and $x \in S$. 
Assume that there is a vertex $y \in S$ such that $x \in L(y)$ and $\{y,z\} \in E(G)$ with $z\notin S$. 
Again, $z\notin S$ implies that $z \in V(G')$. 
Now if $y \notin V(G')$ then by property (P1) we have $x \notin V(G')$, which is not possible. 
Thus $x,y,z$ are vertices of $G'$, which implies that $\{x,z\} \in E(G')$ by \Cref{claim:subsets}, since $x \in L(y)$. 
Therefore, $S'$ is a minimal vertex cover for $G'$. 


For the opposite direction, assume a minimal vertex cover $S'$ for $G'$.
We show that $S=S'\cup \{z \,|\, z\in L(x)\cap L(y) \, \text{with } \{x,y\}\in E(G)\}$ is a minimal solution of \MVC. 
By \Cref{obs:capL}, all vertices of $L(x)\cap L(y)$ with $\{x,y\} \in E(G)$ belong to $S$. 
Since $S'$ is a minimal vertex cover for $G'$, there exists a vertex $x' \in S'$ and a vertex $y' \notin S'$ for some edge $\{x',y'\} \in G'$.
By the discussion above, we know that there is an edge $\{x,y\}\in (E(G') \cap E(G))$ such that $x' \in L(x)$ and $y' \in L(y)$. 
We show that $x \in S$. To see this, observe that there is the edge $\{x,y'\} \in E(G')$ by construction. 
Since $y' \notin S'$ and $S'$ is a vertex cover, we conclude that $x\in S'$ which implies that $x \in S$. 
We further show that $y \notin S'$. 
If $y \in S'$, there is an edge $\{y,z\} \in E(G')$ such that $z\notin S'$ because $S'$ is minimal. 
Since $y' \in L(y)$, we have $\{y',z\}\in E(G')$. However, $y',z \notin S'$ leads to a contradiction, since $S'$ is a vertex cover for $G'$. 
Hence, $y \notin S'$ which means that $y \notin S$. 
To conclude, for any edge $\{x,y\}$ of $E(G)$, we have 
$x\in S$ and $y\notin S$, or 
$x\in S$ and there is a vertex $x' \in S$ such that $x \in L(x')$ and $\{x',y\} \in E(G)$ with $y \notin S$. This is exactly the definition of the minimal solution and we conclude that $S'$ is indeed a minimal solution for \MVC. 

It is known that there is a 2-approximation algorithm for \textsc{Vertex Cover} \cite{vazirani2001}. 
Let $S^*$ be a $2$-approximate solution for \VC on $G'$. 
We turn $S^*$ into a minimal vertex cover $S'$ on $G'$ by removing from $S^*$ all unnecessary vertices (that is, vertices $v$ with $N_{G'}(v)\subseteq S'$) and construct $S'$ in polynomial time. 
Since $S'\subseteq S^*$, the approximation ratio remains the same for $S'$.
From a $2$-approximate solution $S'$ for \VC on $G'$, we can compute a $2$-approximate solution $S$ for \MVC on $G$ by setting $S=S'\cup \{z \,|\, z\in L(x)\cap L(y) \, \text{ with } \{x,y\}\in E(G)\}$. Correctness follows from the previous discussion. 
Therefore, \MVC can be 2-approximated in polynomial time. 
\end{proof}

Now, we are ready to give the claimed constant-factor approximation algorithm for \HCD on bi-hedge graphs.

\begin{theorem}\label{theo:approxHCD}
    There is a polynomial-time $2$-approximation algorithm for \HCD on bi-hedge graphs.
\end{theorem}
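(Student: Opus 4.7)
The plan is to chain together the two reductions developed in the section, after handling the forced hedges identified by \Cref{obs:internalP3} as a preprocessing step, and then invoke the classical $2$-approximation for \VC.

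First I would preprocess the input bi-hedge graph $H$ so that no hedge spans an internal $P_3$, since \Cref{lem:HCDtoMVC} requires this hypothesis. Let $F_0$ be the set of hedges of $H$ that span an internal $P_3$; by \Cref{obs:internalP3}, $F_0$ is contained in every feasible solution. Put $F_0$ into the output and set $H_1 := H \setminus F_0$. Removing hedges may convert a triangle of $H$ (whose two like-hedge edges belonged to some $E_i$) into a new internal $P_3$ for $E_i$ in $H_1$, so I would iterate: identify the hedges spanning internal $P_3$'s in the current graph, add them to the output, and remove them, until a hedge graph $H^\star$ with no internal-$P_3$-spanning hedge is reached. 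Each iteration strictly decreases the number of hedges, so the loop finishes in polynomial time, and removing edges never introduces new triangles, so $H^\star$ is still a bi-hedge graph. Writing $F^\star$ for the union of all hedges collected during preprocessing, \Cref{obs:internalP3} applied inductively gives $F^\star \subseteq U^\star$ for every optimal solution $U^\star$ of \HCD on $H$, hence $|F^\star| \le \mathrm{OPT}$, and moreover $\mathrm{OPT} = |F^\star| + \mathrm{OPT}^\star$ where $\mathrm{OPT}^\star$ denotes the optimum on $H^\star$.

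Next I would apply \Cref{lem:HCDtoMVC} to $H^\star$ to construct in polynomial time an equivalent \MVC instance $(G,\mathcal{L})$ whose vertex-lists satisfy property (P1). I then invoke \Cref{lem:MVCtoVC} to compute in polynomial time a solution $S$ for this \MVC instance with $|S| \le 2 \cdot \mathrm{OPT}^\star$. Translating $S$ back through the correspondence of \Cref{lem:HCDtoMVC} yields a set $U^\star$ of hedges of $H^\star$ with $|U^\star| \le 2 \cdot \mathrm{OPT}^\star$ such that $H^\star \setminus U^\star$ is a cluster hedge-subgraph. Finally, I would return $U := F^\star \cup U^\star$, which is a feasible solution for \HCD on $H$ since removing the forced hedges from $H$ produces $H^\star$ and then $U^\star$ destroys the remaining $P_3$'s.

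For the approximation ratio I would bound
\[
|U| \;=\; |F^\star| + |U^\star| \;\le\; |F^\star| + 2\cdot \mathrm{OPT}^\star \;\le\; 2\bigl(|F^\star| + \mathrm{OPT}^\star\bigr) \;=\; 2\cdot\mathrm{OPT},
\]
so $U$ is a $2$-approximate solution. All intermediate computations, namely the preprocessing, the construction of $(G,\mathcal{L})$, the $2$-approximate vertex cover, and the translation back to hedges, are polynomial by \Cref{lem:polyR,lem:HCDtoMVC,lem:MVCtoVC} and the classical $2$-approximation for \VC. I do not expect a real obstacle here: the only subtle point is that the preprocessing must be iterated, because deleting hedges can create fresh internal $P_3$'s, but termination and correctness of the iteration follow directly from \Cref{obs:internalP3} and the fact that deletion of edges preserves the bi-hedge property.
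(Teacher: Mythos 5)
Your proposal is correct and follows essentially the same route as the paper's proof: strip the hedges forced by \Cref{obs:internalP3}, reduce the residual instance to \MVC via \Cref{lem:HCDtoMVC}, apply the $2$-approximation of \Cref{lem:MVCtoVC}, translate back, and combine the two parts of the solution with the standard $|F^\star|+2\,\mathrm{OPT}^\star\le 2\,\mathrm{OPT}$ accounting. The one point where you are actually more careful than the paper is the iterated preprocessing: the paper removes the set $Z$ of internal-$P_3$-spanning hedges only once before invoking \Cref{lem:HCDtoMVC}, even though deleting a hedge $E_y$ that covers the single odd edge of a triangle whose other two edges lie in $E_x$ turns that triangle into a fresh internal $P_3$ for $E_x$ in $H\setminus Z$, so your iteration (terminating and optimality-preserving, as you argue) is exactly what is needed to guarantee that the hypothesis of \Cref{lem:HCDtoMVC} holds for the graph handed to it.
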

\begin{proof}
Let $H$ be the given hedge graph. We show how to compute an $\alpha$-approximate solution $U$ for \HCD on $H$ with $\alpha=2$ by using an $\alpha$-approximation algorithm for \VC\footnote{We use $\alpha$ to denote that any $\alpha$-approximation algorithm for \VC results in an $\alpha$-approximation algorithm for our problem, as well. We state our results for $\alpha=2$, though it is known that \VC can be approximated within factor $\alpha \leq 2$ \cite{Karakostas09,vazirani2001}.}.
We begin by computing the hedges $Z$ that span internal $P_3$. 
By \Cref{obs:internalP3}, we know that all hedges of $Z$ belong to any solution. 
Given the instance $(H\setminus Z,k)$ for \HCD, 
we construct an equivalent instance $(G, \mathcal{L}, k)$ for \MVC by \Cref{lem:HCDtoMVC} in polynomial time.
Moreover, by \Cref{lem:MVCtoVC} we compute a minimal solution $S$ for \MVC on $(G, \mathcal{L}, k)$.
Then, we use the solution $S$ of \MVC to compute an equivalent solution $U$ for \HCD.
Since the solution $S$ for \MVC is an $\alpha$-approximate solution, the solution $U$ for \HCD on $(H\setminus Z,k)$ is also an $\alpha$-approximate solution by \Cref{lem:HCDtoMVC}. 
Therefore, $U \cup Z$ is an $\alpha$-approximate solution for \HCD on $H$.  
\end{proof}

\section{Acyclic hedge intersection graph}\label{sec:acyclic}
Here we consider the natural underlying structure formed by the hedges that we call \emph{hedge intersection graph} intended to represent the pattern of intersections among the hedges. Our main result is a polynomial-time algorithm whenever the hedge intersection graph is acyclic. 

Let $H=(V,\mathcal{E})$ be a hedge graph. The \emph{hedge intersection graph} of $H$ is denoted by $\mathcal{F}$ and defined as a simple graph with the following sets of vertices and edges: $V(\mathcal{F}) = \mathcal{E}$ and two vertices (that correspond to the hedges of $H$) $E_i$ and $E_j$ of $\mathcal{F}$ are adjacent if $i\neq j$ and there are two edges $e_i,e_j$ in $E(H)$ that have a common endpoint such that $e_i \in E_i$ and $e_j \in E_j$. We note that $\mathcal{F}$ can be constructed in polynomial time. Here we consider hedge graphs $H$ for which $\mathcal{F}$ is acyclic. The underlying graph on the right side in \Cref{fig:domination} is the hedge intersection graph of the considered hedge graph which happens to be in this case acyclic. 


In our algorithm we consider mainly the hedge intersection graph $\mathcal{F}$ of a hedge graph $H$. We refer to the elements of $V(\mathcal{F})$ and $E(\mathcal{F})$, as vertices (that correspond to the hedges $\mathcal{E}(H)$) and edges. For a vertex $x \in V(\mathcal{F})$, we refer to $E_x$ as the corresponding hedge of $\mathcal{E}(H)$. By definition, an edge $\{x,y\}$ of $\mathcal{F}$ is characterized as follows: there are edges $\{u,v\} \in E_x$ and $\{v,w\} \in E_y$ in $H$. This means that there is either a $P_3$ in $H$ or a $K_3$ whenever $\{u,w\}$ belongs to a hedge of $H$. 
In the latter, observe that every triangle of the underlying graph contains at most two hedges, because $\mathcal{F}$ is acyclic. Thus, $\{u,w\} \in (E_x \cup E_y)$. 

Moreover, we notice that \Cref{lem:domproperties} still applies, since $\mathcal{F}$ is acyclic. 
We say that two hedges $x$ and $y$ \emph{span a $P_3$} if there are edges $\{a,b\} \in E_x$ and $\{b,c\} \in E_y$ such that $\{a,c\} \notin E(H)$. Similarly, we say that two hedges $x$ and $y$ \emph{span a $K_3$} if there are edges $\{a,b\} \in E_x$ and $\{b,c\} \in E_y$ such that $\{a,c\} \in (E_x \cup E_y)$.

Next we consider the edges of $E(\mathcal{F})$. An edge $\{x,y\}$ of $\mathcal{F}$ belongs to exactly one of the following sets: 
\begin{itemize}
    \item $E_P$: $E_x$ and $E_y$ span a $P_3$ in $H$ but there is no $K_3$ spanned by edges of $E_x$ and $E_y$.  
    \item $E_{\triangle}$: $E_x$ and $E_y$ span a $K_3$ in $H$ but there is no $P_3$ spanned by edges of $E_x$ and $E_y$.  
    \item $E_{P,\triangle}$: $E_x$ and $E_y$ span both a $P_3$ and a $K_3$ in $H$.      
\end{itemize}

We now borrow the notation given in \Cref{sec:approx}. 
For any edge $\{x,y\}$ of $E_{\triangle}$ or $E_{P,\triangle}$, we know that $y \in D(x)$ or $x \in D(y)$ because there is a triangle with edges only of $E_x \cup E_y$. 
A vertex $x$ of $\mathcal{F}$ is called \emph{mixed vertex} if $x$ is incident to an edge $\{x,y\}$ of $E_{P,\triangle}$ such that $x \in D(y)$.
In what follows, to avoid repeating ourselves we let $U$ be a solution for \HCD on a hedge graph $H$ that admits an acyclic hedge intersection graph $\mathcal{F}$.

\begin{lemma}\label{lem:F1graph}
Let $Z$ be the set of hedges that span an internal $P_3$ in $H$ and let $X_{P, \triangle}$ be the set of all mixed vertices of $\mathcal{F}$. Then, $Z \cup X_{P,\triangle} \subseteq U$.     
\end{lemma}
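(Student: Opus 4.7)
The plan is to handle the two inclusions $Z \subseteq U$ and $X_{P,\triangle} \subseteq U$ separately, both essentially as corollaries of earlier material once the definitions are unpacked.

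The inclusion $Z \subseteq U$ is immediate from \Cref{obs:internalP3}: if a hedge $E_i$ spans an internal $P_3$, then deleting any proper subset of its edges leaves a $P_3$ in the underlying graph, so the whole hedge must belong to every solution. Since the hedge intersection graph $\mathcal{F}$ is acyclic, every triangle of $H$ is covered by at most two hedges (as already noted in the excerpt), so $H$ is a bi-hedge graph and the tools from \Cref{sec:approx}, in particular \Cref{lem:domproperties}, are available.

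For the inclusion $X_{P,\triangle} \subseteq U$, let $x \in X_{P,\triangle}$ be a mixed vertex. By definition there exists $y$ with $\{x,y\} \in E_{P,\triangle}$ and $x \in D(y)$. I would argue by cases on whether $y \in U$:
\begin{itemize}
    \item If $y \in U$, then $x \in U$ follows from $x \in D(y) \subseteq R(y)$ together with \Cref{lem:domproperties}(ii), which forces $R(y) \subseteq U$.
    \item If $y \notin U$, then consider the $P_3$ in $H$ whose existence is guaranteed by $\{x,y\} \in E_{P,\triangle}$: its two edges lie in $E_x$ and $E_y$ respectively. This $P_3$ must be destroyed by $U$, so one of $x, y$ lies in $U$; since $y \notin U$, we conclude $x \in U$.
\end{itemize}
In both cases $x \in U$, as required.

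The only potentially delicate point is verifying that \Cref{lem:domproperties}(ii) really applies to $x \in D(y)$ in the sense needed, i.e.\ that $x \in R(y)$. This is immediate from the definition of $R(y)$ as the set of hedges reaching $y$ by a dominance chain of length at least one, since the length-one sequence $\langle x, y\rangle$ certifies $x \in R(y)$. The hypothesis that $\mathcal{F}$ is acyclic is invoked only to justify that $H$ is a bi-hedge graph, which is what allows \Cref{lem:domproperties} to be used here. I do not foresee a substantive obstacle; the entire statement is essentially a bookkeeping corollary combining \Cref{obs:internalP3} and \Cref{lem:domproperties}(ii) with the definition of a mixed vertex.
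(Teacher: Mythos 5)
Your proposal is correct and follows essentially the same route as the paper: $Z \subseteq U$ via \Cref{obs:internalP3}, then for a mixed vertex $x$ the $P_3$ spanned by $E_x$ and $E_y$ forces $x \in U$ or $y \in U$, and in the case $y \in U$ the triangle witnessing $x \in D(y)$ forces $x \in U$ as well. The only cosmetic difference is that you discharge the latter case by citing \Cref{lem:domproperties}(ii) with $x \in D(y) \subseteq R(y)$, whereas the paper inlines the one-step propagation argument (an internal $P_3$ inside $E_x$ would survive in $H \setminus U$); both are valid and your justification that acyclicity of $\mathcal{F}$ yields the bi-hedge property needed for \Cref{lem:domproperties} matches the paper's own remark.
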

\begin{proof}
By \Cref{obs:internalP3}, we know that $Z \subseteq U$. We now show that any mixed vertex $x$ belongs to $U$. 
Let $\{x,y\}$ be the edge of $E_{P,\triangle}$ incident to $x$. Since there is a $P_3$ spanned by edges of $E_x \cup E_y$, we know that $x$ or $y$ or both belong to $U$. Assume that $y \in U$. Then, we show that $x\in U$.  To see this, assume for contradiction that $x \notin U$. 
Observe that $x \in D(y)$ implies that there is a $K_3$ with two of its edges in $E_x$ and the other edge in $E_y$. Since $y\in U$ and $x \notin U$, we have an internal $P_3$ spanned by edges only of $E_x$. Therefore we conclude that $x\in U$.         
\end{proof}

According to \Cref{lem:F1graph}, we can safely remove the vertices of $Z \cup X_{P,\triangle}$ and consider the remaining graph.  
Let ${F}_1 = \mathcal{F} - (Z \cup X_{P,\triangle})$. 
As ${F}_1$ is an induced subgraph of $\mathcal{F}$, we know that ${F}_1$ remains acyclic. Moreover observe that the edges of ${F}_1$ are partitioned into the sets $E_P$ and $E_{\triangle}$. 
Since $Z \cup X_{P,\triangle}$ are hedges of $H$, $H'=H\setminus(Z \cup X_{P,\triangle})$ is a hedge-subgraph of $H$. 

We partition the vertices of ${F}_1$ as follows. Let $V_1, \ldots, V_q$ be the vertices of the connected components of ${F}_1 \setminus E_P$. Each $V_i$ is called a \emph{component} of ${F}_1$. 

\begin{lemma}\label{lem:onlyoneedge}
Between two components of ${F}_1$ there is at most one edge.      
\end{lemma}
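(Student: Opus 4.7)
The plan is to derive a contradiction with the acyclicity of $\mathcal{F}$ by showing that two distinct edges between components of $F_1$ must close a cycle. Recall that by construction, the edge set of $F_1$ is partitioned into $E_P \cup E_\triangle$, and the components $V_1, \ldots, V_q$ are precisely the vertex sets of the connected components of $F_1 \setminus E_P$. In particular, every pair of vertices within the same component $V_i$ is connected by a path in $F_1$ that uses only edges from $E_\triangle$, while any edge in $F_1$ with endpoints in different components must belong to $E_P$.

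Suppose for contradiction that there exist two distinct edges $e = \{x_1, y_1\}$ and $e' = \{x_2, y_2\}$ of $F_1$, with $x_1, x_2 \in V_i$ and $y_1, y_2 \in V_j$ for two different components $V_i \neq V_j$. I will build a cycle in $\mathcal{F}$ using these two edges together with paths inside $V_i$ and $V_j$. Specifically, since $V_i$ is connected in $F_1 \setminus E_P$, there is a path $Q_i$ from $x_1$ to $x_2$ using only edges of $E_\triangle$ (this path is empty if $x_1 = x_2$). Similarly, there is a path $Q_j$ from $y_1$ to $y_2$ inside $V_j$ using only edges of $E_\triangle$ (empty if $y_1 = y_2$). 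Concatenating $Q_i$, the edge $e'$, the reverse of $Q_j$, and the edge $e$ yields a closed walk in $F_1$ that alternates between the two components only via $e$ and $e'$.

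It remains to verify that this closed walk contains a genuine cycle. Since $V_i \cap V_j = \emptyset$, no vertex of $Q_i$ coincides with any vertex of $Q_j$, so the only possible repeated vertices inside the walk must lie within a single $Q_i$ or $Q_j$; extracting a simple subcycle is then standard. In the degenerate case $x_1 = x_2$ (resp.\ $y_1 = y_2$), the two edges $e, e'$ share an endpoint in $V_i$ (resp.\ $V_j$) and their other endpoints $y_1 \neq y_2$ (resp.\ $x_1 \neq x_2$) lie in the opposite component and are joined by $Q_j$ (resp.\ $Q_i$), again yielding a cycle of length at least three. Since $F_1$ is an induced subgraph of $\mathcal{F}$, this cycle is present in $\mathcal{F}$, contradicting the hypothesis that $\mathcal{F}$ is acyclic.

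The only subtlety worth double-checking is the degenerate situation where the two $E_P$-edges share one endpoint; handling the empty path cases uniformly is the main place one must be careful. Beyond that, the proof is a direct bookkeeping argument exploiting the partition $E(F_1) = E_P \cup E_\triangle$ together with acyclicity of $\mathcal{F}$.
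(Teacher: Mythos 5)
Your proof is correct and follows essentially the same route as the paper's: both observe that any inter-component edge must lie in $E_P$, and that two such edges between $V_i$ and $V_j$ together with $E_\triangle$-paths inside each component would close a cycle in $F_1 \subseteq \mathcal{F}$, contradicting acyclicity. Your version merely spells out the degenerate shared-endpoint cases that the paper's terser argument leaves implicit.
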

\begin{proof}
Let $V_i$, $V_j$ be two components of ${F}_1$. 
If there is an edge incident to a vertex $x \in V_i$ and $y \in V_j$ then $\{x,y\} \in E_P$, because $(E_P, E_{\triangle})$ is a partition of the edges of $F_1$ and $V_i,V_j$ are different connected components of ${F}_1\setminus E_P$. 
Now observe that two edges $\{x,y\}$, $\{x',y'\}$ between $V_i, V_j$ with $x,x' \in V_i$ and $y,y' \in V_j$ would result in a cycle in $F_1$, since there is a path of $E_{\triangle}$ edges between $x$ and $x'$ and, similarly, there is a path of $E_{\triangle}$ edges between $y$ and $y'$. Therefore, there is at most one edge between two components of $F_1$. 
\end{proof}


We denote by $R_{H'}(x)$ all hedges that dominate $x$ in $H'$. 

\begin{lemma}\label{lem:samecomponent}
For any vertex $x$ in $F_1$ such that $x \in V_i$, we have $R_{H'}(x) \subseteq V_i$. 
\end{lemma}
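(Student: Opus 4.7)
The plan is to unwind the definition of $R_{H'}(x)$ via a dominating chain and show that each consecutive pair of hedges in the chain lies in the same $E_\triangle$-component of $F_1$. I would fix an arbitrary $y \in R_{H'}(x)$ and pick a witnessing sequence $y = h_1, h_2, \ldots, h_p = x$ with $h_i \in D_{H'}(h_{i+1})$ for every $i$. A first, trivial observation is that every $h_j$ is automatically a vertex of $F_1$: it is a hedge of $H' = H \setminus (Z \cup X_{P,\triangle})$, so by construction it lies in neither $Z$ nor $X_{P,\triangle}$, i.e.\ in $V(\mathcal{F}) \setminus (Z \cup X_{P,\triangle}) = V(F_1)$.

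The main step is to classify, for each $i$, the edge $\{h_i, h_{i+1}\}$ of $\mathcal{F}$ as belonging to $E_\triangle$ (and hence to $F_1 \setminus E_P$). The relation $h_i \in D_{H'}(h_{i+1})$ furnishes a triangle in the underlying graph of $H'$ whose edges split as two in $E_{h_i}$ and one in $E_{h_{i+1}}$; since $H'$ is a hedge-subgraph of $H$, the same triangle exists in $H$, so $\{h_i, h_{i+1}\}$ is an edge of $\mathcal{F}$ in $E_\triangle \cup E_{P,\triangle}$. The only delicate case, and the place I expect the single real obstacle of the proof to sit, is ruling out $E_{P,\triangle}$: if this edge were in $E_{P,\triangle}$, combining it with the $K_3$-witness $h_i \in D_H(h_{i+1})$ already extracted above would satisfy the very definition of a mixed vertex for $h_i$, giving $h_i \in X_{P,\triangle}$ and contradicting $h_i \in V(F_1)$. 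Hence $\{h_i, h_{i+1}\} \in E_\triangle$.

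Once each consecutive pair is linked by an $E_\triangle$-edge, the two hedges lie in the same connected component of $F_1 \setminus E_P$, and transitivity along the chain yields that $y = h_1$ and $x = h_p$ belong to the same component $V_j$. Since $x \in V_i$ by hypothesis, $V_j = V_i$, and therefore $y \in V_i$. As $y \in R_{H'}(x)$ was arbitrary, $R_{H'}(x) \subseteq V_i$, as required.
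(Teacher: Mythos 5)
Your proof is correct and follows essentially the same route as the paper's: unwind the dominating sequence, note that each consecutive pair $h_i, h_{i+1}$ shares a triangle and hence an $E_\triangle$-edge of $F_1$, and conclude by transitivity that the whole chain stays in one component. The only difference is that you explicitly rule out the $E_{P,\triangle}$ case via the mixed-vertex definition, whereas the paper absorbs this into its earlier observation that the edges of $F_1$ are already partitioned into $E_P$ and $E_\triangle$; this is a harmless (and arguably clarifying) elaboration, not a different argument.
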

\begin{proof}
Recall that all edges of $F_1$ are characterized as either in $E_P$ or in $E_{\triangle}$ and all vertices of the same component have a path with edges only from $E_{\triangle}$. We show that all hedges of $R_{H'}(x)$ are connected by edges of $E_{\triangle}$. 
If a vertex $y$ dominates $x$, then there is a sequence of vertices $\langle y=h_1, \ldots, h_p=x \rangle$ with $p \geq 1$ such that $h_i \in D(h_{i+1})$, for all $1\leq i <p$. For any $h_i \in D(h_{i+1})$ we have that $h_i,h_{i+1}$ belong to same triangle $K_3$ of $H'$. This means that there is an edge of $E_{\triangle}$ with endpoints in $h_i$ and $h_{i+1}$. Thus all vertices of the sequence have a path with edges only from $E_{\triangle}$ which implies that all vertices that dominate $x$ belong to the same component with $x$. 
\end{proof}

Now let us construct a graph $F_2$ from $F_1$ as follows: 
\begin{itemize}
    \item Remove all edges that have both endpoints in the same component. Notice that such edges constitute the set $E_{\triangle}$ in $F_1$.  
    \item For each edge $\{x,y\} \in E_{P}$ such that $x \in V_i$ and $y \in V_j$, add all edges $E(R_{H'}(x), R_{H'}(y))$. 
\end{itemize}

Observe that for any two vertices $x,y$ of the same component of $F_1$ it is not necessarily that $x \in R_{H'}(y)$ or $y \in R_{H'}(x)$. However, the following result holds for vertices that are in a domination relation. 

\begin{lemma}\label{lem:F2subset}
All vertices of the same component form an independent set in $F_2$. Moreover, for any two vertices $x,y$ of the same component of $F_2$ such that $x \in R_{H'}(y)$, we have $N_{F_2} (y) \subseteq N_{F_2}(x)$. 
\end{lemma}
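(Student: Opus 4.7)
The plan is to prove the two statements separately, where the first (independence) follows essentially from the construction plus \Cref{lem:samecomponent}, and the second (neighborhood containment) uses property (P1) from \Cref{lem:domproperties}(i) applied in $H'$.

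For the independent-set claim, I would argue as follows. Starting from $F_1$, step~1 explicitly removes every edge with both endpoints in the same component, so after step~1 no two vertices of a component are adjacent. It thus suffices to show that step~2 never adds an edge within a component. Consider an edge $\{x,y\} \in E_P$ used in step~2 with $x \in V_i$ and $y \in V_j$, $i \neq j$. By \Cref{lem:samecomponent}, $R_{H'}(x) \subseteq V_i$ and $R_{H'}(y) \subseteq V_j$, so every edge added from this pair lies in $E(V_i, V_j)$, which is across two distinct components. Hence no intra-component edge is introduced.

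For the neighborhood containment, let $x,y$ lie in the same component and assume $x \in R_{H'}(y)$. By \Cref{lem:domproperties}(i) applied in $H'$, this gives $R_{H'}(x) \subseteq R_{H'}(y)$. Now let $z \in N_{F_2}(y)$. By the first claim, $z$ is not in the component of $y$; hence the edge $\{y,z\}$ must have been introduced by step~2, so there exists an edge $\{y',z'\} \in E_P$ with $y' \in V_i$, $z' \in V_j$ (say $V_i$ is the component of $y$) such that $y \in R_{H'}(y')$ and $z \in R_{H'}(z')$. Applying (P1) again to $y \in R_{H'}(y')$ yields $R_{H'}(y) \subseteq R_{H'}(y')$, and therefore $x \in R_{H'}(y) \subseteq R_{H'}(y')$. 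Since $z \in R_{H'}(z')$ as well, the pair $\{x,z\}$ belongs to $E(R_{H'}(y'), R_{H'}(z'))$ and was added to $F_2$ when the edge $\{y',z'\}$ was processed, giving $z \in N_{F_2}(x)$.

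The only subtle point to be careful about is to make sure that the edge $\{y,z\}$ of $F_2$ really comes from a \emph{single} edge of $E_P$ together with the domination sets of its endpoints, and that the roles of the two endpoints are symmetric (so that if it is $\{z',y'\}$ instead with $y \in R_{H'}(z')$, the same argument with roles swapped still gives $\{x,z\}$ in $F_2$). Both verifications are immediate from the definition in step~2, so I expect no real obstacle; the argument essentially lifts the transitivity of domination to neighborhood containment in $F_2$.
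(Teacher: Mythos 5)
Your proposal is correct and follows essentially the same route as the paper's proof: independence via \Cref{lem:samecomponent} (every added edge lies in $E(R_{H'}(x'),R_{H'}(y'))$ with the two domination sets in distinct components), and the containment $N_{F_2}(y)\subseteq N_{F_2}(x)$ by tracing a neighbor $z$ back to the generating $E_P$-edge $\{y',z'\}$ and using \Cref{lem:domproperties}~(i) to conclude $x\in R_{H'}(y)\subseteq R_{H'}(y')$, hence $\{x,z\}\in E(R_{H'}(y'),R_{H'}(z'))$. The only cosmetic difference is that your first invocation of (P1), giving $R_{H'}(x)\subseteq R_{H'}(y)$, is never actually used.
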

\begin{proof}
By definition, there is no edge of $E_P$ that has both endpoints in the same component. Thus, by \Cref{lem:samecomponent} and the fact that the components form a partition of the vertices, all the vertices of the same component form an independent set in $F_2$. 
For the second statement, observe that for any edge $\{a,b\}$ of $F_2$, the vertices $a$ and $b$ belong to different components. To justify that $N_{F_2} (y) \subseteq N_{F_2}(x)$, we show that any neighbor $z$ of $y$ is a neighbor of $x$. 
Let $V_i$ and $V_j$ be the components of $y$ and $z$, respectively. Now observe that $\{y,z\} \in E(R_{H'}(y'), R_{H'}(z'))$, because there are vertices $y' \in V_i$ (not necessarily, $y'\neq y$) and $z' \in V_j$ (not necessarily, $z'\neq z$) such that $y \in R_{H'}(y')$ and $z \in R_{H'}(z')$. By \Cref{lem:domproperties}~(i), we have $R_{H'}(y) \subseteq R_{H'}(y')$. Since $x \in R_{H'}(y)$, we have $x \in V_i$ by \Cref{lem:samecomponent} and $\{x,z\} \in E(R_{H'}(y'), R_{H'}(z'))$. 
Therefore, $z$ is a neighbor of $x$, as required. 
\end{proof}

The connection between the constructed graph $F_2$ and the hedge graph $H$ is given in the following result. We note that a vertex cover of $F_2$ that is not minimal does not necessarily correspond to a cluster hedge-subgraph of $H'$. 
\begin{lemma}\label{lem:F2vertexcover}
For any minimal solution $U'$ for \HCD on $H'$, there is a minimal vertex cover $S$ of $F_2$ such that $S = U'$. Conversely, for any minimal vertex cover $S$ of $F_2$ there is a minimal solution $U'$ for \HCD on $H'$ such that $U'=S$. 
%
\end{lemma}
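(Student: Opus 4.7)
My plan is to identify $S$ directly with $U'$ as the same underlying set of hedges, and verify both the vertex-cover/HCD property and minimality in each direction using case analysis on how $P_3$'s arise in $H'$ together with the chain of $D$-dominations captured by $R_{H'}(\cdot)$.

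For the forward direction (minimal HCD $\Rightarrow$ minimal VC), I first show that $S:=U'$ is a vertex cover of $F_2$: every edge of $F_2$ arises, by construction, from an $E_P$-edge $\{x,y\}$ of $F_1$ with its endpoints drawn from $R_{H'}(x)$ and $R_{H'}(y)$. Since $\{x,y\}\in E_P$ means $E_x,E_y$ span a $P_3$ of $H'$ and $U'$ is an HCD solution on $H'$, it contains $x$ or $y$, and \Cref{lem:domproperties}(ii) drags the corresponding $R_{H'}$-set into $U'$, covering the edge. For minimality I fix $v\in S$ and exploit that $U'\setminus\{v\}$ is not HCD to obtain some $P_3$ $(p,q,r)$ in $H'\setminus(U'\setminus\{v\})$. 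I distinguish \emph{Case~(a)}, in which $(p,q,r)$ is already an induced $P_3$ of $H'$, and \emph{Case~(b)}, in which $(p,q,r)$ comes from a $K_3$ of $H'$ whose third edge lies in some $z\in U'\setminus\{v\}$. In Case~(a) the bi-hedge property (from acyclicity of $\mathcal{F}$) rules out $E_{P,\triangle}$-edges between $V(F_1)$-vertices, so the spanning hedge pair $\{x,y\}$ lies in $E_P$ and $\{v,y\}$ is an uncovered edge of $F_2$. In Case~(b) bi-hedge forces $x=y=v$ with $v\in D(z)$; the natural witness $\{v,z\}\in E_\triangle$ was discarded from $F_2$, so I trace the chain of Case-(b)-essentialities $v\to z\to z'\to\cdots$ (each step a $D$-step staying inside $U'$) until a Case-(a)-essential hedge $z^{\star}$ is reached, whose existence is forced because the sets $R_{H'}(\cdot)$ are monotone along $D$-steps (\Cref{lem:domproperties}(i)), the chain stays inside a single component $V_i$ (\Cref{lem:samecomponent}), and $V_i$ is finite; \Cref{lem:F2subset} then propagates the uncovered $F_2$-neighbor of $z^{\star}$ back to $v$.

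For the converse direction I set $U':=S$ and verify that $U'$ is an HCD solution on $H'$ by contradiction: any $P_3$ in $H'\setminus S$ is either an induced $P_3$ of $H'$, giving an uncovered $E_P$-edge whose $F_2$-image is uncovered, or a $K_3$ of $H'$ with one edge removed, in which case bi-hedge forces the two remaining edges to share a single hedge $x$ with $x\in R_{H'}(z)$ for some $z\in S$; minimality of $S$ then supplies a neighbor $w\notin S$ of $z$ in $F_2$, which \Cref{lem:F2subset} lifts to a neighbor of $x$, contradicting that $S$ is a vertex cover. For the minimality of $U'$ as HCD I fix $v\in S$; minimality of $S$ as a vertex cover provides an edge $\{v,w\}\in E(F_2)$ with $w\notin S$, originating from some $\{x,y\}\in E_P$ with $v\in R_{H'}(x)$ and $w\in R_{H'}(y)$; \Cref{lem:domproperties}(ii) forces $y\notin S$ and $x\in S$, and the chain $v=h_1,\ldots,h_p=x$ witnessing $v\in R_{H'}(x)$ yields, via its first $D$-step $v\in D(h_2)$, a triangle of $H'$ whose two $E_v$-edges reappear in $H'\setminus(S\setminus\{v\})$ while $E_{h_2}$ stays removed (because $h_2\in R_{H'}(x)\subseteq S$ by \Cref{lem:domproperties}(ii) and $h_2\neq v$), producing the required $P_3$ and showing $S\setminus\{v\}$ is not an HCD solution.

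The main obstacle is Case~(b) of the forward-direction minimality argument: the natural witness $\{v,z\}$ is an $E_\triangle$-edge that was deliberately discarded when forming $F_2$, so no single step suffices to exhibit $v$ as essential for the vertex cover. The remedy is to walk along the chain of Case-(b)-essential hedges until a Case-(a)-essential hedge is reached; \Cref{lem:F2subset}, combined with \Cref{lem:domproperties}(i) and \Cref{lem:samecomponent}, then transports the uncovered $F_2$-neighbor found at the end of the chain back to $v$.
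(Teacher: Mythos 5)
Your core arguments coincide with the paper's: the vertex-cover property of $S=U'$ follows from the $E_P$-origin of every $F_2$-edge together with \Cref{lem:domproperties}~(ii), and the converse (a minimal vertex cover of $F_2$ yields a cluster hedge-subgraph) uses the same case split between $P_3$'s induced in $H'$ and those arising from triangles, invoking the minimality of $S$ and \Cref{lem:F2subset} in the triangle case exactly as the paper does. Where you go beyond the paper is in trying to prove the two minimality assertions directly, and the forward one contains a genuine gap.

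The problem is the termination of your Case~(b) chain. From $v\in D(z)$, $z\in D(z')$, $\ldots$ you get, via \Cref{lem:domproperties}~(i), that $R_{H'}(v)\subseteq R_{H'}(z)\subseteq R_{H'}(z')\subseteq\cdots$; finiteness of the component only tells you that this chain of \emph{sets} stabilizes, not that the chain of \emph{hedges} ever reaches a Case-(a)-essential hedge. Concretely, two hedges $v,z\in U'$ with one triangle having two edges in $E_v$ and one in $E_z$, and another triangle having two edges in $E_z$ and one in $E_v$, satisfy $v\in D(z)$, $z\in D(v)$ and hence $R_{H'}(v)=R_{H'}(z)$; your walk can then alternate $v\to z\to v\to\cdots$ with every chosen witness being Case~(b), and neither monotonicity nor finiteness breaks the cycle. (Such a mutually supporting pair cannot be the \emph{only} reason $v$ and $z$ lie in a minimal $U'$, but your argument follows a single witness per hedge and never accesses the other reason.) The clean repair is to prove the converse direction first and obtain forward minimality indirectly: if $S\setminus\{v\}$ were still a vertex cover of $F_2$, shrink it to a minimal vertex cover $S''\subseteq U'\setminus\{v\}$; the converse direction makes $S''$ a solution for \HCD on $H'$ properly contained in $U'$, contradicting the minimality of $U'$. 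A small additional remark on your backward minimality argument: the degenerate case $v=x$ (i.e., $p=1$, where there is no first $D$-step and the $E_P$-edge itself supplies the surviving $P_3$) needs to be stated separately, though it is immediate.
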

\begin{proof}
By construction, note that the hedges of $H'$ are in 1-to-1 correspondence with the vertices of $F_2$. Also observe that for any edge $\{x,y\}$ of $F_2$ we have the following: 
\begin{itemize}
    \item $x \in V_i$, $y \in V_j$ (all edges are between components) and 
    \item there are vertices $x' \in V_i, y' \in V_j$ such that there is an edge $\{x',y'\} \in E_P$ in $F_1$ and $x \in R_{H'}(x'), y \in R_{H'}(y')$.   
\end{itemize}
Notice that the vertices $x', y'$ between the components $V_i$ and $V_j$ are unique, due to \Cref{lem:onlyoneedge}. Moreover, due to \Cref{lem:F1graph}, there is no hedge in $H'$ that spans an internal $P_3$ in $H'$. 

Let $U'$ be a minimal solution for \HCD on $H'$. We show that $S=U'$ is a minimal vertex cover of $F_2$. 
Assume, for contradiction, that there is an edge $\{x,y\}$ in $F_2$ such that $x,y \notin S$. By the previous properties on the edges of $F_2$, we know that there are vertices $x' \in V_i, y' \in V_j$ such that there is an edge $\{x',y'\} \in E_P$ in $F_1$ and $x \in R_{H'}(x'), y \in R_{H'}(y')$. 
Since $\{x',y'\} \in E_P$, $E_{x'}$ and $E_{y'}$ span a $P_3$ in $H'$. Thus $x'$ or $y'$ belong to $U'$, because $H' \setminus U'$ is a cluster hedge-subgraph. Without loss of generality, suppose that $x' \in U'$. By \Cref{lem:domproperties}~(ii), we have $R_{H'}(x') \subseteq U'$. Then, however, we reach a contradiction to $x \in S$, since $x \in R_{H'}(x')$. 
Therefore, $S=U$ is a minimal vertex cover of $F_2$.   

Let $S$ be a minimal vertex cover of $F_2$. We show that $U'=S$ is a minimal solution for \HCD on $H'$. Clearly, $H' \setminus U'$ is a hedge-subgraph of $H'$. To show that $H' \setminus U'$ is indeed a cluster hedge-subgraph, assume for contradiction that its underlying graph contains a $P_3=a,b,c$ induced by the vertices $a,b,c$.  
Let $E_x$ be a hedge of $H'$ such that $\{a,b\} \in E_x$. Then $E_x \notin U'$. 

Suppose that the $P_3$ is induced in $H'$. If $\{b,c\} \in E_x$ then $E_x$ spans an internal $P_3$ which is not possible by \Cref{lem:F1graph}. Thus $\{b,c\} \in E_y$ with $E_x \neq E_y$ and $E_y \notin U'$. 
This means that there is an edge $\{x,y\} \in E_P$, so that $\{x,y\}$ is an edge in $F_2$, by construction. Since $S$ is a minimal vertex cover of $F_2$, we have $x\in S$ or $y\in S$ which contradicts the existence of the $P_3$. 

Next suppose that $P_3$ is induced in $H' \setminus U'$, but not induced in $H'$. This particularly means that $\{a,c\}$ being an edge in $H'$. Since $E_x \notin U'$, we have $\{a,c\} \in E_y$ with $E_y \in U'$. 
If $\{b,c\} \notin E_x$ then $\{b,c\} \in E_y$ because there is a triangle in $H'$ and $F_1$ is acyclic. This, however, contradicts the fact that $E_y \in U'$. 
Thus $\{b,c\} \in E_x$. Now observe that $E_x$ and $E_y$ span a $K_3$ in $H$ which implies that $\{x,y\}$ is an edge of $E_{\triangle}$ or $E_{P,\triangle}$. If $\{x,y\} \in E_{P,\triangle}$ then $x\in D(y)$, so that $x \in X_{P,\triangle}$. This, however, implies that $x$ is not a vertex of $F_2$ due to \Cref{lem:F1graph}. 
Thus $\{x,y\} \in E_{\triangle}$. 
In particular, $x$ dominates $y$ and both vertices belong to the same component of $F_2$ by \Cref{lem:samecomponent}. 
Since $S$ is a minimal vertex cover and $y \in S$, there is a vertex $z \notin S$ such that $\{y,z\}$ is an edge of $F_2$. Due to \Cref{lem:F2subset}, there is an edge $\{x,z\}$ in $F_2$. Then, however we reach a contradiction to the fact that $S$ being a vertex cover of $F_2$ because $x,z \notin S$. 
Therefore, $U'=S$ is a minimal solution for \HCD on $H'$.   
\end{proof}

In order to compute a minimum vertex cover of $F_2$ in polynomial time, we take advantage of the following.  

\begin{lemma}\label{lem:F2bipartite}
The graph $F_2$ is bipartite. 
\end{lemma}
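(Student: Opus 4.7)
The plan is to build a proper 2-coloring of $F_2$ by transferring a 2-coloring from a contracted version of $F_1$. First, I would define the \emph{component graph} $F_1^*$ whose vertices are the components $V_1, \ldots, V_q$ of $F_1$, and which contains an edge $\{V_i, V_j\}$ whenever there is an edge $\{x',y'\} \in E_P$ of $F_1$ with $x' \in V_i$ and $y' \in V_j$. By \Cref{lem:onlyoneedge}, there is at most one such edge between any pair of components, so $F_1^*$ is simple.

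Next, I would argue that $F_1^*$ is acyclic. Since each $V_i$ induces a connected subgraph of $F_1$ (by definition of the components, which are the connected components of $F_1 \setminus E_P$), the graph $F_1^*$ is exactly the one obtained from $F_1$ by contracting each $V_i$ to a single vertex. Any cycle in $F_1^*$ would lift, using paths within the (connected) contracted components, to a closed walk in $F_1$ and hence to a cycle in $F_1$, contradicting the acyclicity of $\mathcal{F}$ (and thus of $F_1$). Consequently, $F_1^*$ is a forest and therefore bipartite; fix a proper 2-coloring $c: \{V_1,\ldots,V_q\} \to \{0,1\}$.

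Finally, I would lift $c$ to a coloring $\hat{c}$ of $V(F_2)$ by setting $\hat{c}(v) = c(V_i)$ for every $v \in V_i$, and verify that this is proper. Take any edge $\{u,v\}$ of $F_2$. By \Cref{lem:F2subset}, vertices in the same component form an independent set in $F_2$, so $u$ and $v$ must lie in distinct components, say $u \in V_i$ and $v \in V_j$. By the construction of $F_2$, such an edge exists only because there is an edge $\{x',y'\} \in E_P$ with $x' \in V_i$ and $y' \in V_j$ and with $u \in R_{H'}(x') \subseteq V_i$, $v \in R_{H'}(y') \subseteq V_j$ (the last inclusions by \Cref{lem:samecomponent}). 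Hence $\{V_i,V_j\}$ is an edge of $F_1^*$, so $c(V_i) \neq c(V_j)$, giving $\hat{c}(u) \neq \hat{c}(v)$.

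The only nontrivial point is the acyclicity of $F_1^*$, i.e., the contraction step; everything else is a direct bookkeeping argument using the previously established lemmas.
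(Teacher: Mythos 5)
Your proposal is correct and follows essentially the same route as the paper: the paper also contracts each component $V_i$ to a single vertex, takes a bipartition of the contracted graph, uncontracts, and then checks via \Cref{lem:samecomponent} and the construction of $F_2$ that every edge of $F_2$ crosses between two components joined by an $E_P$ edge. Your version is slightly more explicit about why the contracted component graph is acyclic (connected-subgraph contractions preserve forests), a point the paper leaves implicit, but the argument is the same.
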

\begin{proof}
Since $F_1$ is an induced subgraph of an acyclic graph $\mathcal{F}$, $F_1$ is acyclic and, thus, bipartite. 
Let $V_1, \ldots, V_q$ be the components of $F_1$ and let $F'_1$ be the subgraph of $F_1$ that does not contain any edge between vertices of the same component $V_i$. 
Let $(V_A,V_B)$ be a bipartition of the vertices of $F'_1$ where all vertices of the same component belong to $V_A$ or $V_B$. That is, for any $V_i$, we have $V_i \subseteq V_A$ or $V_i \subseteq V_B$. This can be achieved by simply contracting each component into a single vertex, take any bipartition, and then uncontract the components.
We claim that $(V_A,V_B)$ is a bipartition of the vertices of $F_2$. 
For this, observe that by \Cref{lem:F2subset} all vertices of $V_i$ form an independent set 
and for any vertex $x \in V_i$ we have $R_{H'}(x) \subseteq V_i$ by \Cref{lem:samecomponent}. 
Thus all edges of $F_2$ do not have endpoints in the same component $V_i$. 
Moreover, by the construction of $F_2$, whenever there are edges incident to vertices of $E(R_{H'}(x), R_{H'}(y))$, we know that $x \in V_i$, $y\in V_j$ such that $\{x,y\} \in E_P$. Thus, $V_i \in V_A$ and $V_j \in V_B$ or $V_j \in V_A$ and $V_i \in V_B$, since all edges of $E_P$ belong to $F'_1$. 
Therefore, $F_2$ admits a bipartition $(V_A, V_B)$ of its vertices.  
\end{proof}

Now we are ready to obtain our main result of this section, namely a polynomial-time algorithm for \HCD whenever the hedge intersection graph is acyclic.

\begin{algorithm}
\caption{A minimum set of hedges for the \HCD problem}
\label{algorithm:hedgeintersect}
	\renewcommand{\algorithmicrequire}{\textbf{Input:}}
	\renewcommand{\algorithmicensure}{\textbf{Output:}}
	\begin{algorithmic}[1]
    \Require a hedge graph $H$ that admits an acyclic hedge intersection graph
    \Ensure a minimum set of hedges $U$ of $H$ s.t. $H \setminus U$ is a cluster hedge-subgraph
\State Compute the hedge intersection graph $\mathcal{F}$ of $H$
\State Partition the edges of $\mathcal{F}$ into the three sets: $E_P$, $E_{\triangle}$, $E_{P,\triangle}$
\State Apply \Cref{lem:polyR} to compute $R(x)$ for any hedge $x$ of $H$
\State Let $Z$ be the hedges that span internal $P_3$ in $H$
\State Let $X_{P,\triangle}$ be the mixed vertices of $\mathcal{F}$
\State Compute the graph $H' = H \setminus (Z \cup X_{P,\triangle})$ and update accordingly $R(x)$
\State Compute the components $V_1, \ldots, V_q$ of $F_1 = \mathcal{F} -(Z \cup X_{P,\triangle})$ 
\State Construct the graph $F_2$ from $F_1$ as follows: 
\State \ \ \ \ remove all edges $\{x,y\}$ with $x,y \in V_i$
\State \ \ \ \ for each edge $\{x,y\} \in E_{P}$ with $x \in V_i$ and $y \in V_j$, add all edges $E(R_{H'}(x), R_{H'}(y))$
\State Let $U'$ be a minimum vertex cover of $F_2$
\State \textbf{Return} $U = Z \cup X_{P,\triangle} \cup U'$
    \end{algorithmic}
\end{algorithm}

\begin{theorem}\label{theo:acyclic}
Let $H$ be a hedge graph that admits an acyclic hedge intersection graph. Then, there is a polynomial-time algorithm for \HCD on $H$.  
\end{theorem}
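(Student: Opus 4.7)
The plan is to verify that Algorithm~\ref{algorithm:hedgeintersect} returns an optimal solution in polynomial time, leveraging essentially all the structural lemmas proved earlier in this section. All the preprocessing steps (constructing $\mathcal{F}$, its edge-partition $E_P \cup E_\triangle \cup E_{P,\triangle}$, the dominance sets $R(x)$ via \Cref{lem:polyR}, the sets $Z$ and $X_{P,\triangle}$, the components $V_1,\ldots,V_q$ of $F_1$, and the auxiliary graph $F_2$) are clearly polynomial in the size of $H$, so the only remaining algorithmic ingredient is computing a minimum vertex cover of $F_2$.

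For correctness, I would first observe that \Cref{lem:F1graph} forces every solution to contain $Z \cup X_{P,\triangle}$; hence the optimum on $H$ equals $|Z| + |X_{P,\triangle}|$ plus the optimum on the residual hedge graph $H' = H \setminus (Z \cup X_{P,\triangle})$, and it suffices to compute a minimum-size $U'$ such that $H' \setminus U'$ is a cluster hedge-subgraph. Now \Cref{lem:F2vertexcover} gives a size-preserving bijection between the minimal hedge solutions on $H'$ and the minimal vertex covers of $F_2$. Since any minimum solution is automatically minimal on either side, minimum vertex covers of $F_2$ correspond exactly to minimum-size valid hedge deletions for $H'$, and one direction of the bijection lets us read off $U'$ from such a cover. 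Returning $U = Z \cup X_{P,\triangle} \cup U'$ then yields an optimal solution on $H$.

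For the vertex cover computation itself, I would invoke \Cref{lem:F2bipartite}: since $F_2$ is bipartite, a minimum vertex cover is obtainable in polynomial time via K\"onig's theorem, that is, through a maximum matching / maximum flow computation~\cite{matching80,flow13}. All remaining steps of the algorithm are book-keeping and run in time polynomial in $|V(H)|$ and $|\mathcal{E}(H)|$, which gives the claimed complexity.

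The main conceptual hurdle---already absorbed into \Cref{lem:F2vertexcover}---is the need to restrict attention to \emph{minimal} covers: an arbitrary vertex cover of $F_2$ need not correspond to a valid hedge deletion, because the $E_\triangle$-edges of $F_1$ (which encode domination and were suppressed when building $F_2$) propagate through \Cref{lem:F2subset} only when no redundant vertex is included. Fortunately, a minimum vertex cover of a bipartite graph returned by the matching-based algorithm is automatically minimal, so no extra care is required in the final step beyond translating the cover back to a hedge set via the correspondence of \Cref{lem:F2vertexcover}.
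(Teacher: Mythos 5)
Your proposal is correct and follows essentially the same route as the paper: force $Z \cup X_{P,\triangle}$ into the solution via \Cref{lem:F1graph}, use the size-preserving correspondence of \Cref{lem:F2vertexcover} between minimal solutions on $H'$ and minimal vertex covers of $F_2$ (noting that minimum objects are automatically minimal on both sides), and compute a minimum vertex cover of the bipartite graph $F_2$ (\Cref{lem:F2bipartite}) in polynomial time. Your explicit remark on why the minimal/minimum distinction is harmless is a welcome clarification that the paper leaves implicit.
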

\begin{proof}
We describe the steps in \Cref{algorithm:hedgeintersect}. Given $H$, we begin by computing the hedge intersection graph $\mathcal{F}$ in $O(\ell m + n)$ time, where $\ell, m, n$ stand for the number of hedges, edges, and vertices, respectively. We also enumerate all $P_3$ and $K_3$ of the underlying graph in $O(n^3)$ time. 
Having the enumerated subgraphs, we characterize each edge of $\mathcal{F}$ as $E_P$, $E_{\triangle}$, or $E_{P,\triangle}$. For the domination relation between the hedges, we can compute $R(x)$ in polynomial time by \Cref{lem:polyR}.  

We collect all hedges that span an internal $P_3$ and all mixed vertices that correspond to $Z \cup X_{P, \triangle}$. We compute the graph $F_2$ in $O(\ell^2)$ time and then compute a minimum vertex cover $U'$ of $F_2$ in polynomial time \cite{matching80,flow13}, since $F_2$ is a bipartite graph by \Cref{lem:F2bipartite}. We finally return the set $U' \cup Z \cup X_{P,\triangle}$. Correctness follows from \Cref{lem:F1graph,lem:F2vertexcover}. Therefore, all steps can be carried out in polynomial time. 
\end{proof}

\section{Concluding remarks}
In this paper, we introduced the \HCD problem. 
Our analysis for the complexity of the problem reveals that in almost every non-trivial underlying structure of a hedge graph the problem remains difficult. 
Aiming towards a poly(OPT)-approximation, we showed that the problem is \MHD-complete. From the parameterized perspective, we showed that the variant of the problem parameterized by the solution size does not admit polynomial kernel under reasonable hierarchy assumptions.
On the positive side, we studied the hedge intersection graph. We have proposed a constant factor approximation algorithm whenever at most two hedges appear in any triangle. Moreover, for any acyclic hedge intersection graph we have given a polynomial-time algorithm. We believe that our results reveal an interesting path for future work on hedge optimization problems. 


We conclude with some additional remarks and open problems related to our results. It is an open problem whether any \MHD-complete problem is actually poly-APX-complete \cite{KhannaSTW01,BliznetsCKP18}. The \MHD-completeness of \HCD adds a natural deletion problem into the class of such unclassified problems. 
Moreover, interesting connections arise with respect to parameterized complexity. We are not sure whether the parameterized variant of \emph{every} \MHD-complete problem does not admit a polynomial kernel (under the assumption NP $\nsubseteq$ coNP/poly). 
Towards such a direction, one may need to turn the $A$-reduction into a ppt-reduction that preserves the approximation guarantee into a polynomial-size kernel.  
However, we are able to show that the optimization variant of \PropSat{$f$} is indeed \MHD-complete. We are not aware of any approach concerning the approximation complexity of \PropSat{$f$} in terms of an optimal solution. Thus the following result interestingly settles a natural lower bound related to the approximation guarantee. 

\begin{theorem}\label{theo:PROPisMHD}
\PropSat{$f$} is \MHD-complete under $A$-reductions. 
\end{theorem}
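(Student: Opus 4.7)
Since \HCD is \MHD-complete under $A$-reductions by \Cref{theo:mddcompleteness} and $A$-reductions are transitive, it suffices to exhibit $A$-reductions in both directions between \PropSat{$f$} and \HCD.

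\textbf{\PropSat{$f$} $A$-reduces to \HCD.} This extends the ppt-reduction of \Cref{theo:nokernel} (given there for $f = f_1$) to an arbitrary propagational $f$. For a formula $\phi$, I would create one hedge $E_x$ per Boolean variable $x$ and a constant-size $P_3$/$K_3$ gadget for each clause. After a preprocessing step analogous to \Cref{lem:f1constant} to eliminate degenerate forms, there are only finitely many syntactic forms of $f$-clauses to consider (classified by which positions contain constants and which variables). For every such form, the propagational identities $f(1,0,0)=0$ and $f(0,0,0)=f(1,0,1)=f(1,1,0)=f(1,1,1)=1$, which every propagational $f$ shares, are precisely what makes the same $P_3$/$K_3$ gadgets used in \Cref{theo:nokernel} encode the correct propagation. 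When the particular $f$ has any of the three unspecified values $f(0,0,1)$, $f(0,1,0)$, $f(0,1,1)$ equal to zero, I augment each affected gadget with additional $P_3$'s (reusing the existing hedges) that forbid the corresponding extra patterns. Since variables correspond one-to-one to hedges and no fresh hedges are introduced, solution sizes are preserved exactly, yielding an $A$-reduction with $\alpha = 1$.

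\textbf{\HCD $A$-reduces to \PropSat{$f$}.} By \Cref{lemma:hcdtomones}, \HCD $A$-reduces to \MinO$(\mathcal{F''})$ with $\mathcal{F''}=\{f_1, g_1, g_2, f_2\}$; it therefore suffices to $A$-reduce each constraint of $\mathcal{F''}$ to $f$-clauses. The following identities hold for every propagational $f$, relying only on the five specified values:
\[
f(1,x,y) = x\vee y, \qquad f(1,0,x)=f(1,x,0) = x, \qquad f(1,1,x)\equiv 1.
\]
The first realizes $g_2$, the second realizes $f_2$, and the third is vacuous. For $g_1$ and $f_1$, the encoding depends on which of the three unspecified values of $f$ are zero, giving at most eight cases to check. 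In each case, a constant-size combination of $f$-clauses, possibly introducing a bounded number of auxiliary variables and using the constants $0$ and $1$, realizes the target constraint; for instance, when $f(0,1,0)=1$ then $f(x_1,x_2,0)$ already equals $g_1(x_1,x_2)$, and $f_1$ is recovered by three cyclic $f$-clauses forbidding its three single-true patterns and anchored by forced-1 constants. The bounded overhead from auxiliary variables is absorbed into the $A$-reduction's constant factor~$\alpha$.

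\textbf{Main obstacle.} The technical heart of the argument is the case analysis in the second direction: for propagational $f$ sharply different from $f_1$, such as the equivalence-type function with $f(0,y,z)=1$ iff $y=z=0$, the direct encoding of $g_1$ fails and auxiliary variables must be combined with $f$-clauses whose first argument is itself a variable, which risks propagating spurious constraints in unintended directions. Anchoring these clauses with constant-$1$ forcings in chosen positions, together with a careful accounting of the auxiliary weight, ensures that the simulation remains tight and that the $A$-reduction constant $\alpha = O(1)$ is maintained across all eight propagational functions.
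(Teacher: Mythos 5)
Your skeleton is the paper's: invoke \Cref{theo:mddcompleteness} and transitivity, reuse the construction of \Cref{theo:nokernel} for one direction, and give a \Cref{lemma:hcdtomones}-style encoding for the other. The real difference is that you try to carry out both directions for an \emph{arbitrary} propagational $f$, whereas the paper fixes a concrete function in each direction: it cites \Cref{theo:nokernel} verbatim only for $f=f_1$, and for the converse it hand-picks $f'(x,y,z)=\lnot x\lor y\lor z$, checks that $f'$ is propagational, and encodes the four hedge configurations directly as $f'$-clauses via $f_1(x,y,z)=f'(x,y,z)\land f'(y,x,z)\land f'(z,x,y)$, $f'(y,x,0)$ for the $2{+}1$ triangle, $f'(1,x,y)$ for a two-hedge $P_3$, and $f'(1,0,x)$ for an internal $P_3$. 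This choice makes your eight-case analysis, the auxiliary variables, and the weight accounting unnecessary. (Your identities $f(1,x,y)=x\lor y$ and $f(1,0,x)=x$ are correct for every propagational $f$, but your per-clause auxiliaries for $g_1$/$f_1$ in the equivalence-type case can each be forced to $1$, so the simulated weight may grow with the number of clauses rather than with $\mathrm{OPT}$; a constant $\alpha$ is not automatic there and would need the sharing argument you only allude to.)

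The genuine gap is in your first direction, read as a claim for all eight propagational functions. If $f(0,1,1)=0$, the clause $f(x,y,z)$ must forbid the assignment $(0,1,1)$, i.e., impose $x\lor\lnot y\lor\lnot z$. This constraint is satisfied by $(0,0,1)$ and $(0,1,0)$ but not by their coordinatewise OR $(0,1,1)$, so its solution set is not closed under joins and the constraint is not weakly positive. Every constraint realizable by the $P_3$/$K_3$ hedge gadgets is weakly positive (\Cref{lemma:secondFproperties}, \Cref{lemma:hcdtomones}), and join-closure of solution sets is preserved under conjunction and under projecting out auxiliary hedges; hence no augmentation ``with additional $P_3$'s'' --- indeed no hedge gadget at all, with or without fresh hedges --- can forbid that pattern. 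Symmetrically, if $f(0,0,1)=1$ then the base $K_3$ gadget of \Cref{theo:nokernel} is already too strong (it forbids $(0,0,1)$), and adding $P_3$'s can only forbid more assignments, never re-allow one. So ``start from the \Cref{theo:nokernel} gadgets and add $P_3$'s'' cannot repair the mismatch in either direction. The clean route is the paper's: prove each direction for one explicitly chosen propagational function rather than attempting all of them.
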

\begin{proof}
By \Cref{theo:mddcompleteness}, it is enough to show
$A$-reductions from \HCD to \PropSat{$f$} and from \PropSat{$f$} to \HCD. 
The latter is given in the ppt-reduction of \Cref{theo:nokernel} with the propagational function $f=f_1$, which serves as an $A$-reduction as well. Note that the described reduction is a polynomial-time reduction, hence an $A$-reduction, from \PropSat{$f_1$} to \HCD, as the sizes of the corresponding solutions are exactly the same.  
For the other direction, we employ a proof similar to \Cref{lemma:hcdtomones}. 

In particular, we show that there is an $A$-reduction from \HCD to \PropSat{$f'$}, where $f'(x,y,z) = \lnot x \lor y \lor z$. It is not difficult to see that $f'$ is indeed a propagational function (see also \cite{CaiC15}). 
Let $H$ be a hedge graph given as an instance of \HCD, where every connected component of $H$ contains exactly three vertices by \Cref{lem:allPandK}. We show how to construct a formula $\phi$ from $H$ such that every clause of $\phi$ is of the form $f'(x,y,z)$. 
We correspond every hedge in $H$ to exactly one Boolean variable in $\phi$.  
For every $P_3$ or $K_3$ of $H$, we add a constant number of clauses in $\phi$ as explained below. 
Let $X, Y, Z$ be the hedges that appear in each component. 
\begin{itemize}
    \item For every $K_3$ whose three edges belong to three distinct hedges $X,Y,Z$, we add three clauses $f'(x,y,z), f'(y,x,z), f'(z,x,y)$ in $\phi$. It is not difficult to see that $f_1(x,y,z) = f'(x,y,z) \land f'(y,x,z) \land f'(z,x,y)$, where $f_1$ is the function defined in \Cref{sec:incompres}.  
    \item For every $K_3$ where two of its edges belong to hedge $X$ and the other edge belongs to $Y$, we add the clause $f'(y,x,0)$ in $\phi$. This can be translated as follows: whenever $y$ is set to $1$ it also enforces $x$ to $1$; otherwise, $y=0$ and $f'(y,x,0)$ is satisfied, regardless of the assignment for $x$.   
    \item For every $P_3$ whose two edges belong to two distinct hedges $X,Y$, we add the clause $f'(1,x,y)$ in $\phi$. This is justified by the fact that at least one of $x$ or $y$ must be set to $1$. 
    \item For every $P_3$ where both of its edges belong to the same hedge $X$, we add the clause $f'(1,0,x)$ in $\phi$. As explained in \Cref{obs:internalP3}, $x$ must be set to $1$. 
\end{itemize}
Note that if there is a $K_3$ whose edges belong to a single hedge then we ignore such component. This completes the construction of $\phi$ which can be done in polynomial time. 
Similarly to the proof of \Cref{lemma:hcdtomones}, we conclude that for any subset $U$ of hedges in $H$, $H\setminus U$ is a cluster hedge-subgraph if and only if the one-assignment on the corresponding Boolean variables of $U$ satisfies $\phi$. 
Therefore, the polynomial-time transformation serves as an $A$-reduction, as claimed.
\end{proof}


Another interesting point for future work is to consider the sizes of the hedges (i.e., the number of edges in each hedge) instead of the total number of hedges, as a natural candidate for the objective of the problem.  
Then notice that within the corresponding extremities of the suggested objective the behavior is well-understood: if the number of edges in each hedge is large enough then the total number of hedges is rather small, which means that the problem can be solved efficiently; otherwise, if each hedge contains exactly one edge then the problem coincides with \CD. This naturally suggests to exploit the complexity with respect to the variety on the sizes of the hedges. 
Furthermore, intriguing questions related to the complexity of \HCD arise when considering structural parameters of the hedge intersection graph such as treewidth or cliquewidth. Our algorithm on acyclic hedge intersection graphs can be seen as a first step towards such an approach.   

\bibliographystyle{plain}
\bibliography{CD_general}

\end{document}